\documentclass[aps, pra, 10pt, superscriptaddress, twocolumn, longbibliography, floatfix]{revtex4-2}

\usepackage[titletoc]{appendix} 

\usepackage[nointlimits]{amsmath}
\usepackage{graphicx,nicefrac}
\usepackage{subfigure}
\usepackage{color}
\usepackage{txfonts}
\usepackage{mathtools}
\usepackage{hyperref}
\usepackage{tabularx}
\hypersetup{colorlinks=true, citecolor=blue, linkcolor=blue}

\usepackage{amsfonts,amsmath,amssymb}
\usepackage{amsthm}
\usepackage{array,bm,color}
\usepackage{epsfig,graphicx,nomencl,revsymb4-1,upgreek,url}
\usepackage{hyperref}
\usepackage{algpseudocode}
\usepackage{xspace}
\usepackage{tabularx}
\usepackage[normalem]{ulem}
\usepackage{changes}
\usepackage{enumitem}
\usepackage{booktabs}
\usepackage{mathtools}
\usepackage{lipsum}
\usepackage{mwe}

\usepackage{comment}
\usepackage{multirow}

\def\bra#1{{\left\langle #1 \right|}}
\def\ket#1{{\left| #1 \right\rangle}}
\def\sbra#1{{\left\langle\langle #1 \right|}}
\def\sket#1{{\left| #1 \right\rangle\rangle}}

\newtheorem{lemma}{Lemma}

\theoremstyle{definition}


\newcommand{\algiers}{\texttt{ibmq\_algiers}}

\DeclareMathOperator*{\E}{\mathbb{E}}
\newcommand{\Tr}{\text{Tr}}

\newcommand{\Q}[1]{\texttt{Q#1}}

\newcommand{\cnot}{\mathsf{cnot}}
\newcommand{\mcm}{\mathsf{MCM}}

\newcommand{\eomega}{\varepsilon_\Omega}
\newcommand{\romega}{r_\Omega}

\newcommand{\Hunmeas}{\mathbb{H}_{\mathrm{unmeas}}}
\newcommand{\Hmeas}{\mathbb{H}_{\mathrm{meas}}}

\newcommand{\Xa}[1]{X^{\otimes #1}}
\newcommand{\Za}[1]{Z^{\otimes #1}}

\newcommand{\spre}[1]{s^{\mathrm{pre}}_{#1}}
\newcommand{\spost}[1]{s^{\mathrm{post}}_{#1}}

\newcommand{\sunmeas}[1]{s^{\mathrm{unmeas}}_{#1}}

\newcommand{\panti}[1]{p_{\mathrm{anti}}(#1)}
\newcommand{\ptrans}[1]{p^{\mathrm{trans}}_{#1}}
\newcommand{\Tab}[2]{\mathcal{T}_{#1,#2}}

\newcommand{\customsection}[1]{\vspace{0.2cm}\noindent\emph{#1}.}

\begin{document}
\title{Measuring error rates of mid-circuit measurements}
\author{Daniel Hothem}
\thanks{dhothem@sandia.gov}
\affiliation{Quantum Performance Laboratory, Sandia National Laboratories, Livermore, CA 94550}
\author{Jordan Hines}
\affiliation{Quantum Performance Laboratory, Sandia National Laboratories, Livermore, CA 94550}
\affiliation{Department of 
Physics, University of California, Berkeley, CA 94720}
\author{Charles Baldwin}
\author{Dan Gresh}
\affiliation{Quantinuum, 303 S. Technology Ct., Broomfield, CO 80021}
\author{Robin Blume-Kohout}
\affiliation{Quantum Performance Laboratory, Sandia National Laboratories, Albuquerque, NM 87185}
\author{Timothy Proctor}
\affiliation{Quantum Performance Laboratory, Sandia National Laboratories, Livermore, CA 94550}
\begin{abstract} 
High-fidelity mid-circuit measurements, which read out the state of specific qubits in a multiqubit processor without destroying them or disrupting their neighbors, are a critical component for useful quantum computing. They enable fault-tolerant quantum error correction, dynamic circuits, and other paths to solving classically intractable problems. But there are almost no methods to assess their performance comprehensively. We address this gap by introducing the first randomized benchmarking protocol that measures the rate at which mid-circuit measurements induce errors in many-qubit circuits.  Using this protocol, we detect and eliminate previously undetected measurement-induced crosstalk in a 20-qubit trapped-ion quantum computer. Then, we use the same protocol to measure the rate of measurement-induced crosstalk error on a 27-qubit IBM Q processor, and quantify how much of that error is eliminated by dynamical decoupling.
\end{abstract}
\maketitle

Quantum computers promise to solve classically intractable problems~\cite{Sho97, Rub23, Har09, Cao19, proctor2024benchmarking}. However, they are unlikely to do so by running quantum programs consisting solely of reversible unitary logic operations (``gates'') \cite{gidney2021factor, Rub23, lee2021even}. Gates are noisy, and their errors accumulate and obscure the answer.  All promising avenues to quantum computational advantage rely critically on \textit{mid-circuit measurements} (MCMs) to reduce entropy during program execution. The best-known (and most trusted) path is fault-tolerant quantum computing using quantum error correction (QEC)~\cite{Ter15, Pou05,Blu23, Goo23, Wan23}, but other approaches like dynamic quantum circuits~\cite{decross2023qubit} also rely on high-fidelity MCMs. But whereas errors in reversible gates can be measured and quantified using a wide range of well-tested protocols~\cite{Emerson2005-fd,Emerson2007-am,Knill2008-jf, Magesan2011-hc, Mag12, Pro19, Pro22, Hin23, Hin23-2, proctor2021measuring, Polloreno2023-xa, Fla11, Erhard2019-wk, Harper2020-te, Nielsen2021-nu, Blume-Kohout2017-no}, almost no methods for quantifying errors in MCMs are available~\cite{proctor2024benchmarking}. Those proposed so far~\cite{Rud22, Str22, Pereira2023-nw,  Wag20, Gom16, Gae21, Gov22, Hines2024-qj, Zhang2024-zp, shirizly2024randomizedbenchmarkingprotocoldynamic} have limited scope.

To address this gap, we introduce and demonstrate the first fully scalable protocol for quantifying and measuring the combined rate of \textit{all} errors in MCM operations. Our protocol integrates MCMs into the canonical framework of randomized benchmarking (RB)~\cite{Emerson2005-fd,Emerson2007-am,Knill2008-jf, Magesan2011-hc, Mag12, Pro19, Pro22, Hin23, Hin23-2}, and models noisy measurement operations as general \textit{quantum instruments} \cite{Davies1970-js, Rud22}. MCM operations are placed naturally within random circuits, making it straightforward to measure the rate at which they cause errors in those circuits.  We call the protocol \textit{quantum instrument randomized benchmarking} (QIRB) because it captures general MCM errors described by quantum instruments \cite{Davies1970-js} and yields an overall rate of MCM errors that can be directly compared to RB-derived gate error rates found throughout the experimental quantum computing literature.  


\begin{figure*}[t!]
\includegraphics[width=\textwidth]{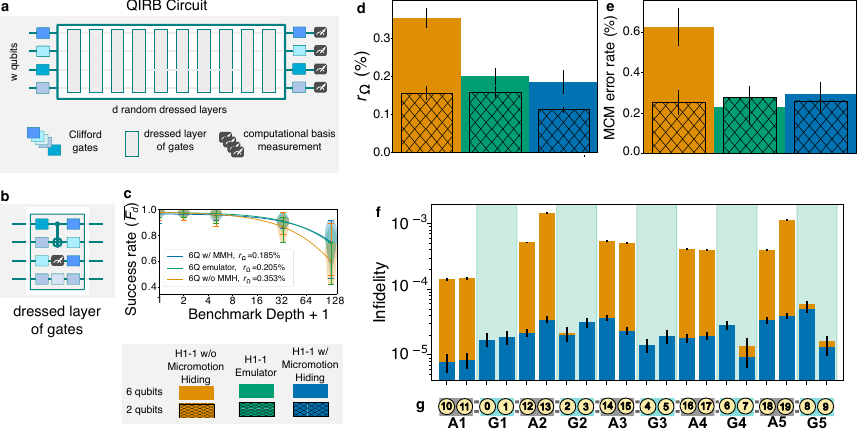}
\caption{\textbf{Quantifying measurement-induced crosstalk in a trapped-ion quantum computer using QIRB}. QIRB uses the MCM-containing random circuits shown in (a-b) to measure the average error rate ($r_{\Omega}$) of MCM-containing circuit layers. QIRB retains the simplicity and elegance of RB methods for measuring gate errors, as it estimates $r_{\Omega}$ as the decay rate of the average circuit success rates ($\overline{F}_d$) versus circuit depth, as shown in the examples of (c). We used QIRB to study MCMs in Quantinuum's H1-1 system, depicted in (g), which arranges 20 $\textsuperscript{171}\mathrm{Yb}^{+}$ ions into auxiliary (``A'') and gate (``G'') zones. Our initial experiments on H1-1 applied micromotion hiding \cite{Gae21} only to unmeasured ions in gate zones during MCMs, as ions in auxiliary zones are distant from measured ions. We observed higher error rates in six-qubit QIRB than predicted [compare left with center bars in (d) and (e)]. We added micromotion hiding for all unmeasured ions, reran QIRB, and observed that the six-qubit error rate was approximately halved [right bars in (d) and (e)] and was now consistent with the predictions of Quantinuum's emulator. Results from bright-state depumping experiments (f) independently confirmed the existence of long-range, MCM-induced crosstalk and its mitigation by extra micromotion hiding. These experiments quantify MCM-induced crosstalk errors by measuring the rate at which unmeasured ions, prepared in the $\ket{1}$ state, leak out of the computational space as gate-zone ions are measured.}\label{fig:qirb-combined}
\end{figure*}

We demonstrate QIRB's validity and practical power by using it to probe, understand, and reduce MCM errors in two very different quantum processors. First, we applied QIRB to study MCMs in a 20-qubit Quantinuum system (H1-1) that was previously used to demonstrate high-performance QEC~\cite{Ryan-Anderson2022-un}. The results (Fig.~\ref{fig:qirb-combined}) revealed unexpectedly high rates of MCM-induced crosstalk errors that other methods failed to detect, and allowed us to deduce their location and cause.  After we implemented a targeted intervention---micromotion hiding~\cite{Gae21} in the system's auxiliary zones---we ran QIRB again, and found the errors had been eliminated. Second, we used QIRB to demonstrate that dynamical decoupling \cite{viola1999dynamical} reduces MCM-induced errors in $15$-qubit layers on IBM Q's 27-transmon \algiers\ system (Fig.~\ref{fig:ibmq-exp-results}). These experiments required a protocol that was both scalable (i.e., could characterize MCMs \textit{in situ} in many-qubit circuits) and generic (i.e., predictably sensitive to all kinds of MCM errors).  These same properties enable measuring and predicting the impact of noisy MCMs on fault-tolerant circuits.

Most previous work on characterizing measurement errors has focused on circuit-terminating measurements that can be described by positive operator valued measures (POVMs) \cite{Bisio2009-qg, Lundeen2009-jo,Feito2009, Maciejewski2020-vc, Nielsen2021-nu, Blume-Kohout2017-no}. Investigations of MCMs have emphasized tomographic reconstruction of quantum instruments~\cite{Rud22, Str22, Pereira2023-nw}, which enables detailed modeling of one- and two-qubit dynamics but not \textit{in situ} characterization of errors in large processors. Another approach, \textit{certifying} that a measurement is near-perfect or possesses certain properties~\cite{Wag20, Gom16}, can rule out specific errors but does not quantify the experimental impact of errors. Recent papers have probed MCM-induced crosstalk~\cite{Gae21, Gov22}, explored Pauli noise learning for MCMs \cite{Hines2024-qj, Zhang2024-zp}, and assessed single-qubit feedforward operations~\cite{shirizly2024randomizedbenchmarkingprotocoldynamic}, but do not attempt to detect or measure all the MCM errors that can appear \emph{in situ} in many-qubit circuits.  We developed QIRB specifically to overcome these limitations, enabling scalable quantification of all MCM errors \textit{in situ}.

RB is the most commonly used \textit{gate} characterization protocol because it mixes all kinds of gate errors together to yield a single, simple error rate~\cite{Emerson2005-fd,Emerson2007-am,Knill2008-jf, Magesan2011-hc, Mag12, Pro19, Pro22, Hin23, Hin23-2}. But integrating MCMs into this framework is challenging. RB protocols are designed to measure the average fidelity of \emph{invertible} quantum logic operations (i.e., gates), and most RB protocols execute \emph{motion reversal circuits}. A circuit (e.g., a sequence of $d$ Clifford operations) is followed by a circuit that implements its inverse.  The combined circuit's fidelity with the identity operation can be measured easily, and this is the core of standard RB. However, MCMs are inherently irreversible, collapsing the quantum state of the measured qubit(s). Motion-reversal circuits are not possible with MCMs. A new approach is required.

\customsection{Quantum instrument randomized benchmarking} QIRB is based on a new class of random Clifford circuits that contain MCMs (see Methods for full details). These circuits (Fig.~\ref{fig:qirb-combined}a-b) integrate MCMs into the \emph{$\Omega$-distributed random circuits} used in many modern RB protocols~\cite{proctor2021measuring, Hin23, Hin23-2, Pro19, Pro22, Polloreno2023-xa}. An $n$-qubit, depth-$d$ QIRB circuit consists of (i) $d$ circuit layers $L_1$, $\dots$, $L_d$ each sampled from a user-specified distribution $\Omega$, and (ii) additional ``dressing'' layers of randomizing single-qubit gates before and after each $\Omega$-distributed layer. Each layer $L_i$ specifies the parallel application of MCMs and Clifford gates to the $n$ qubits. QIRB only uses layers that (ideally) implement a computational basis measurement on $k \leq n$ qubits. Both ``reset-free'' (non-demolition) MCMs intended to leave a measured qubit in the state $\ket{k}$ conditional on observing ``$k$'', and ``reset'' MCMs intended to re-initialize each measured qubit to $\ket{0}$, can be used.

A $n$-qubit QIRB circuit $C$ containing $m$ MCMs produces $m+n$ readout bits.  We can use ``Pauli-tracking'' techniques introduced in binary RB \cite{Hin23-2} and first developed for direct fidelity estimation \cite{Moussa2012-rq, Fla11} to divide the $2^{m+n}$ possible output strings into two subsets of equal size called ``success'' and ``fail''.  Whenever a ``fail'' string is observed, it means that at least one error occurred in the circuit's execution. We quantify the error in a QIRB circuit $C$ with the metric $F = (N_{\textrm{success}} - N_{\textrm{fail}})/N$ \cite{Hin23-2}, where $N$ is the number of repetitions of $C$, and $N_{\textrm{success}}$ and $N_{\textrm{fail}}$ are the number of times a bit string from $C$'s ``success'' and ``fail'' sets (respectively) was observed.  We define $\overline{F}_d$ as the average value of $F$ over all depth-$d$ QIRB circuits.  The QIRB error rate ($\romega$) is the decay rate of $\overline{F}_d$ with respect to $d$.

The QIRB protocol is as follows: For a range of depths $d \geq 0$, sample $k_d$ $\Omega$-distributed QIRB circuits of depth $d$. Run each circuit $N \geq 1$ times to compute its $F$.  Average $F$ for all of the sampled depth-$d$ circuits to obtain an estimate of $\overline{F}_d$, and fit it to $\overline{F}_d = A(1 - \romega)^{d}$, where $A$ and $\romega$ are fit parameters. $A$ captures initial state preparation and final measurement error, and $\romega$ is the estimated \textit{QIRB error rate}.

Our theoretical analysis of QIRB (see Methods) shows that it satisfies all the expected properties of an RB protocol, and faithfully measures the rate at which MCMs cause errors in random circuits. We show that $\overline{F}_d$ decays exponentially in circuit depth $d$ (the key property of an RB protocol) and that the QIRB error rate $r_{\Omega}$ is bounded above and below by constant-factor multiples of the average layer infidelity ($\epsilon_{\Omega}$):  $\frac{3}{4}\epsilon_{\Omega} \leq \romega \leq  \frac{3}{2} \epsilon_{\Omega}$.  In most cases, $\romega \approx \epsilon_{\Omega}$.

\customsection{Trapped-ion experiments} We used QIRB to detect and quantify unexpected MCM crosstalk in Quantinuum's H1-1, a 20-qubit trapped-ion quantum computer. H1-1's qubits are encoded in the atomic hyperfine states of $\textsuperscript{171}\mathrm{Yb}^{+}$, with $\ket{0}$ corresponding to a ``dark'' state and $\ket{1}$ corresponding to a ``bright'' state that fluoresces when probed by a laser resonant with the $^2\mathrm{S}_{1/2}\ket{\mathrm{F}=1}\longleftarrow\, ^2\mathrm{P}_{1/2}\ket{\mathrm{F}=0}$ transition. Throughout the execution of a circuit, ions are stored in interleaved gate and auxiliary zones (Fig.~\ref{fig:qirb-combined}g). When an ion is measured, it (and another possibly unmeasured ion) is moved into a gate zone where the detection laser beam is pulsed on.  Stray light from the laser beam or the fluorescing ion can interact with the unmeasured ions to cause MCM-induced crosstalk errors.

Prior to this experiment, it was believed that the large physical separation between each gate zone and auxiliary zone would prevent MCM-induced crosstalk errors on the distant ions in the auxiliary zones, and that only the neighboring ion in a gate zone would be affected by stray light. As a result, Quantinuum's detailed system emulator accounted for local MCM-induced crosstalk errors (within each gate zone), but no long-range crosstalk. However, our QIRB experiments showed this assumption to be false, revealing significant MCM-induced crosstalk errors in the auxiliary zones.

We ran 2-qubit and 6-qubit QIRB experiments on the first two and six qubits (respectively) of H1-1, and we simulated the exact same circuits on Quantinuum's emulator.  Each QIRB experiment comprised 75 circuits, 15 each at depths $d = 0$, $1$, $4$, $32$, and $128$. The $\Omega$-distributed layers were sampled to contain a single MCM with probability $p_{\mcm}=0.35$ and a single $\cnot$ with probability $p_{\cnot}=0.2$.  Other experiments with different $p_\cnot$ and $p_\mcm$ showed consistent results (see Supplemental Note~\ref{app:h1-1-experiments}) and were used to estimate the MCM error rate (see Methods). We ran (and simulated) $N=100$ shots of each circuit to estimate $F$.  All experimentally estimated quantities are reported with $1\sigma$ error bars computed via a bootstrap.

The 2-qubit QIRB experiments yielded results ($\romega = 0.16\pm 0.02 \,\%$) consistent with emulator simulations ($\romega = 0.18\pm 0.02 \,\%$).  However, the 6-qubit experimental and emulated results showed a significant and unexpected discrepancy (Fig.~\ref{fig:qirb-combined}d). The emulator predicted $\romega = 0.17\pm 0.02 \,\%$ for 6-qubit QIRB, consistent with the 2-qubit results, while the $6$-qubit experiment revealed almost twice as much error ($\romega = 0.35\pm 0.03 \,\%$).  Because the QIRB error rate contains contributions from both gate errors and MCM errors, we used the analysis methods of Ref.~\cite{Hot23} to isolate the MCM error rate ($\epsilon_{\textrm{MCM}}$).  The discrepancy between 6-qubit experimental and emulated $\epsilon_{\textrm{MCM}}$ was even larger (Fig.~\ref{fig:qirb-combined}e).  Consistency between experimental and emulated 2-qubit QIRB results, and between emulated 2-qubit and 6-qubit QIRB results, indicated that the emulator was accurately modeling interactions between neighboring qubits in gate zones. This suggested that the significant discrepancy in the 6-qubit results was due to long-range crosstalk errors, which are not modeled in the emulator. 

We hypothesized that these errors were caused by stray light emanating from the detection beam used to measure ions or fluorescence from the measured ions in a circuit layer. To verify our hypothesis, we ran a bright-state depumping experiment~\cite{Gae21} to independently quantify the strength of MCM-induced crosstalk errors within the auxiliary and gate zones (see Supplemental Note~\ref{app:h1-1-experiments:ssec:bright-state}). These experiments confirmed our hypothesis (Fig.~\ref{fig:qirb-combined}f).

To definitively confirm long-range MCM-induced crosstalk as the cause, we implemented \textit{micromotion hiding}~\cite{Gae21} on distant ions during mid-circuit measurements, and then ran 6-qubit QIRB again to measure its efficacy.  Micromotion hiding reduces an unmeasured ion's sensitivity to scattered light by displacing the ion from the RF null, causing micromotion oscillations that Doppler-broaden the resonant detection light. This partially protects the unmeasured ion from stray photons emitted by the detection beam or a fluorescing measured ion during the measurement process. In the initial experiments, micromotion hiding was only applied within each gate zone, not to ions in the auxiliary zones.

When micromotion hiding was applied to ions in the auxiliary zone, the experimental 6-qubit QIRB $\romega$ dropped from from $0.35\pm 0.03 \,\%$ to $0.18\pm 0.03 \,\%$ (Fig.~\ref{fig:qirb-combined}d).  This value is consistent with the $0.17\pm 0.02 \,\%$ obtained from the emulator. The estimated $\varepsilon_\mcm$ dropped from $0.63\pm 0.09 \,\%$ to $0.29\pm 0.06 \,\%$ (Fig.~\ref{fig:qirb-combined}e), which is consistent with the $0.23\pm0.09 \,\%$ MCM error rate obtained from the emulator.  These 6-qubit experimental error rates are also consistent with those observed in 2-qubit experiments and simulations.  Finally, we confirmed the physical mechanism of the reduced error rates by repeating the bright-state depumping experiment with the extra micromotion hiding in the auxiliary zones. We observed a substantial drop in the unmeasured ions' sensitivity to scattered light (see Fig.~\ref{fig:qirb-combined}f).

\begin{figure}[!t]
    \centering
    \includegraphics[width=\linewidth]{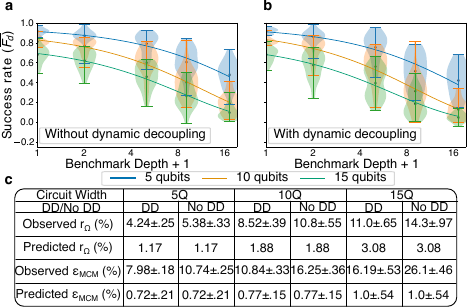}
\caption{\textbf{Quantifying the impact of dynamical decoupling on measurement-induced errors on \algiers.} \textbf{(a-b)} Success decay curves from performing 5-qubit (blue), 10-qubit (orange), and 15-qubit (green) QIRB on \algiers. Experiments were performed with (a) and without (b) dynamical decoupling (DD) on idling qubits during MCMs. \textbf{(c)} Even with DD, the observed QIRB error rate ($r_{\Omega}$) and the MCM error rate ($\epsilon_{\textrm{MCM}}$) are much larger than predicted by IBM's calibration data, suggesting significant MCM-induced crosstalk errors that are not mitigated by DD.}
    \label{fig:ibmq-exp-results}
\end{figure}



\customsection{Quantifying MCM crosstalk mitigation on \algiers} Next, we used QIRB to probe the nature of MCM errors, and measure the positive impact of dynamical decoupling, on an IBM Q processor (\algiers).  For this transmon-based processor, we had neither a high-fidelity emulator nor a detailed theoretical physics model for MCM errors, but its faster clock speed allowed more detailed experimental exploration.  We ran $5$-, $10$-, and $15$-qubit QIRB experiments on $\algiers$, with various values of $p_\cnot$ and $p_\mcm$. We focus here on experiments  with $p_\cnot = 0.3$ and $p_\mcm = 0.3$.  Other experiment designs are described in the Methods, and experimental results in Supplemental Note~\ref{app:ibmq-algiers-experiments}). 

To quantify the strength of crosstalk errors, we compared experimental QIRB results -- the estimated $\romega$ and error rates for one-qubit gates, two-qubit gates, and MCMs ($\varepsilon_{\mathrm{1Q}},$ $\varepsilon_{\mathrm{2Q}}$, and $\varepsilon_\mcm$, respectively) -- to predictions made using IBM's calibration data.  IBM's calibration data is gathered by running two-qubit RB, simultaneous one-qubit RB, and readout assignment error experiments.  These protocols are not directly sensitive to MCM crosstalk errors, so discrepancies between observed and predicted quantites can be ascribed to MCM crosstalk errors.

We observed significant gate- and MCM-induced crosstalk errors in $\algiers$ (Fig.~\ref{fig:ibmq-exp-results}a). The 5, 10, and 15-qubit QIRB experiments yield an experimental $r_{\Omega}$ that is $4-5\times$ larger than simulations using the calibration data (see Fig.~\ref{fig:ibmq-exp-results}c). Comparing observed and predicted $\varepsilon_{\textrm{MCM}}$ indicates that most of this discrepancy is due to measurement-induced crosstalk. The discrepancy between the observed and predicted $\varepsilon_{\textrm{MCM}}$ increases with $n$ (number of qubits), and for $n=15$, $\varepsilon_{\textrm{MCM}}$ is $26 \times$ larger than predicted.

MCM-induced errors in transmons can have multiple causes.  One mechanism is idling errors on unmeasured qubits.  MCMs have a longer duration than one- and two-qubit gates, so unmeasured qubits have more time to decohere and dephase. This can be mitigated by performing dynamical decoupling (DD)~\cite{viola1999dynamical} on the idling qubits, echoing away some of the idling errors. This approach is common and well-motivated, but its efficacy has not been systematically quantified \textit{in situ} (i.e., in the context of many-qubit circuits).

We ran QIRB circuits into which we introduced XX dynamical decoupling sequences~\cite{ibm-quantum-documentation}, and observed a significant reduction in the average layer error rate (Fig.~\ref{fig:ibmq-exp-results}c). The observed $\romega$ decreased by $\sim 22\%$ at each circuit width, and the estimated $\varepsilon_\mcm$ decreased by as much as $38\%$ (for $n=15$ qubits). This suggests that a large fraction of MCM-induced crosstalk errors are caused by idling errors on unmeasured qubits, and confirms that dynamical decoupling on idle qubits can reduce MCM-induced crosstalk significantly. However, the large residual discrepancy between predicted and observed error rates indicates that the majority of MCM-induced crosstalk errors are \textit{not} mitigatable in this way, and remain unexplained.

\customsection{Discussion} RB is ubiquitous precisely because it is generic.  RB protocols mix together all the errors in all of a system's gates to produce a single exponential decay and a single error rate that can be easily monitored, reported, and compared.  But until recently, motion reversal circuits---concatenating a random circuit with another circuit that inverts it---was the only way to achieve this simplicity.  QIRB builds on the recent innovations in binary RB (which sidestepped motion reversal) to extend the simplicity and genericity of RB to MCMs. As our deployment here illustrates, QIRB scales seamlessly to many qubits, and can easily measure error rates below $10^{-3}$.  It enables \textit{in situ} characterization of MCMs in processors designed and optimized to implement quantum error correction (QEC), paving the way for full, simultaneous characterization of errors in \textit{all} the key operations required for fault-tolerant QEC (one-qubit gates, two-qubit gates, and MCMs). We anticipate the techniques introduced here can be used to extend other RB protocols, especially those that measure leakage \cite{Wood2018-wf,Chen2016-sh, Chasseur2015-mw,Wallman2016-ne}, to work with MCMs.

\begin{small}
\section*{Methods}

\subsection{QIRB Circuits}\label{sec:protocol:ssec:mcmrb-circuits}
\begin{figure}
    \centering
    \includegraphics[width =\linewidth]{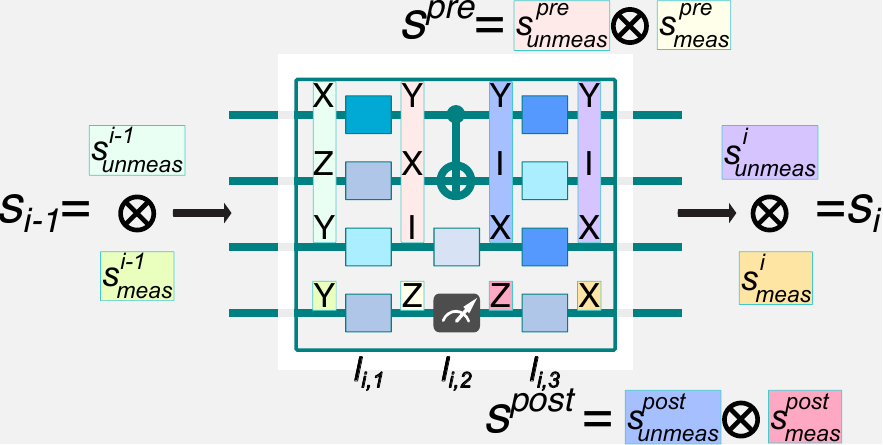}
    \caption{\textbf{A ``dressed'' QIRB circuit layer.} The bulk of a QIRB circuit consists of a sequence of ``dressed'' layers $\lbrace\tilde{L}_i\rbrace$, each composed of three sublayers: (i) $l_{i,1}$, (ii) $l_{i,2}$, and (iii) $l_{i,3}$. The middle sublayer, $l_{i,2}$ is an $\Omega$-distributed circuit layer, while the other two sublayers are specially crafted. Ideally, the state of the processor before the dressed layer $\tilde{L}_i$ is stabilized by the Pauli $s_{i-1}$. The first sublayer, $l_{i,1}$ is designed to rotate the state of the processor into an eigenstate of $\spre{\ }$, whose support, $\spre{\mathrm{meas}}$, on the measured qubits in $l_{i,2}$ is a tensor product of Pauli-$Z$ and $I$ operators. Likewise, $l_{i,3}$ is designed to (ideally) rotate the post-measurement state of the processor into an eigenstate of $s_i$, whose support on the measured qubits in $l_{i,2}$ is a uniform randomly chosen $m$-qubit Pauli operator, $s^{i}_{\mathrm{meas}}$.}
    \label{fig:qirb-layer}
\end{figure}
Conceptually, QIRB works by extending the Pauli tracking approach underlying both binary RB and direct fidelity estimation to circuits with mid-circuit measurements (see Supplemental Note~\ref{app:pauli-tracking}). An $n$-qubit QIRB circuit is designed so that, after each circuit layer, the $n$ qubits are (ideally) always in the $+1$ eigenspace of a particular $n$-qubit Pauli operator. To ensure this, each qubit that gets measured is immediately rotated to an eigenstate of a uniformly random single-qubit Pauli. Moreover, before each measurement, single-qubit gates are applied so that the measurement results can be compiled to determine if an error has occurred in the circuit. Thus a QIRB circuit $\tilde{C}$ mimics preparing the eigenstate of an $(n+m)$-qubit Pauli operator and performing an $(n+m)$-qubit Pauli measurement $s_{\tilde{C}}$. The success metric for $\tilde{C}$ is the expectation of its associated Pauli measurement, $\langle s_{\tilde{C}}\rangle$. 

The core component of a depth-$d$ QIRB circuit is a sequence of $d$ ``dressed'' $n$-qubit circuit layers, each composed of three sublayers, $\tilde{L} = l_3l_2l_1$. The middle layer $l_2$ is an $\Omega$-distributed layer and may contain MCMs.  We often refer to $l_2$ as $L$. Both $l_1$ and $l_3$ are layers of parallel single-qubit gates and/or idles. The single-qubit gates in $l_1$ are chosen to rotate the processor's state so that it is (ideally) an eigenstate of the $n$-qubit Pauli operator $\spre{} = \spre{\mathrm{meas}}\otimes \spre{\mathrm{unmeas}}$, where (i) $\spre{\mathrm{unmeas}}$ has support on the unmeasured qubits, (ii) $\spre{\mathrm{meas}}$ has support on the measured qubits, and (iii) $\spre{\mathrm{unmeas}}$ is the tensor product of Pauli-Z or identity operators (i.e., a $Z$-type Pauli) (see Fig.~\ref{fig:qirb-layer}). The gates in $l_3$ are selected so that the $k$ qubits measured in $l_2$ end up in the $+1$ tensor product eigenstate of a uniformly randomly sampled $k$-qubit Pauli operator (see Fig.~\ref{fig:qirb-layer}). When ``reset'' measurements are used, the $l_3$ layer is constructed to achieve this result assuming the measured qubits are in $\ket{0}^{\otimes k}$; when ``reset-free'' measurements are used, $l_3$ involves either $X_{\pi}$ gates conditioned on the measurement results \textit{or} a sign correction in post processing.

Formally, the construction of a depth-$d$ QIRB circuit $\tilde{C}$ is a sequential process that begins by randomly sampling a depth-$d$ $\Omega$-distributed circuit $C = L_d\cdots L_1$. Then, with $m$ being the number of MCMs in $C$, we sample a uniformly random $(n+m)$-qubit Pauli operator
\begin{equation}
    s = s(1) \otimes \cdots \otimes s(n+m)
\end{equation}
and a uniform random $+1$ tensor-product eigenstate of $s$
\begin{equation}
    \ket{\psi(s)} = \ket{\psi(s)_1}\otimes\cdots\otimes\ket{\psi(s)_{n+m}},
\end{equation}
where $\ket{\psi(s)_i}$ is stabilized by $s(i)$. The first layer $\tilde{L}_0$ in $\tilde{C}$ is a layer of single-qubit gates that locally prepares the first $n$-entries in $\ket{\psi(s)}$. Hence, in the ideal case, the state of the processor following $\tilde{L}_0$, denoted $\ket{\psi(s)}_{0}$ is stabilized by $s_0 \coloneqq s(1)\otimes\cdots\otimes s(n)$.

The $d$ dressed layers $\lbrace \tilde{L}_i\rbrace_{i=1}^{d}$ are constructed in an iterative fashion. In what follows, $U(l)$ denotes the unitary map induced by the circuit layer $l$ on any unmeasured qubits in $l$, and $k_i$ denotes the number of measured qubits in the $L_i$. The first dressed layer $\tilde{L}_1$ is constructed as follows:
\begin{enumerate}
    \item Set $l_{1,1}$ to be a layer of single-qubit gates such that $U(l_{1,1})s_0U(l_{1,1})^{-1} = \spre{0\mathrm{,meas}}\otimes\spre{0\mathrm{,unmeas}}$ is a $Z$-type Pauli on the $k_1$ measured qubits in $L_1$.
    \item Set $l_{1,2} = L_1$, the ``core'' $\Omega$-distributed circuit layer. Define $\spost{0} = U(l_{1,2})\spre{0,\mathrm{unmeas}}U(l_{1,2})^{-1} \otimes \spost{0,\mathrm{meas}}$, where $\spost{0,\mathrm{meas}} = \spre{0,\mathrm{meas}}$.
    \item Set $l_{1,3}$ to be a layer of single-qubit gates such that the measured qubits are re-prepared in a $+1$ tensor-product eigenstate of $s_1^{\mathrm{meas}} = s(n+1)\otimes\cdots\otimes s(n+k_1)$.
\end{enumerate}
In the ideal case, the state of the processor after $\tilde{L}_1$ should be stabilized by the Pauli, 
\begin{equation}
    s_1 \coloneqq U(\tilde{L}_1)\sunmeas{0}U(\tilde{L}_1)^{-1} \otimes s(n+1)\otimes\cdots s(n+k_1).
\end{equation}

The remaining dressed layers are constructed similarly using $s_{i-1}$ in place of $s_0$, and the final layer $\tilde{L}_{d+1}$ is constructed to transform $s_{d}$ into a $Z$-type Pauli operator, $s_{d+1}$. The above description leaves some freedom in the exact choice for each $l_{i,1}$ and $l_{i,3}$ and, in practice, we leverage this freedom to populate $l_{i,1}$ and $l_{i,3}$ with random single-qubit gates on any unmeasured qubits in order to implement randomized compilation for subsystem measurements~\cite{Bea23} (see Supplemental Note~\ref{app:subsystem-rc}). 

Overall, $\tilde{C}$ is constructed to mimic measuring the Pauli operator
\begin{equation}
    s_{\tilde{C}} \coloneqq \spre{0\mathrm{,meas}}\otimes\cdots\otimes s_{d+1},
\end{equation}
and we record the expectation of this Pauli operator $\langle s_{\tilde{C}}\rangle$. In practice, this is accomplished by concatenating the results from each MCM and each final computational basis state measurement into a bit string $b = (b_1,\ldots, b_{n+m})$, and then returning $1$ (``success'') if the corresponding computational basis state $\ket{b}$ is stabilized by $s_{\tilde{C}}$ and $-1$ (``failure'') otherwise.

\subsection{QIRB Theory}\label{sec:theory}
The QIRB protocol exhibits all of the expected properties of an RB protocol. Which is to say that, under reasonable assumptions about the noise, the average success rate of QIRB circuits ($\overline{F}_d$) decays exponentially in circuit depth, at a decay rate that is directly related to the rate at which MCMs and gates cause errors in random circuits. Moreover, this decay rate---or the QIRB error rate ($r_\Omega$)---is bounded above and below by constant-factor multiples of the average layer infidelity.

\subsubsection{Uniform stochastic instruments}
To analyze the QIRB protocol, we will model noisy circuit layers as uniform stochastic instruments~\cite{Bea23}. This is a specific ansatz generalizing the Pauli stochastic noise model to quantum instruments, and a processor's native layers may display errors that are not consistent with it -- but any layer $L$'s error can be transformed into uniform stochastic noise using randomized compiling (RC) for subsystem measurements~\cite{Bea23}.  RC for subsystem measurements can be implemented within QIRB (see Section~\ref{sec:protocol:ssec:mcmrb-circuits}); it is not required for the protocol, but is required to guarantee the validity of our theory describing it. As uniform stochastic instruments describe ``reset-free'' measurements, we introduce a refinement of the general uniform stochastic instrument model to describe ``reset'' measurements (see Supplemental Notes~\ref{app:theory:ssec:mcmrb+r} and~\ref{app:theory:ssec:mcmrb-r}).

A uniform stochastic instrument describes errors, in an $n$-qubit, $k$-measurement layer, that are the composition of a pre-measurement Pauli $X$ error $\Xa{a}$ on the measured qubits (specified by the $k$-bit string $a \in\mathbb{Z}_{2}^{k}$), with a post-measurement error that is the tensor product of a generic Pauli error $P$ on the unmeasured qubits and a bit-flip error $\Xa{b}$ on the measured qubits. Here, $\Xa{a}$ denotes the Pauli operator $\otimes_{i=1}^{k}X^{a_i}$.

We denote each possible error by its corresponding tuple $(a, P, b)$ and the probability of $(a, P, b)$ occurring in $L$ by $p_L(a,P,b)$. The process (entanglement) fidelity of a uniform stochastic instrument equals the probability of no error occurring~\cite{Mcl23}:
\begin{equation}\label{sec:preliminaries:eqn:layer-inifdelity}
    \mathcal{F}(L) = 1 - \sum_{(a,P,b)\neq (0,\mathbb{I}, 0)}{p_L(a,P,b)}.
\end{equation}
The $\Omega$-averaged layer infidelity is defined as
\begin{equation}\label{eq:avg_layer_infid}
    \eomega \coloneqq 1 - \mathbb{E}_{L\in\Omega}[\mathcal{F}(L)].
\end{equation}
In the absence of MCMs, $\eomega$ reduces to the same quantity that is (approximately) measured by most modern RB protocols including direct RB \cite{Pro19}, mirror RB \cite{Pro22}, binary RB \cite{Hin23-2}, and cross-entropy benchmarking \cite{boxio2018characterizing}.

\subsubsection{$\overline{F}_d$ decays exponentially}\label{ssec:theory:exp-decay}

To show that $\overline{F}_d$ decays exponentially, we need to calculate the probability $\mathrm{Pr}_{d}(S)$ of recording $+1$ (``success'') and the probability $\mathrm{Pr}_d(F)$ of recording $-1$ (``failure'') when executing a random depth-$d$ QIRB circuit. Ignoring end-of-circuit measurement error, we can calculate $\mathrm{Pr}_d(S)$ and $\mathrm{Pr}_d(F)$ recursively:
\begin{align} 
    \mathrm{Pr}_d(S) &= \mathrm{Pr}_{d-1}(S)(1-\ptrans{\ }) + \mathrm{Pr}_{d-1}(F)\ptrans{\ }\label{sec:theory:eqn:success-recursion} \\
    \mathrm{Pr}_d(F) &= \mathrm{Pr}_{d-1}(F)(1-\ptrans{\ }) + \mathrm{Pr}_{d-1}(S)\ptrans{\ }\label{sec:theory:eqn:failure-recursion},
\end{align}
where $\ptrans{\ }$ is the probability of errors in a random $\Omega$-distributed layer flipping a stabilizer state of a pre-layer Pauli into an anti-stabilizer state of the post-measurement Pauli. Equivalently, we can compute $\mathrm{Pr}_d(S)$ and $\mathrm{Pr}_d(F)$ by considering a two-state, discrete-time stochastic process with states $\lbrace S, F\rbrace$ and transition matrix
\begin{equation}
    M = \begin{pmatrix}
        1 - \ptrans{\ } & \ptrans{\ } \\
        \ptrans{\ } & 1 - \ptrans{\ }
        \end{pmatrix}.
\end{equation}
If $A$ is the probability of preparing in the $S$ (``success'') state and $B$ the probability of preparing in the $F$ (``failure'') state, then
\begin{align}
\overline{F}_d &\coloneqq \mathrm{Pr}_d(S) - \mathrm{Pr}_d(F) \label{eqn:f-bar-decay-1}\\
            &= (A-B)(1 - 2\ptrans{\ })^{d}. \label{eqn:f-bar-decay-2}
\end{align}
The equality of Eq.~\eqref{eqn:f-bar-decay-1} and Eq.~\eqref{eqn:f-bar-decay-2} follows from diagonalizing $M$ and noting that its eigenvalues are $1$ and $1-2\ptrans{\ }$.

So far we have ignored end-of-circuit measurement errors. These errors contribute a depth-independent co-factor as they modify the rate at which the S and F states are recorded in a depth-independent fashion. Hence,
\begin{equation}
    \overline{F}_d \propto (1-2\ptrans{\ })^d.
\end{equation}

This result shows that $\overline{F}_d$ decays exponentially and that $r_\Omega$ is well-defined, $r_\Omega = 2\ptrans{\ }$. Moreover, $\romega$ represents twice the rate at which errors flip stabilizer states to anti-stabilizer states in a circuit layer, a generalization of the interpretation of entanglement fidelity as twice the rate at which errors in a circuit layer anti-commute with a random Pauli. This interpretation allows use to compute $\romega$ for a given error model, and to relate $\romega$ to the average layer infidelity of that error model.

\subsubsection{Interpreting $r_\Omega$}\label{ssec:theory:romega}

Because $\romega$ is twice the rate at which errors flip stabilizer states to anti-stabilizers states in a circuit layer, it has the following form:
\begin{equation}\label{sec:theory:eqn:r-omega-sum}
    \romega = \E_{L\in\Omega}\Big[\sum_{a,P,b}\lambda_{L,a,P,b}\Big],
\end{equation}
where $\lambda_{L,a,P,b}$ is twice the rate at which the error $(a,P,b)$ occurring in a layer $L$ causes a state transition. We now argue that $\lambda_{L,a,P,b}$ depends upon: (i) the probability $p_L(a,P,b)$ at which $(a,P,b)$ occurs in $L$; (ii) the probability of $\Xa{a}$ anti-commuting with a random $\spre{\mathrm{meas}}$; (iii) the probability of $\Xa{b}$ anti-commuting with a random $\spost{\mathrm{meas}}$; and (iv) whether $P = \mathbb{I}$. Table~\ref{tab:mcmrb-contributions} summarizes an error $(a,P,b)$'s contribution to $\romega$.

\begin{table}[h]
\begin{tabular}{|c|c|}
\cline{1-2}
$P$  & Contribution to $r_\Omega$ \\\cline{1-2}
$\neq \mathbb{I}$ & $p_L(a, P, b)$ \\\cline{1-2}
$= \mathbb{I}$ & $2[\panti{a}+\panti{b}-2\panti{a}\panti{b}]p_L(a, P, b)$ \\\cline{1-2}
\end{tabular}
\caption{\textbf{Individual error contributions for QIRB.} Here we report the contribution to $\romega$ by the error $(a, P, b)$ occurring in layer $L$. Entries were calculated by determining the probability of each error causing a state transition. An error's contribution to $\romega$ depends upon three factors: (i) the probability $p_L(a,P,b)$ of the error occurring; (ii) if $P = \mathbb{I}$; and (iii) the probabilities $\panti{a}$ and $\panti{b}$ of $\Xa{a}$ and $\Xa{b}$ anti-commuting with a random $\spre{\mathrm{meas}}$ and $\spost{\mathrm{meas}}$, respectively. Because, on average, three-quarters of the entries of $\spre{\mathrm{meas}}$ and $\spost{\mathrm{meas}}$ are Pauli-$Z$ operators and one-quarter of their entries are $I$, computing $\panti{a}$ and $\panti{b}$ is a bit wonky. Supplemental Note~\ref{app:theory} contains more details on how to perform these calculations.}\label{tab:mcmrb-contributions}
\end{table}

An error $(a,P,b)$ causes a state transition when one (but not both) of the following occurs: (I) the pre-measurement error causes the reported state to be anti-stabilized by the pre-measurement Pauli instead of stabilized and (II) we prepare a post-measurement state that is anti-stabilized by the post-measurement Pauli instead of stabilized. Event (I) occurs when $\Xa{a}$ anti-commutes with $\spre{\mathrm{meas}}$. Event (II) occurs whenever $\Xa{b}$ anti-commutes with $\spost{\mathrm{meas}}$ (exclusive) or $P$ anti-commutes with $\spost{\mathrm{unmeas}}$. Averaging over all possible pre- and post-measurement Paulis, either event (I) or (II) occurs with probability 1/2 when $P = \mathbb{I}$. Otherwise, they occur with probability $\panti{a}+\panti{b}-2\panti{a}\panti{b}$. The values in Table~\ref{tab:mcmrb-contributions} come from multiplying these results by $2p_L(a,p,b)$.

\subsubsection{Bounding $\romega$}\label{ssec:bounding-romega}

We now prove the following bound on $\romega$,
\begin{equation}\label{eqn:mcmrb-romega-bound}
    \frac{3\eomega}{4} \leq \romega \leq \frac{3\eomega}{2}.
\end{equation}
This bound comes from analyzing the term 
\begin{equation*}
    \panti{a}+\panti{b}-2\panti{a}\panti{b}
\end{equation*}
in the $P=\mathbb{I}$ entry of Table~\ref{tab:mcmrb-contributions} for all possible values of $a$ and $b$. The possible values of this term form a discrete set on the two-dimensional surface in $\mathbb{R}^3$ defined by the equation $x+y-2xy$. The grid points are determined by the Hamming weight of $a$ and $b$. As shown in Supplemental Note~\ref{app:theory:ssec:bounding-e-omega}, this set has a minimum value of $3/8$ when either $a$ or $b$ have Hamming weight $1$, and a maximum value of $3/4$ whenever $a$ or $b$ have a Hamming weight of $2$, which completes the proof.

In practice, this bound is pessimistic. Closer analysis of $\panti{a}+\panti{b}-2\panti{a}\panti{b}$ shows that it rapidly approaches $1/2$ as the Hamming weights of $a$ and $b$ increase. Thus, most errors contribute roughly $p_L(a,P,b)/2$ to $\ptrans{\ }$ and $\romega \approx \eomega$.

\subsection{Simulations} \label{sec:simulations}
\begin{figure}
    \centering
    \includegraphics[width =\linewidth]{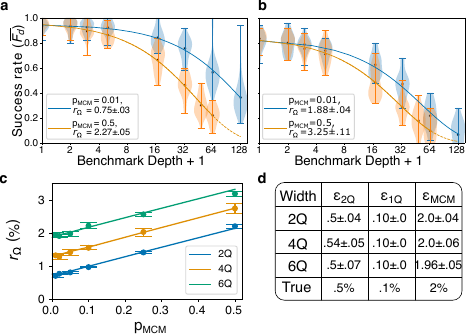}
    \caption{\textbf{Simulation of QIRB on few-qubit QPUs with a simple error model.} \textbf{(a-b)} RB curves from two two-qubit [(a)] and two [(b)] six-qubit QIRB-r simulations performed using $p_{\cnot} = .35$. The points show $\overline{F}_{d}$, and the violin plots show the distributions of $\langle s_{\tilde{C}}\rangle$ over circuits of that depth. QIRB has succeeded in all four cases as evidenced by the exponential decay of $\overline{F}_d$. \textbf{(c)} A plot comparing $r_\Omega$ from, respectively, eight two-, four-, and six-qubit simulations with varying $p_{\mcm}$ and fixed $p_{\cnot} = .35$ (the points) against our predictions for $r_\Omega$ (the lines) based on the data generating model. The error bars are $1\sigma$ error bars generated by performing each simulation eight times. These results validate QIRB as the $r_{\Omega}$ observed in each simulation matches that predicted by our theory. \textbf{(d)} Table containing the extracted one-qubit gate, $\cnot$, and MCM error rates (resp. $\varepsilon_{\mathrm{1Q}}\text{, }\varepsilon_{\mathrm{2Q}}\text{, and } \varepsilon_{\mcm}$). We see close agreement between the estimated error rates and those used to generate the model. See Supplemental Note~\ref{app:simulations:ssec:error-rate-extraction} for additional details.}
    \label{fig:sim-results}
\end{figure}

We used simulations to validate the predictions of our QIRB theory: $\overline{F}_d$ decays exponentially at a rate $r_\Omega$ predicted by our theory. To do so, we simulated QIRB on simulated noisy 2, 4, and 6-qubit quantum computers with native single-qubit Cliffords, $\cnot$ gates with all-to-all connectivity, and MCMs with post-measurement reset. We modelled gate errors as local depolarizing channels and noisy measurements as perfect computational basis measurements preceded by a bitflip error channel. We used the same layer sampling rules as in our trapped-ion and IBMQ experiments, with $p_\cnot$ values of $0.2$, $0.35$, $0.5$ and $p_\mcm$ values of $0.01$, $0.02$, $0.05$, $0.10$, $0.25$, $0.5$. See Supplemental Notes~\ref{app:simulations:ssec:simulation-details} for additional details on the noise model, experiment designs, and full results.

Fig.~\ref{fig:sim-results} summarizes the results from our simulations. The data shows close agreement with our theory: the decays are exponential, $r_{\Omega}$ is well predicted by our theory, and our extracted error rates agree with those used to generate the data, These results confirm QIRB's validity in simulation. 

\subsection{Trapped-ion experimental details}\label{sec:quantinuum-methods}

We conducted QIRB experiments on Quantinuum's H1-1, utilizing the first 2 and 6 qubits of the device, both with and without micromotion hiding (MMH) in the auxiliary zones. Using the layer rules described in the main text, we constructed QIRB experiment designs using $p_\cnot$ equal to $0.2$ and $0.35$, and $p_\mcm$ of 0.2 and 0.3. For each experiment, we generated a total of 15 circuits at the depths $d = 0$, $1$, $4$, $32$, and $128$ resulting in 75 circuits per experiment, with each circuit executed 100 times. We also simulated each experiment on Quantinuum's emulator. We employed the fitting procedure outlined in Supplemental Note~\ref{app:simulations:ssec:error-rate-extraction} to extract estimated one-qubit gate, two-qubit gate, and MCM error rates from both the experiments and simulations. Additionally, we generated 100 bootstrapped samples to assess the uncertainty of and 30 bootstrapped samples to estimate the uncertainty in the extracted error rates. Results from each experiment are detailed in Table~\ref{tab:h1-1-rb-numbers} of Supplemental Note~\ref{app:h1-1-experiments}.

\subsection{\algiers\ demonstrations details}\label{sec:ibmq-algiers-methods}
We conducted QIRB demonstrations on the first 5, 10, and 15 qubits in \algiers. The layer rules applied in our \algiers\ demonstrations were consistent with those used in our trapped-ion experiments, with the exception that we limited the placement of $\cnot$
gates to the edges of \algiers' connectivity graph. We used $p_\cnot$ values of $0.2$, $0.35$, and $0.5$ and $p_\mcm$ values of $0.2$, $0.3$, and $0.4$. Results from all of these experiments were used to estimate one-qubit, two-qubit, and MCM error rates using an error rates model (see Supplemental Note~\ref{app:simulations:ssec:error-rate-extraction}). The QIRB error rates for each demonstration can be found in Table~\ref{tab:ibmq-algiers-rb-numbers}, along with bootstrapped standard deviations derived from 30 bootstrapped samples.

\section*{Data and code availability}
The code used to implement our protocol is available within \texttt{pyGSTi}\cite{nielsen2022pygstio}. Data will be made available upon reasonable request.

\section*{Acknowledgements} 
This material was funded in part by the U.S. Department of Energy, Office of Science, Office of Advanced Scientific Computing Research, Quantum Testbed Pathfinder Program. This research was funded in part by the Office of the Director of National Intelligence (ODNI), Intelligence Advanced Research Projects Activity (IARPA). Sandia National Laboratories is a multi-program laboratory managed and operated by National Technology and Engineering Solutions of Sandia, LLC., a wholly owned subsidiary of Honeywell International, Inc., for the U.S. Department of Energy's National Nuclear Security Administration under contract DE-NA-0003525. This research used IBM Quantum resources of the Air Force Research Laboratory. All statements of fact, opinion or conclusions contained herein are those of the authors and should not be construed as representing the official views or policies of the U.S. Department of Energy, IARPA, the ODNI, the U.S. Government, IBM, or the IBM Quantum team.

\end{small}

\bibliography{bibliography}

\begin{thebibliography}{60}%
\makeatletter
\providecommand \@ifxundefined [1]{%
 \@ifx{#1\undefined}
}%
\providecommand \@ifnum [1]{%
 \ifnum #1\expandafter \@firstoftwo
 \else \expandafter \@secondoftwo
 \fi
}%
\providecommand \@ifx [1]{%
 \ifx #1\expandafter \@firstoftwo
 \else \expandafter \@secondoftwo
 \fi
}%
\providecommand \natexlab [1]{#1}%
\providecommand \enquote  [1]{``#1''}%
\providecommand \bibnamefont  [1]{#1}%
\providecommand \bibfnamefont [1]{#1}%
\providecommand \citenamefont [1]{#1}%
\providecommand \href@noop [0]{\@secondoftwo}%
\providecommand \href [0]{\begingroup \@sanitize@url \@href}%
\providecommand \@href[1]{\@@startlink{#1}\@@href}%
\providecommand \@@href[1]{\endgroup#1\@@endlink}%
\providecommand \@sanitize@url [0]{\catcode `\\12\catcode `\$12\catcode
  `\&12\catcode `\#12\catcode `\^12\catcode `\_12\catcode `\%12\relax}%
\providecommand \@@startlink[1]{}%
\providecommand \@@endlink[0]{}%
\providecommand \url  [0]{\begingroup\@sanitize@url \@url }%
\providecommand \@url [1]{\endgroup\@href {#1}{\urlprefix }}%
\providecommand \urlprefix  [0]{URL }%
\providecommand \Eprint [0]{\href }%
\providecommand \doibase [0]{https://doi.org/}%
\providecommand \selectlanguage [0]{\@gobble}%
\providecommand \bibinfo  [0]{\@secondoftwo}%
\providecommand \bibfield  [0]{\@secondoftwo}%
\providecommand \translation [1]{[#1]}%
\providecommand \BibitemOpen [0]{}%
\providecommand \bibitemStop [0]{}%
\providecommand \bibitemNoStop [0]{.\EOS\space}%
\providecommand \EOS [0]{\spacefactor3000\relax}%
\providecommand \BibitemShut  [1]{\csname bibitem#1\endcsname}%
\let\auto@bib@innerbib\@empty
\bibitem [{\citenamefont {Shor}(1997)}]{Sho97}%
  \BibitemOpen
  \bibfield  {author} {\bibinfo {author} {\bibfnamefont {P.~W.}\ \bibnamefont
  {Shor}},\ }\bibfield  {title} {\bibinfo {title} {Polynomial-time algorithms
  for prime factorization and discrete logarithms on a quantum computer},\
  }\href {https://doi.org/10.1137/S0097539795293172} {\bibfield  {journal}
  {\bibinfo  {journal} {SIAM Journal on Computing}\ }\textbf {\bibinfo {volume}
  {26}},\ \bibinfo {pages} {1484} (\bibinfo {year} {1997})},\ \Eprint
  {https://arxiv.org/abs/https://doi.org/10.1137/S0097539795293172}
  {https://doi.org/10.1137/S0097539795293172} \BibitemShut {NoStop}%
\bibitem [{\citenamefont {Rubin}\ \emph {et~al.}(2024)\citenamefont {Rubin},
  \citenamefont {Berry}, \citenamefont {Kononov}, \citenamefont {Malone},
  \citenamefont {Khattar}, \citenamefont {White}, \citenamefont {Lee},
  \citenamefont {Neven}, \citenamefont {Babbush},\ and\ \citenamefont
  {Baczewski}}]{Rub23}%
  \BibitemOpen
  \bibfield  {author} {\bibinfo {author} {\bibfnamefont {N.~C.}\ \bibnamefont
  {Rubin}}, \bibinfo {author} {\bibfnamefont {D.~W.}\ \bibnamefont {Berry}},
  \bibinfo {author} {\bibfnamefont {A.}~\bibnamefont {Kononov}}, \bibinfo
  {author} {\bibfnamefont {F.~D.}\ \bibnamefont {Malone}}, \bibinfo {author}
  {\bibfnamefont {T.}~\bibnamefont {Khattar}}, \bibinfo {author} {\bibfnamefont
  {A.}~\bibnamefont {White}}, \bibinfo {author} {\bibfnamefont
  {J.}~\bibnamefont {Lee}}, \bibinfo {author} {\bibfnamefont {H.}~\bibnamefont
  {Neven}}, \bibinfo {author} {\bibfnamefont {R.}~\bibnamefont {Babbush}},\
  and\ \bibinfo {author} {\bibfnamefont {A.~D.}\ \bibnamefont {Baczewski}},\
  }\bibfield  {title} {\bibinfo {title} {Quantum computation of stopping power
  for inertial fusion target design},\ }\href
  {https://doi.org/10.1073/pnas.2317772121} {\bibfield  {journal} {\bibinfo
  {journal} {Proc. Natl. Acad. Sci. U. S. A.}\ }\textbf {\bibinfo {volume}
  {121}},\ \bibinfo {pages} {e2317772121} (\bibinfo {year} {2024})}\BibitemShut
  {NoStop}%
\bibitem [{\citenamefont {Harrow}\ \emph {et~al.}(2009)\citenamefont {Harrow},
  \citenamefont {Hassidim},\ and\ \citenamefont {Lloyd}}]{Har09}%
  \BibitemOpen
  \bibfield  {author} {\bibinfo {author} {\bibfnamefont {A.~W.}\ \bibnamefont
  {Harrow}}, \bibinfo {author} {\bibfnamefont {A.}~\bibnamefont {Hassidim}},\
  and\ \bibinfo {author} {\bibfnamefont {S.}~\bibnamefont {Lloyd}},\ }\bibfield
   {title} {\bibinfo {title} {Quantum algorithm for linear systems of
  equations},\ }\href {https://doi.org/10.1103/PhysRevLett.103.150502}
  {\bibfield  {journal} {\bibinfo  {journal} {Phys. Rev. Lett.}\ }\textbf
  {\bibinfo {volume} {103}},\ \bibinfo {pages} {150502} (\bibinfo {year}
  {2009})}\BibitemShut {NoStop}%
\bibitem [{\citenamefont {Cao}\ \emph {et~al.}(2019)\citenamefont {Cao},
  \citenamefont {Romero}, \citenamefont {Olson}, \citenamefont {Degroote},
  \citenamefont {Johnson}, \citenamefont {Kieferová}, \citenamefont
  {Kivlichan}, \citenamefont {Menke}, \citenamefont {Peropadre}, \citenamefont
  {Sawaya}, \citenamefont {Sim}, \citenamefont {Veis},\ and\ \citenamefont
  {Aspuru-Guzik}}]{Cao19}%
  \BibitemOpen
  \bibfield  {author} {\bibinfo {author} {\bibfnamefont {Y.}~\bibnamefont
  {Cao}}, \bibinfo {author} {\bibfnamefont {J.}~\bibnamefont {Romero}},
  \bibinfo {author} {\bibfnamefont {J.~P.}\ \bibnamefont {Olson}}, \bibinfo
  {author} {\bibfnamefont {M.}~\bibnamefont {Degroote}}, \bibinfo {author}
  {\bibfnamefont {P.~D.}\ \bibnamefont {Johnson}}, \bibinfo {author}
  {\bibfnamefont {M.}~\bibnamefont {Kieferová}}, \bibinfo {author}
  {\bibfnamefont {I.~D.}\ \bibnamefont {Kivlichan}}, \bibinfo {author}
  {\bibfnamefont {T.}~\bibnamefont {Menke}}, \bibinfo {author} {\bibfnamefont
  {B.}~\bibnamefont {Peropadre}}, \bibinfo {author} {\bibfnamefont {N.~P.~D.}\
  \bibnamefont {Sawaya}}, \bibinfo {author} {\bibfnamefont {S.}~\bibnamefont
  {Sim}}, \bibinfo {author} {\bibfnamefont {L.}~\bibnamefont {Veis}},\ and\
  \bibinfo {author} {\bibfnamefont {A.}~\bibnamefont {Aspuru-Guzik}},\
  }\bibfield  {title} {\bibinfo {title} {Quantum chemistry in the age of
  quantum computing},\ }\href {https://doi.org/10.1021/acs.chemrev.8b00803}
  {\bibfield  {journal} {\bibinfo  {journal} {Chemical Reviews}\ }\textbf
  {\bibinfo {volume} {119}},\ \bibinfo {pages} {10856} (\bibinfo {year}
  {2019})},\ \bibinfo {note} {pMID: 31469277},\ \Eprint
  {https://arxiv.org/abs/https://doi.org/10.1021/acs.chemrev.8b00803}
  {https://doi.org/10.1021/acs.chemrev.8b00803} \BibitemShut {NoStop}%
\bibitem [{\citenamefont {Proctor}\ \emph {et~al.}(2024)\citenamefont
  {Proctor}, \citenamefont {Young}, \citenamefont {Baczewski},\ and\
  \citenamefont {Blume-Kohout}}]{proctor2024benchmarking}%
  \BibitemOpen
  \bibfield  {author} {\bibinfo {author} {\bibfnamefont {T.}~\bibnamefont
  {Proctor}}, \bibinfo {author} {\bibfnamefont {K.}~\bibnamefont {Young}},
  \bibinfo {author} {\bibfnamefont {A.~D.}\ \bibnamefont {Baczewski}},\ and\
  \bibinfo {author} {\bibfnamefont {R.}~\bibnamefont {Blume-Kohout}},\
  }\bibfield  {title} {\bibinfo {title} {Benchmarking quantum computers},\
  }\href@noop {} {\bibfield  {journal} {\bibinfo  {journal} {arXiv preprint
  arXiv:2407.08828}\ } (\bibinfo {year} {2024})}\BibitemShut {NoStop}%
\bibitem [{\citenamefont {Gidney}\ and\ \citenamefont
  {Eker{\aa}}(2021)}]{gidney2021factor}%
  \BibitemOpen
  \bibfield  {author} {\bibinfo {author} {\bibfnamefont {C.}~\bibnamefont
  {Gidney}}\ and\ \bibinfo {author} {\bibfnamefont {M.}~\bibnamefont
  {Eker{\aa}}},\ }\bibfield  {title} {\bibinfo {title} {How to factor 2048 bit
  {RSA} integers in 8 hours using 20 million noisy qubits},\ }\href
  {https://doi.org/10.22331/q-2021-04-15-433} {\bibfield  {journal} {\bibinfo
  {journal} {Quantum}\ }\textbf {\bibinfo {volume} {5}},\ \bibinfo {pages}
  {433} (\bibinfo {year} {2021})}\BibitemShut {NoStop}%
\bibitem [{\citenamefont {Lee}\ \emph {et~al.}(2021)\citenamefont {Lee},
  \citenamefont {Berry}, \citenamefont {Gidney}, \citenamefont {Huggins},
  \citenamefont {McClean}, \citenamefont {Wiebe},\ and\ \citenamefont
  {Babbush}}]{lee2021even}%
  \BibitemOpen
  \bibfield  {author} {\bibinfo {author} {\bibfnamefont {J.}~\bibnamefont
  {Lee}}, \bibinfo {author} {\bibfnamefont {D.~W.}\ \bibnamefont {Berry}},
  \bibinfo {author} {\bibfnamefont {C.}~\bibnamefont {Gidney}}, \bibinfo
  {author} {\bibfnamefont {W.~J.}\ \bibnamefont {Huggins}}, \bibinfo {author}
  {\bibfnamefont {J.~R.}\ \bibnamefont {McClean}}, \bibinfo {author}
  {\bibfnamefont {N.}~\bibnamefont {Wiebe}},\ and\ \bibinfo {author}
  {\bibfnamefont {R.}~\bibnamefont {Babbush}},\ }\bibfield  {title} {\bibinfo
  {title} {Even more efficient quantum computations of chemistry through tensor
  hypercontraction},\ }\href {https://doi.org/10.1103/PRXQuantum.2.030305}
  {\bibfield  {journal} {\bibinfo  {journal} {PRX Quantum}\ }\textbf {\bibinfo
  {volume} {2}},\ \bibinfo {pages} {030305} (\bibinfo {year}
  {2021})}\BibitemShut {NoStop}%
\bibitem [{\citenamefont {Terhal}(2015)}]{Ter15}%
  \BibitemOpen
  \bibfield  {author} {\bibinfo {author} {\bibfnamefont {B.~M.}\ \bibnamefont
  {Terhal}},\ }\bibfield  {title} {\bibinfo {title} {Quantum error correction
  for quantum memories},\ }\href {https://doi.org/10.1103/RevModPhys.87.307}
  {\bibfield  {journal} {\bibinfo  {journal} {Rev. Mod. Phys.}\ }\textbf
  {\bibinfo {volume} {87}},\ \bibinfo {pages} {307} (\bibinfo {year}
  {2015})}\BibitemShut {NoStop}%
\bibitem [{\citenamefont {Poulin}(2005)}]{Pou05}%
  \BibitemOpen
  \bibfield  {author} {\bibinfo {author} {\bibfnamefont {D.}~\bibnamefont
  {Poulin}},\ }\bibfield  {title} {\bibinfo {title} {Stabilizer formalism for
  operator quantum error correction},\ }\href
  {https://doi.org/10.1103/PhysRevLett.95.230504} {\bibfield  {journal}
  {\bibinfo  {journal} {Phys. Rev. Lett.}\ }\textbf {\bibinfo {volume} {95}},\
  \bibinfo {pages} {230504} (\bibinfo {year} {2005})}\BibitemShut {NoStop}%
\bibitem [{\citenamefont {Bluvstein}\ \emph {et~al.}(2024)\citenamefont
  {Bluvstein}, \citenamefont {Evered}, \citenamefont {Geim}, \citenamefont
  {Li}, \citenamefont {Zhou}, \citenamefont {Manovitz}, \citenamefont {Ebadi},
  \citenamefont {Cain}, \citenamefont {Kalinowski}, \citenamefont {Hangleiter},
  \citenamefont {Ataides}, \citenamefont {Maskara}, \citenamefont {Cong},
  \citenamefont {Gao}, \citenamefont {Rodriguez}, \citenamefont {Karolyshyn},
  \citenamefont {Semeghini}, \citenamefont {Gullans}, \citenamefont {Greiner},
  \citenamefont {Vuleti{\'c}},\ and\ \citenamefont {Lukin}}]{Blu23}%
  \BibitemOpen
  \bibfield  {author} {\bibinfo {author} {\bibfnamefont {D.}~\bibnamefont
  {Bluvstein}}, \bibinfo {author} {\bibfnamefont {S.~J.}\ \bibnamefont
  {Evered}}, \bibinfo {author} {\bibfnamefont {A.~A.}\ \bibnamefont {Geim}},
  \bibinfo {author} {\bibfnamefont {S.~H.}\ \bibnamefont {Li}}, \bibinfo
  {author} {\bibfnamefont {H.}~\bibnamefont {Zhou}}, \bibinfo {author}
  {\bibfnamefont {T.}~\bibnamefont {Manovitz}}, \bibinfo {author}
  {\bibfnamefont {S.}~\bibnamefont {Ebadi}}, \bibinfo {author} {\bibfnamefont
  {M.}~\bibnamefont {Cain}}, \bibinfo {author} {\bibfnamefont {M.}~\bibnamefont
  {Kalinowski}}, \bibinfo {author} {\bibfnamefont {D.}~\bibnamefont
  {Hangleiter}}, \bibinfo {author} {\bibfnamefont {J.~P.~B.}\ \bibnamefont
  {Ataides}}, \bibinfo {author} {\bibfnamefont {N.}~\bibnamefont {Maskara}},
  \bibinfo {author} {\bibfnamefont {I.}~\bibnamefont {Cong}}, \bibinfo {author}
  {\bibfnamefont {X.}~\bibnamefont {Gao}}, \bibinfo {author} {\bibfnamefont
  {P.~S.}\ \bibnamefont {Rodriguez}}, \bibinfo {author} {\bibfnamefont
  {T.}~\bibnamefont {Karolyshyn}}, \bibinfo {author} {\bibfnamefont
  {G.}~\bibnamefont {Semeghini}}, \bibinfo {author} {\bibfnamefont {M.~J.}\
  \bibnamefont {Gullans}}, \bibinfo {author} {\bibfnamefont {M.}~\bibnamefont
  {Greiner}}, \bibinfo {author} {\bibfnamefont {V.}~\bibnamefont
  {Vuleti{\'c}}},\ and\ \bibinfo {author} {\bibfnamefont {M.~D.}\ \bibnamefont
  {Lukin}},\ }\bibfield  {title} {\bibinfo {title} {Logical quantum processor
  based on reconfigurable atom arrays},\ }\href
  {https://doi.org/10.1038/s41586-023-06927-3} {\bibfield  {journal} {\bibinfo
  {journal} {Nature}\ }\textbf {\bibinfo {volume} {626}},\ \bibinfo {pages}
  {58} (\bibinfo {year} {2024})}\BibitemShut {NoStop}%
\bibitem [{\citenamefont {{Google Quantum AI}}(2023)}]{Goo23}%
  \BibitemOpen
  \bibfield  {author} {\bibinfo {author} {\bibnamefont {{Google Quantum AI}}},\
  }\bibfield  {title} {\bibinfo {title} {Suppressing quantum errors by scaling
  a surface code},\ }\href
  {https://doi.org/https://doi.org/10.1038/s41586-022-05434-1} {\bibfield
  {journal} {\bibinfo  {journal} {Nature}\ }\textbf {\bibinfo {volume} {614}},\
  \bibinfo {pages} {676} (\bibinfo {year} {2023})}\BibitemShut {NoStop}%
\bibitem [{\citenamefont {Wang}\ \emph {et~al.}(2023)\citenamefont {Wang},
  \citenamefont {Simsek}, \citenamefont {Gatterman}, \citenamefont {Gerber},
  \citenamefont {Gilmore}, \citenamefont {Gresh}, \citenamefont {Hewitt},
  \citenamefont {Horst}, \citenamefont {Matheny}, \citenamefont {Mengle},
  \citenamefont {Neyenhuis},\ and\ \citenamefont {Criger}}]{Wan23}%
  \BibitemOpen
  \bibfield  {author} {\bibinfo {author} {\bibfnamefont {Y.}~\bibnamefont
  {Wang}}, \bibinfo {author} {\bibfnamefont {S.}~\bibnamefont {Simsek}},
  \bibinfo {author} {\bibfnamefont {T.~M.}\ \bibnamefont {Gatterman}}, \bibinfo
  {author} {\bibfnamefont {J.~A.}\ \bibnamefont {Gerber}}, \bibinfo {author}
  {\bibfnamefont {K.}~\bibnamefont {Gilmore}}, \bibinfo {author} {\bibfnamefont
  {D.}~\bibnamefont {Gresh}}, \bibinfo {author} {\bibfnamefont
  {N.}~\bibnamefont {Hewitt}}, \bibinfo {author} {\bibfnamefont {C.~V.}\
  \bibnamefont {Horst}}, \bibinfo {author} {\bibfnamefont {M.}~\bibnamefont
  {Matheny}}, \bibinfo {author} {\bibfnamefont {T.}~\bibnamefont {Mengle}},
  \bibinfo {author} {\bibfnamefont {B.}~\bibnamefont {Neyenhuis}},\ and\
  \bibinfo {author} {\bibfnamefont {B.}~\bibnamefont {Criger}},\ }\href@noop {}
  {\bibinfo {title} {Fault-tolerant one-bit addition with the smallest
  interesting colour code}} (\bibinfo {year} {2023}),\ \Eprint
  {https://arxiv.org/abs/2309.09893} {arXiv:2309.09893} \BibitemShut {NoStop}%
\bibitem [{\citenamefont {DeCross}\ \emph {et~al.}(2023)\citenamefont
  {DeCross}, \citenamefont {Chertkov}, \citenamefont {Kohagen},\ and\
  \citenamefont {Foss-Feig}}]{decross2023qubit}%
  \BibitemOpen
  \bibfield  {author} {\bibinfo {author} {\bibfnamefont {M.}~\bibnamefont
  {DeCross}}, \bibinfo {author} {\bibfnamefont {E.}~\bibnamefont {Chertkov}},
  \bibinfo {author} {\bibfnamefont {M.}~\bibnamefont {Kohagen}},\ and\ \bibinfo
  {author} {\bibfnamefont {M.}~\bibnamefont {Foss-Feig}},\ }\bibfield  {title}
  {\bibinfo {title} {Qubit-reuse compilation with mid-circuit measurement and
  reset},\ }\href {https://doi.org/10.1103/PhysRevX.13.041057} {\bibfield
  {journal} {\bibinfo  {journal} {Phys. Rev. X}\ }\textbf {\bibinfo {volume}
  {13}},\ \bibinfo {pages} {041057} (\bibinfo {year} {2023})}\BibitemShut
  {NoStop}%
\bibitem [{\citenamefont {Emerson}\ \emph {et~al.}(2005)\citenamefont
  {Emerson}, \citenamefont {Alicki},\ and\ \citenamefont
  {{\.Z}yczkowski}}]{Emerson2005-fd}%
  \BibitemOpen
  \bibfield  {author} {\bibinfo {author} {\bibfnamefont {J.}~\bibnamefont
  {Emerson}}, \bibinfo {author} {\bibfnamefont {R.}~\bibnamefont {Alicki}},\
  and\ \bibinfo {author} {\bibfnamefont {K.}~\bibnamefont {{\.Z}yczkowski}},\
  }\bibfield  {title} {\bibinfo {title} {Scalable noise estimation with random
  unitary operators},\ }\href {https://doi.org/10.1088/1464-4266/7/10/021}
  {\bibfield  {journal} {\bibinfo  {journal} {J. Opt. B Quantum Semiclassical
  Opt.}\ }\textbf {\bibinfo {volume} {7}},\ \bibinfo {pages} {S347} (\bibinfo
  {year} {2005})}\BibitemShut {NoStop}%
\bibitem [{\citenamefont {Emerson}\ \emph {et~al.}(2007)\citenamefont
  {Emerson}, \citenamefont {Silva}, \citenamefont {Moussa}, \citenamefont
  {Ryan}, \citenamefont {Laforest}, \citenamefont {Baugh}, \citenamefont
  {Cory},\ and\ \citenamefont {Laflamme}}]{Emerson2007-am}%
  \BibitemOpen
  \bibfield  {author} {\bibinfo {author} {\bibfnamefont {J.}~\bibnamefont
  {Emerson}}, \bibinfo {author} {\bibfnamefont {M.}~\bibnamefont {Silva}},
  \bibinfo {author} {\bibfnamefont {O.}~\bibnamefont {Moussa}}, \bibinfo
  {author} {\bibfnamefont {C.}~\bibnamefont {Ryan}}, \bibinfo {author}
  {\bibfnamefont {M.}~\bibnamefont {Laforest}}, \bibinfo {author}
  {\bibfnamefont {J.}~\bibnamefont {Baugh}}, \bibinfo {author} {\bibfnamefont
  {D.~G.}\ \bibnamefont {Cory}},\ and\ \bibinfo {author} {\bibfnamefont
  {R.}~\bibnamefont {Laflamme}},\ }\bibfield  {title} {\bibinfo {title}
  {Symmetrized characterization of noisy quantum processes},\ }\href
  {https://doi.org/10.1126/science.1145699} {\bibfield  {journal} {\bibinfo
  {journal} {Science}\ }\textbf {\bibinfo {volume} {317}},\ \bibinfo {pages}
  {1893} (\bibinfo {year} {2007})}\BibitemShut {NoStop}%
\bibitem [{\citenamefont {Knill}\ \emph {et~al.}(2008)\citenamefont {Knill},
  \citenamefont {Leibfried}, \citenamefont {Reichle}, \citenamefont {Britton},
  \citenamefont {Blakestad}, \citenamefont {Jost}, \citenamefont {Langer},
  \citenamefont {Ozeri}, \citenamefont {Seidelin},\ and\ \citenamefont
  {Wineland}}]{Knill2008-jf}%
  \BibitemOpen
  \bibfield  {author} {\bibinfo {author} {\bibfnamefont {E.}~\bibnamefont
  {Knill}}, \bibinfo {author} {\bibfnamefont {D.}~\bibnamefont {Leibfried}},
  \bibinfo {author} {\bibfnamefont {R.}~\bibnamefont {Reichle}}, \bibinfo
  {author} {\bibfnamefont {J.}~\bibnamefont {Britton}}, \bibinfo {author}
  {\bibfnamefont {R.~B.}\ \bibnamefont {Blakestad}}, \bibinfo {author}
  {\bibfnamefont {J.~D.}\ \bibnamefont {Jost}}, \bibinfo {author}
  {\bibfnamefont {C.}~\bibnamefont {Langer}}, \bibinfo {author} {\bibfnamefont
  {R.}~\bibnamefont {Ozeri}}, \bibinfo {author} {\bibfnamefont
  {S.}~\bibnamefont {Seidelin}},\ and\ \bibinfo {author} {\bibfnamefont
  {D.~J.}\ \bibnamefont {Wineland}},\ }\bibfield  {title} {\bibinfo {title}
  {Randomized benchmarking of quantum gates},\ }\href
  {https://doi.org/10.1103/PhysRevA.77.012307} {\bibfield  {journal} {\bibinfo
  {journal} {Phys. Rev. A}\ }\textbf {\bibinfo {volume} {77}},\ \bibinfo
  {pages} {012307} (\bibinfo {year} {2008})}\BibitemShut {NoStop}%
\bibitem [{\citenamefont {Magesan}\ \emph {et~al.}(2011)\citenamefont
  {Magesan}, \citenamefont {Gambetta},\ and\ \citenamefont
  {Emerson}}]{Magesan2011-hc}%
  \BibitemOpen
  \bibfield  {author} {\bibinfo {author} {\bibfnamefont {E.}~\bibnamefont
  {Magesan}}, \bibinfo {author} {\bibfnamefont {J.~M.}\ \bibnamefont
  {Gambetta}},\ and\ \bibinfo {author} {\bibfnamefont {J.}~\bibnamefont
  {Emerson}},\ }\bibfield  {title} {\bibinfo {title} {Scalable and robust
  randomized benchmarking of quantum processes},\ }\href
  {https://doi.org/10.1103/PhysRevLett.106.180504} {\bibfield  {journal}
  {\bibinfo  {journal} {Phys. Rev. Lett.}\ }\textbf {\bibinfo {volume} {106}},\
  \bibinfo {pages} {180504} (\bibinfo {year} {2011})}\BibitemShut {NoStop}%
\bibitem [{\citenamefont {Magesan}\ \emph {et~al.}(2012)\citenamefont
  {Magesan}, \citenamefont {Gambetta},\ and\ \citenamefont {Emerson}}]{Mag12}%
  \BibitemOpen
  \bibfield  {author} {\bibinfo {author} {\bibfnamefont {E.}~\bibnamefont
  {Magesan}}, \bibinfo {author} {\bibfnamefont {J.~M.}\ \bibnamefont
  {Gambetta}},\ and\ \bibinfo {author} {\bibfnamefont {J.}~\bibnamefont
  {Emerson}},\ }\bibfield  {title} {\bibinfo {title} {Characterizing quantum
  gates via randomized benchmarking},\ }\href
  {https://doi.org/10.1103/PhysRevA.85.042311} {\bibfield  {journal} {\bibinfo
  {journal} {Phys. Rev. A}\ }\textbf {\bibinfo {volume} {85}},\ \bibinfo
  {pages} {042311} (\bibinfo {year} {2012})}\BibitemShut {NoStop}%
\bibitem [{\citenamefont {Proctor}\ \emph {et~al.}(2019)\citenamefont
  {Proctor}, \citenamefont {Carignan-Dugas}, \citenamefont {Rudinger},
  \citenamefont {Nielsen}, \citenamefont {Blume-Kohout},\ and\ \citenamefont
  {Young}}]{Pro19}%
  \BibitemOpen
  \bibfield  {author} {\bibinfo {author} {\bibfnamefont {T.~J.}\ \bibnamefont
  {Proctor}}, \bibinfo {author} {\bibfnamefont {A.}~\bibnamefont
  {Carignan-Dugas}}, \bibinfo {author} {\bibfnamefont {K.}~\bibnamefont
  {Rudinger}}, \bibinfo {author} {\bibfnamefont {E.}~\bibnamefont {Nielsen}},
  \bibinfo {author} {\bibfnamefont {R.}~\bibnamefont {Blume-Kohout}},\ and\
  \bibinfo {author} {\bibfnamefont {K.}~\bibnamefont {Young}},\ }\bibfield
  {title} {\bibinfo {title} {Direct randomized benchmarking for multiqubit
  devices},\ }\href {https://doi.org/10.1103/PhysRevLett.123.030503} {\bibfield
   {journal} {\bibinfo  {journal} {Phys. Rev. Lett.}\ }\textbf {\bibinfo
  {volume} {123}},\ \bibinfo {pages} {030503} (\bibinfo {year}
  {2019})}\BibitemShut {NoStop}%
\bibitem [{\citenamefont {Proctor}\ \emph {et~al.}(2022)\citenamefont
  {Proctor}, \citenamefont {Seritan}, \citenamefont {Rudinger}, \citenamefont
  {Nielsen}, \citenamefont {Blume-Kohout},\ and\ \citenamefont
  {Young}}]{Pro22}%
  \BibitemOpen
  \bibfield  {author} {\bibinfo {author} {\bibfnamefont {T.}~\bibnamefont
  {Proctor}}, \bibinfo {author} {\bibfnamefont {S.}~\bibnamefont {Seritan}},
  \bibinfo {author} {\bibfnamefont {K.}~\bibnamefont {Rudinger}}, \bibinfo
  {author} {\bibfnamefont {E.}~\bibnamefont {Nielsen}}, \bibinfo {author}
  {\bibfnamefont {R.}~\bibnamefont {Blume-Kohout}},\ and\ \bibinfo {author}
  {\bibfnamefont {K.}~\bibnamefont {Young}},\ }\bibfield  {title} {\bibinfo
  {title} {Scalable randomized benchmarking of quantum computers using mirror
  circuits},\ }\href {https://doi.org/10.1103/PhysRevLett.129.150502}
  {\bibfield  {journal} {\bibinfo  {journal} {Phys. Rev. Lett.}\ }\textbf
  {\bibinfo {volume} {129}},\ \bibinfo {pages} {150502} (\bibinfo {year}
  {2022})}\BibitemShut {NoStop}%
\bibitem [{\citenamefont {Hines}\ \emph {et~al.}(2023)\citenamefont {Hines},
  \citenamefont {Lu}, \citenamefont {Naik}, \citenamefont {Hashim},
  \citenamefont {Ville}, \citenamefont {Mitchell}, \citenamefont {Kriekebaum},
  \citenamefont {Santiago}, \citenamefont {Seritan}, \citenamefont {Nielsen},
  \citenamefont {Blume-Kohout}, \citenamefont {Young}, \citenamefont {Siddiqi},
  \citenamefont {Whaley},\ and\ \citenamefont {Proctor}}]{Hin23}%
  \BibitemOpen
  \bibfield  {author} {\bibinfo {author} {\bibfnamefont {J.}~\bibnamefont
  {Hines}}, \bibinfo {author} {\bibfnamefont {M.}~\bibnamefont {Lu}}, \bibinfo
  {author} {\bibfnamefont {R.~K.}\ \bibnamefont {Naik}}, \bibinfo {author}
  {\bibfnamefont {A.}~\bibnamefont {Hashim}}, \bibinfo {author} {\bibfnamefont
  {J.-L.}\ \bibnamefont {Ville}}, \bibinfo {author} {\bibfnamefont
  {B.}~\bibnamefont {Mitchell}}, \bibinfo {author} {\bibfnamefont {J.~M.}\
  \bibnamefont {Kriekebaum}}, \bibinfo {author} {\bibfnamefont {D.~I.}\
  \bibnamefont {Santiago}}, \bibinfo {author} {\bibfnamefont {S.}~\bibnamefont
  {Seritan}}, \bibinfo {author} {\bibfnamefont {E.}~\bibnamefont {Nielsen}},
  \bibinfo {author} {\bibfnamefont {R.}~\bibnamefont {Blume-Kohout}}, \bibinfo
  {author} {\bibfnamefont {K.}~\bibnamefont {Young}}, \bibinfo {author}
  {\bibfnamefont {I.}~\bibnamefont {Siddiqi}}, \bibinfo {author} {\bibfnamefont
  {B.}~\bibnamefont {Whaley}},\ and\ \bibinfo {author} {\bibfnamefont
  {T.}~\bibnamefont {Proctor}},\ }\bibfield  {title} {\bibinfo {title}
  {Demonstrating scalable randomized benchmarking of universal gate sets},\
  }\href {https://doi.org/10.1103/PhysRevX.13.041030} {\bibfield  {journal}
  {\bibinfo  {journal} {Phys. Rev. X}\ }\textbf {\bibinfo {volume} {13}},\
  \bibinfo {pages} {041030} (\bibinfo {year} {2023})}\BibitemShut {NoStop}%
\bibitem [{\citenamefont {Hines}\ \emph {et~al.}(2024)\citenamefont {Hines},
  \citenamefont {Hothem}, \citenamefont {Blume-Kohout}, \citenamefont
  {Whaley},\ and\ \citenamefont {Proctor}}]{Hin23-2}%
  \BibitemOpen
  \bibfield  {author} {\bibinfo {author} {\bibfnamefont {J.}~\bibnamefont
  {Hines}}, \bibinfo {author} {\bibfnamefont {D.}~\bibnamefont {Hothem}},
  \bibinfo {author} {\bibfnamefont {R.}~\bibnamefont {Blume-Kohout}}, \bibinfo
  {author} {\bibfnamefont {B.}~\bibnamefont {Whaley}},\ and\ \bibinfo {author}
  {\bibfnamefont {T.}~\bibnamefont {Proctor}},\ }\bibfield  {title} {\bibinfo
  {title} {Fully scalable randomized benchmarking without motion reversal},\
  }\href {https://doi.org/10.1103/prxquantum.5.030334} {\bibfield  {journal}
  {\bibinfo  {journal} {PRX Quantum}\ }\textbf {\bibinfo {volume} {5}},\
  \bibinfo {pages} {030334} (\bibinfo {year} {2024})}\BibitemShut {NoStop}%
\bibitem [{\citenamefont {Proctor}\ \emph {et~al.}(2021)\citenamefont
  {Proctor}, \citenamefont {Rudinger}, \citenamefont {Young}, \citenamefont
  {Nielsen},\ and\ \citenamefont {Blume-Kohout}}]{proctor2021measuring}%
  \BibitemOpen
  \bibfield  {author} {\bibinfo {author} {\bibfnamefont {T.}~\bibnamefont
  {Proctor}}, \bibinfo {author} {\bibfnamefont {K.}~\bibnamefont {Rudinger}},
  \bibinfo {author} {\bibfnamefont {K.}~\bibnamefont {Young}}, \bibinfo
  {author} {\bibfnamefont {E.}~\bibnamefont {Nielsen}},\ and\ \bibinfo {author}
  {\bibfnamefont {R.}~\bibnamefont {Blume-Kohout}},\ }\bibfield  {title}
  {\bibinfo {title} {Measuring the capabilities of quantum computers},\ }\href
  {https://doi.org/10.1038/s41567-021-01409-7} {\bibfield  {journal} {\bibinfo
  {journal} {Nature Phys}\ }\textbf {\bibinfo {volume} {18}},\ \bibinfo {pages}
  {75} (\bibinfo {year} {2021})}\BibitemShut {NoStop}%
\bibitem [{\citenamefont {Polloreno}\ \emph {et~al.}(2023)\citenamefont
  {Polloreno}, \citenamefont {Carignan-Dugas}, \citenamefont {Hines},
  \citenamefont {Blume-Kohout}, \citenamefont {Young},\ and\ \citenamefont
  {Proctor}}]{Polloreno2023-xa}%
  \BibitemOpen
  \bibfield  {author} {\bibinfo {author} {\bibfnamefont {A.~M.}\ \bibnamefont
  {Polloreno}}, \bibinfo {author} {\bibfnamefont {A.}~\bibnamefont
  {Carignan-Dugas}}, \bibinfo {author} {\bibfnamefont {J.}~\bibnamefont
  {Hines}}, \bibinfo {author} {\bibfnamefont {R.}~\bibnamefont {Blume-Kohout}},
  \bibinfo {author} {\bibfnamefont {K.}~\bibnamefont {Young}},\ and\ \bibinfo
  {author} {\bibfnamefont {T.}~\bibnamefont {Proctor}},\ }\bibfield  {title}
  {\bibinfo {title} {A theory of direct randomized benchmarking},\ }\href
  {http://arxiv.org/abs/2302.13853} {\bibfield  {journal} {\bibinfo  {journal}
  {arXiv [quant-ph]}\ } (\bibinfo {year} {2023})},\ \Eprint
  {https://arxiv.org/abs/2302.13853} {arXiv:2302.13853 [quant-ph]} \BibitemShut
  {NoStop}%
\bibitem [{\citenamefont {Flammia}\ and\ \citenamefont {Liu}(2011)}]{Fla11}%
  \BibitemOpen
  \bibfield  {author} {\bibinfo {author} {\bibfnamefont {S.~T.}\ \bibnamefont
  {Flammia}}\ and\ \bibinfo {author} {\bibfnamefont {Y.-K.}\ \bibnamefont
  {Liu}},\ }\bibfield  {title} {\bibinfo {title} {Direct fidelity estimation
  from few pauli measurements},\ }\href
  {https://doi.org/10.1103/PhysRevLett.106.230501} {\bibfield  {journal}
  {\bibinfo  {journal} {Phys. Rev. Lett.}\ }\textbf {\bibinfo {volume} {106}},\
  \bibinfo {pages} {230501} (\bibinfo {year} {2011})}\BibitemShut {NoStop}%
\bibitem [{\citenamefont {Erhard}\ \emph {et~al.}(2019)\citenamefont {Erhard},
  \citenamefont {Wallman}, \citenamefont {Postler}, \citenamefont {Meth},
  \citenamefont {Stricker}, \citenamefont {Martinez}, \citenamefont
  {Schindler}, \citenamefont {Monz}, \citenamefont {Emerson},\ and\
  \citenamefont {Blatt}}]{Erhard2019-wk}%
  \BibitemOpen
  \bibfield  {author} {\bibinfo {author} {\bibfnamefont {A.}~\bibnamefont
  {Erhard}}, \bibinfo {author} {\bibfnamefont {J.~J.}\ \bibnamefont {Wallman}},
  \bibinfo {author} {\bibfnamefont {L.}~\bibnamefont {Postler}}, \bibinfo
  {author} {\bibfnamefont {M.}~\bibnamefont {Meth}}, \bibinfo {author}
  {\bibfnamefont {R.}~\bibnamefont {Stricker}}, \bibinfo {author}
  {\bibfnamefont {E.~A.}\ \bibnamefont {Martinez}}, \bibinfo {author}
  {\bibfnamefont {P.}~\bibnamefont {Schindler}}, \bibinfo {author}
  {\bibfnamefont {T.}~\bibnamefont {Monz}}, \bibinfo {author} {\bibfnamefont
  {J.}~\bibnamefont {Emerson}},\ and\ \bibinfo {author} {\bibfnamefont
  {R.}~\bibnamefont {Blatt}},\ }\bibfield  {title} {\bibinfo {title}
  {Characterizing large-scale quantum computers via cycle benchmarking},\
  }\href {https://doi.org/10.1038/s41467-019-13068-7} {\bibfield  {journal}
  {\bibinfo  {journal} {Nat. Commun.}\ }\textbf {\bibinfo {volume} {10}},\
  \bibinfo {pages} {5347} (\bibinfo {year} {2019})}\BibitemShut {NoStop}%
\bibitem [{\citenamefont {Harper}\ \emph {et~al.}(2020)\citenamefont {Harper},
  \citenamefont {Flammia},\ and\ \citenamefont {Wallman}}]{Harper2020-te}%
  \BibitemOpen
  \bibfield  {author} {\bibinfo {author} {\bibfnamefont {R.}~\bibnamefont
  {Harper}}, \bibinfo {author} {\bibfnamefont {S.~T.}\ \bibnamefont
  {Flammia}},\ and\ \bibinfo {author} {\bibfnamefont {J.~J.}\ \bibnamefont
  {Wallman}},\ }\bibfield  {title} {\bibinfo {title} {Efficient learning of
  quantum noise},\ }\href {https://doi.org/10.1038/s41567-020-0992-8}
  {\bibfield  {journal} {\bibinfo  {journal} {Nat. Phys.}\ }\textbf {\bibinfo
  {volume} {16}},\ \bibinfo {pages} {1184} (\bibinfo {year}
  {2020})}\BibitemShut {NoStop}%
\bibitem [{\citenamefont {Nielsen}\ \emph {et~al.}(2021)\citenamefont
  {Nielsen}, \citenamefont {Gamble}, \citenamefont {Rudinger}, \citenamefont
  {Scholten}, \citenamefont {Young},\ and\ \citenamefont
  {Blume-Kohout}}]{Nielsen2021-nu}%
  \BibitemOpen
  \bibfield  {author} {\bibinfo {author} {\bibfnamefont {E.}~\bibnamefont
  {Nielsen}}, \bibinfo {author} {\bibfnamefont {J.~K.}\ \bibnamefont {Gamble}},
  \bibinfo {author} {\bibfnamefont {K.}~\bibnamefont {Rudinger}}, \bibinfo
  {author} {\bibfnamefont {T.}~\bibnamefont {Scholten}}, \bibinfo {author}
  {\bibfnamefont {K.}~\bibnamefont {Young}},\ and\ \bibinfo {author}
  {\bibfnamefont {R.}~\bibnamefont {Blume-Kohout}},\ }\bibfield  {title}
  {\bibinfo {title} {Gate set tomography},\ }\href
  {https://doi.org/10.22331/q-2021-10-05-557} {\bibfield  {journal} {\bibinfo
  {journal} {Quantum}\ }\textbf {\bibinfo {volume} {5}},\ \bibinfo {pages}
  {557} (\bibinfo {year} {2021})}\BibitemShut {NoStop}%
\bibitem [{\citenamefont {Blume-Kohout}\ \emph {et~al.}(2017)\citenamefont
  {Blume-Kohout}, \citenamefont {Gamble}, \citenamefont {Nielsen},
  \citenamefont {Rudinger}, \citenamefont {Mizrahi}, \citenamefont {Fortier},\
  and\ \citenamefont {Maunz}}]{Blume-Kohout2017-no}%
  \BibitemOpen
  \bibfield  {author} {\bibinfo {author} {\bibfnamefont {R.}~\bibnamefont
  {Blume-Kohout}}, \bibinfo {author} {\bibfnamefont {J.~K.}\ \bibnamefont
  {Gamble}}, \bibinfo {author} {\bibfnamefont {E.}~\bibnamefont {Nielsen}},
  \bibinfo {author} {\bibfnamefont {K.}~\bibnamefont {Rudinger}}, \bibinfo
  {author} {\bibfnamefont {J.}~\bibnamefont {Mizrahi}}, \bibinfo {author}
  {\bibfnamefont {K.}~\bibnamefont {Fortier}},\ and\ \bibinfo {author}
  {\bibfnamefont {P.}~\bibnamefont {Maunz}},\ }\bibfield  {title} {\bibinfo
  {title} {Demonstration of qubit operations below a rigorous fault tolerance
  threshold with gate set tomography},\ }\href
  {https://doi.org/10.1038/ncomms14485} {\bibfield  {journal} {\bibinfo
  {journal} {Nat. Commun.}\ }\textbf {\bibinfo {volume} {8}},\ \bibinfo {pages}
  {14485} (\bibinfo {year} {2017})}\BibitemShut {NoStop}%
\bibitem [{\citenamefont {Rudinger}\ \emph {et~al.}(2022)\citenamefont
  {Rudinger}, \citenamefont {Ribeill}, \citenamefont {Govia}, \citenamefont
  {Ware}, \citenamefont {Nielsen}, \citenamefont {Young}, \citenamefont {Ohki},
  \citenamefont {Blume-Kohout},\ and\ \citenamefont {Proctor}}]{Rud22}%
  \BibitemOpen
  \bibfield  {author} {\bibinfo {author} {\bibfnamefont {K.}~\bibnamefont
  {Rudinger}}, \bibinfo {author} {\bibfnamefont {G.~J.}\ \bibnamefont
  {Ribeill}}, \bibinfo {author} {\bibfnamefont {L.~C.}\ \bibnamefont {Govia}},
  \bibinfo {author} {\bibfnamefont {M.}~\bibnamefont {Ware}}, \bibinfo {author}
  {\bibfnamefont {E.}~\bibnamefont {Nielsen}}, \bibinfo {author} {\bibfnamefont
  {K.}~\bibnamefont {Young}}, \bibinfo {author} {\bibfnamefont {T.~A.}\
  \bibnamefont {Ohki}}, \bibinfo {author} {\bibfnamefont {R.}~\bibnamefont
  {Blume-Kohout}},\ and\ \bibinfo {author} {\bibfnamefont {T.}~\bibnamefont
  {Proctor}},\ }\bibfield  {title} {\bibinfo {title} {Characterizing midcircuit
  measurements on a superconducting qubit using gate set tomography},\ }\href
  {https://doi.org/10.1103/PhysRevApplied.17.014014} {\bibfield  {journal}
  {\bibinfo  {journal} {Phys. Rev. Appl.}\ }\textbf {\bibinfo {volume} {17}},\
  \bibinfo {pages} {014014} (\bibinfo {year} {2022})}\BibitemShut {NoStop}%
\bibitem [{\citenamefont {Stricker}\ \emph {et~al.}(2022)\citenamefont
  {Stricker}, \citenamefont {Vodola}, \citenamefont {Erhard}, \citenamefont
  {Postler}, \citenamefont {Meth}, \citenamefont {Ringbauer}, \citenamefont
  {Schindler}, \citenamefont {Blatt}, \citenamefont {M\"uller},\ and\
  \citenamefont {Monz}}]{Str22}%
  \BibitemOpen
  \bibfield  {author} {\bibinfo {author} {\bibfnamefont {R.}~\bibnamefont
  {Stricker}}, \bibinfo {author} {\bibfnamefont {D.}~\bibnamefont {Vodola}},
  \bibinfo {author} {\bibfnamefont {A.}~\bibnamefont {Erhard}}, \bibinfo
  {author} {\bibfnamefont {L.}~\bibnamefont {Postler}}, \bibinfo {author}
  {\bibfnamefont {M.}~\bibnamefont {Meth}}, \bibinfo {author} {\bibfnamefont
  {M.}~\bibnamefont {Ringbauer}}, \bibinfo {author} {\bibfnamefont
  {P.}~\bibnamefont {Schindler}}, \bibinfo {author} {\bibfnamefont
  {R.}~\bibnamefont {Blatt}}, \bibinfo {author} {\bibfnamefont
  {M.}~\bibnamefont {M\"uller}},\ and\ \bibinfo {author} {\bibfnamefont
  {T.}~\bibnamefont {Monz}},\ }\bibfield  {title} {\bibinfo {title}
  {Characterizing quantum instruments: From nondemolition measurements to
  quantum error correction},\ }\href
  {https://doi.org/10.1103/PRXQuantum.3.030318} {\bibfield  {journal} {\bibinfo
   {journal} {PRX Quantum}\ }\textbf {\bibinfo {volume} {3}},\ \bibinfo {pages}
  {030318} (\bibinfo {year} {2022})}\BibitemShut {NoStop}%
\bibitem [{\citenamefont {Pereira}\ \emph {et~al.}(2023)\citenamefont
  {Pereira}, \citenamefont {Garc{\'\i}a-Ripoll},\ and\ \citenamefont
  {Ramos}}]{Pereira2023-nw}%
  \BibitemOpen
  \bibfield  {author} {\bibinfo {author} {\bibfnamefont {L.}~\bibnamefont
  {Pereira}}, \bibinfo {author} {\bibfnamefont {J.~J.}\ \bibnamefont
  {Garc{\'\i}a-Ripoll}},\ and\ \bibinfo {author} {\bibfnamefont
  {T.}~\bibnamefont {Ramos}},\ }\bibfield  {title} {\bibinfo {title} {Parallel
  tomography of quantum non-demolition measurements in multi-qubit devices},\
  }\href {https://doi.org/10.1038/s41534-023-00688-7} {\bibfield  {journal}
  {\bibinfo  {journal} {Npj Quantum Inf.}\ }\textbf {\bibinfo {volume} {9}},\
  \bibinfo {pages} {1} (\bibinfo {year} {2023})}\BibitemShut {NoStop}%
\bibitem [{\citenamefont {Wagner}\ \emph {et~al.}(2020)\citenamefont {Wagner},
  \citenamefont {Bancal}, \citenamefont {Sangouard},\ and\ \citenamefont
  {Sekatski}}]{Wag20}%
  \BibitemOpen
  \bibfield  {author} {\bibinfo {author} {\bibfnamefont {S.}~\bibnamefont
  {Wagner}}, \bibinfo {author} {\bibfnamefont {J.-D.}\ \bibnamefont {Bancal}},
  \bibinfo {author} {\bibfnamefont {N.}~\bibnamefont {Sangouard}},\ and\
  \bibinfo {author} {\bibfnamefont {P.}~\bibnamefont {Sekatski}},\ }\bibfield
  {title} {\bibinfo {title} {Device-independent characterization of quantum
  instruments},\ }\href {https://doi.org/10.22331/q-2020-03-19-243} {\bibfield
  {journal} {\bibinfo  {journal} {{Quantum}}\ }\textbf {\bibinfo {volume}
  {4}},\ \bibinfo {pages} {243} (\bibinfo {year} {2020})}\BibitemShut {NoStop}%
\bibitem [{\citenamefont {G\'omez}\ \emph {et~al.}(2016)\citenamefont
  {G\'omez}, \citenamefont {G\'omez}, \citenamefont {Gonz\'alez}, \citenamefont
  {Ca\~nas}, \citenamefont {Barra}, \citenamefont {Delgado}, \citenamefont
  {Xavier}, \citenamefont {Cabello}, \citenamefont {Kleinmann}, \citenamefont
  {V\'ertesi},\ and\ \citenamefont {Lima}}]{Gom16}%
  \BibitemOpen
  \bibfield  {author} {\bibinfo {author} {\bibfnamefont {E.~S.}\ \bibnamefont
  {G\'omez}}, \bibinfo {author} {\bibfnamefont {S.}~\bibnamefont {G\'omez}},
  \bibinfo {author} {\bibfnamefont {P.}~\bibnamefont {Gonz\'alez}}, \bibinfo
  {author} {\bibfnamefont {G.}~\bibnamefont {Ca\~nas}}, \bibinfo {author}
  {\bibfnamefont {J.~F.}\ \bibnamefont {Barra}}, \bibinfo {author}
  {\bibfnamefont {A.}~\bibnamefont {Delgado}}, \bibinfo {author} {\bibfnamefont
  {G.~B.}\ \bibnamefont {Xavier}}, \bibinfo {author} {\bibfnamefont
  {A.}~\bibnamefont {Cabello}}, \bibinfo {author} {\bibfnamefont
  {M.}~\bibnamefont {Kleinmann}}, \bibinfo {author} {\bibfnamefont
  {T.}~\bibnamefont {V\'ertesi}},\ and\ \bibinfo {author} {\bibfnamefont
  {G.}~\bibnamefont {Lima}},\ }\bibfield  {title} {\bibinfo {title}
  {Device-independent certification of a nonprojective qubit measurement},\
  }\href {https://doi.org/10.1103/PhysRevLett.117.260401} {\bibfield  {journal}
  {\bibinfo  {journal} {Phys. Rev. Lett.}\ }\textbf {\bibinfo {volume} {117}},\
  \bibinfo {pages} {260401} (\bibinfo {year} {2016})}\BibitemShut {NoStop}%
\bibitem [{\citenamefont {Gaebler}\ \emph {et~al.}(2021)\citenamefont
  {Gaebler}, \citenamefont {Baldwin}, \citenamefont {Moses}, \citenamefont
  {Dreiling}, \citenamefont {Figgatt}, \citenamefont {Foss-Feig}, \citenamefont
  {Hayes},\ and\ \citenamefont {Pino}}]{Gae21}%
  \BibitemOpen
  \bibfield  {author} {\bibinfo {author} {\bibfnamefont {J.~P.}\ \bibnamefont
  {Gaebler}}, \bibinfo {author} {\bibfnamefont {C.~H.}\ \bibnamefont
  {Baldwin}}, \bibinfo {author} {\bibfnamefont {S.~A.}\ \bibnamefont {Moses}},
  \bibinfo {author} {\bibfnamefont {J.~M.}\ \bibnamefont {Dreiling}}, \bibinfo
  {author} {\bibfnamefont {C.}~\bibnamefont {Figgatt}}, \bibinfo {author}
  {\bibfnamefont {M.}~\bibnamefont {Foss-Feig}}, \bibinfo {author}
  {\bibfnamefont {D.}~\bibnamefont {Hayes}},\ and\ \bibinfo {author}
  {\bibfnamefont {J.~M.}\ \bibnamefont {Pino}},\ }\bibfield  {title} {\bibinfo
  {title} {Suppression of midcircuit measurement crosstalk errors with
  micromotion},\ }\href {https://doi.org/10.1103/PhysRevA.104.062440}
  {\bibfield  {journal} {\bibinfo  {journal} {Phys. Rev. A}\ }\textbf {\bibinfo
  {volume} {104}},\ \bibinfo {pages} {062440} (\bibinfo {year}
  {2021})}\BibitemShut {NoStop}%
\bibitem [{\citenamefont {Govia}\ \emph {et~al.}(2022)\citenamefont {Govia},
  \citenamefont {Jurcevic}, \citenamefont {Merkel},\ and\ \citenamefont
  {McKay}}]{Gov22}%
  \BibitemOpen
  \bibfield  {author} {\bibinfo {author} {\bibfnamefont {L.~C.~G.}\
  \bibnamefont {Govia}}, \bibinfo {author} {\bibfnamefont {P.}~\bibnamefont
  {Jurcevic}}, \bibinfo {author} {\bibfnamefont {S.~T.}\ \bibnamefont
  {Merkel}},\ and\ \bibinfo {author} {\bibfnamefont {D.~C.}\ \bibnamefont
  {McKay}},\ }\href@noop {} {\bibinfo {title} {A randomized benchmarking suite
  for mid-circuit measurements}} (\bibinfo {year} {2022}),\ \Eprint
  {https://arxiv.org/abs/2207.04836} {arXiv:2207.04836} \BibitemShut {NoStop}%
\bibitem [{\citenamefont {Hines}\ and\ \citenamefont
  {Proctor}(2024)}]{Hines2024-qj}%
  \BibitemOpen
  \bibfield  {author} {\bibinfo {author} {\bibfnamefont {J.}~\bibnamefont
  {Hines}}\ and\ \bibinfo {author} {\bibfnamefont {T.}~\bibnamefont
  {Proctor}},\ }\bibfield  {title} {\bibinfo {title} {Pauli noise learning for
  mid-circuit measurements},\ }\href
  {https://scholar.google.com/citations?view_op=view_citation&hl=en&user=k3oDywcAAAAJ&sortby=pubdate&citation_for_view=k3oDywcAAAAJ:d4tt_xEv1X8C}
  {\bibfield  {journal} {\bibinfo  {journal} {arXiv [quant-ph]}\ } (\bibinfo
  {year} {2024})},\ \Eprint {https://arxiv.org/abs/2406.09299}
  {arXiv:2406.09299 [quant-ph]} \BibitemShut {NoStop}%
\bibitem [{\citenamefont {Zhang}\ \emph {et~al.}(2024)\citenamefont {Zhang},
  \citenamefont {Chen}, \citenamefont {Liu},\ and\ \citenamefont
  {Jiang}}]{Zhang2024-zp}%
  \BibitemOpen
  \bibfield  {author} {\bibinfo {author} {\bibfnamefont {Z.}~\bibnamefont
  {Zhang}}, \bibinfo {author} {\bibfnamefont {S.}~\bibnamefont {Chen}},
  \bibinfo {author} {\bibfnamefont {Y.}~\bibnamefont {Liu}},\ and\ \bibinfo
  {author} {\bibfnamefont {L.}~\bibnamefont {Jiang}},\ }\bibfield  {title}
  {\bibinfo {title} {A generalized cycle benchmarking algorithm for
  characterizing mid-circuit measurements},\ }\href
  {http://arxiv.org/abs/2406.02669} {\bibfield  {journal} {\bibinfo  {journal}
  {arXiv [quant-ph]}\ } (\bibinfo {year} {2024})},\ \Eprint
  {https://arxiv.org/abs/2406.02669} {arXiv:2406.02669 [quant-ph]} \BibitemShut
  {NoStop}%
\bibitem [{\citenamefont {Shirizly}\ \emph {et~al.}(2024)\citenamefont
  {Shirizly}, \citenamefont {Govia},\ and\ \citenamefont
  {McKay}}]{shirizly2024randomizedbenchmarkingprotocoldynamic}%
  \BibitemOpen
  \bibfield  {author} {\bibinfo {author} {\bibfnamefont {L.}~\bibnamefont
  {Shirizly}}, \bibinfo {author} {\bibfnamefont {L.~C.~G.}\ \bibnamefont
  {Govia}},\ and\ \bibinfo {author} {\bibfnamefont {D.~C.}\ \bibnamefont
  {McKay}},\ }\href {https://arxiv.org/abs/2408.07677} {\bibinfo {title}
  {Randomized benchmarking protocol for dynamic circuits}} (\bibinfo {year}
  {2024}),\ \Eprint {https://arxiv.org/abs/2408.07677} {arXiv:2408.07677
  [quant-ph]} \BibitemShut {NoStop}%
\bibitem [{\citenamefont {Davies}\ and\ \citenamefont
  {Lewis}(1970)}]{Davies1970-js}%
  \BibitemOpen
  \bibfield  {author} {\bibinfo {author} {\bibfnamefont {E.~B.}\ \bibnamefont
  {Davies}}\ and\ \bibinfo {author} {\bibfnamefont {J.~T.}\ \bibnamefont
  {Lewis}},\ }\bibfield  {title} {\bibinfo {title} {An operational approach to
  quantum probability},\ }\href {https://doi.org/10.1007/BF01647093} {\bibfield
   {journal} {\bibinfo  {journal} {Commun. Math. Phys.}\ }\textbf {\bibinfo
  {volume} {17}},\ \bibinfo {pages} {239} (\bibinfo {year} {1970})}\BibitemShut
  {NoStop}%
\bibitem [{\citenamefont {Ryan-Anderson}\ \emph {et~al.}(2022)\citenamefont
  {Ryan-Anderson}, \citenamefont {Brown}, \citenamefont {Allman}, \citenamefont
  {Arkin}, \citenamefont {Asa-Attuah}, \citenamefont {Baldwin}, \citenamefont
  {Berg}, \citenamefont {Bohnet}, \citenamefont {Braxton}, \citenamefont
  {Burdick}, \citenamefont {Campora}, \citenamefont {Chernoguzov},
  \citenamefont {Esposito}, \citenamefont {Evans}, \citenamefont {Francois},
  \citenamefont {Gaebler}, \citenamefont {Gatterman}, \citenamefont {Gerber},
  \citenamefont {Gilmore}, \citenamefont {Gresh}, \citenamefont {Hall},
  \citenamefont {Hankin}, \citenamefont {Hostetter}, \citenamefont {Lucchetti},
  \citenamefont {Mayer}, \citenamefont {Myers}, \citenamefont {Neyenhuis},
  \citenamefont {Santiago}, \citenamefont {Sedlacek}, \citenamefont {Skripka},
  \citenamefont {Slattery}, \citenamefont {Stutz}, \citenamefont {Tait},
  \citenamefont {Tobey}, \citenamefont {Vittorini}, \citenamefont {Walker},\
  and\ \citenamefont {Hayes}}]{Ryan-Anderson2022-un}%
  \BibitemOpen
  \bibfield  {author} {\bibinfo {author} {\bibfnamefont {C.}~\bibnamefont
  {Ryan-Anderson}}, \bibinfo {author} {\bibfnamefont {N.~C.}\ \bibnamefont
  {Brown}}, \bibinfo {author} {\bibfnamefont {M.~S.}\ \bibnamefont {Allman}},
  \bibinfo {author} {\bibfnamefont {B.}~\bibnamefont {Arkin}}, \bibinfo
  {author} {\bibfnamefont {G.}~\bibnamefont {Asa-Attuah}}, \bibinfo {author}
  {\bibfnamefont {C.}~\bibnamefont {Baldwin}}, \bibinfo {author} {\bibfnamefont
  {J.}~\bibnamefont {Berg}}, \bibinfo {author} {\bibfnamefont {J.~G.}\
  \bibnamefont {Bohnet}}, \bibinfo {author} {\bibfnamefont {S.}~\bibnamefont
  {Braxton}}, \bibinfo {author} {\bibfnamefont {N.}~\bibnamefont {Burdick}},
  \bibinfo {author} {\bibfnamefont {J.~P.}\ \bibnamefont {Campora}}, \bibinfo
  {author} {\bibfnamefont {A.}~\bibnamefont {Chernoguzov}}, \bibinfo {author}
  {\bibfnamefont {J.}~\bibnamefont {Esposito}}, \bibinfo {author}
  {\bibfnamefont {B.}~\bibnamefont {Evans}}, \bibinfo {author} {\bibfnamefont
  {D.}~\bibnamefont {Francois}}, \bibinfo {author} {\bibfnamefont {J.~P.}\
  \bibnamefont {Gaebler}}, \bibinfo {author} {\bibfnamefont {T.~M.}\
  \bibnamefont {Gatterman}}, \bibinfo {author} {\bibfnamefont {J.}~\bibnamefont
  {Gerber}}, \bibinfo {author} {\bibfnamefont {K.}~\bibnamefont {Gilmore}},
  \bibinfo {author} {\bibfnamefont {D.}~\bibnamefont {Gresh}}, \bibinfo
  {author} {\bibfnamefont {A.}~\bibnamefont {Hall}}, \bibinfo {author}
  {\bibfnamefont {A.}~\bibnamefont {Hankin}}, \bibinfo {author} {\bibfnamefont
  {J.}~\bibnamefont {Hostetter}}, \bibinfo {author} {\bibfnamefont
  {D.}~\bibnamefont {Lucchetti}}, \bibinfo {author} {\bibfnamefont
  {K.}~\bibnamefont {Mayer}}, \bibinfo {author} {\bibfnamefont
  {J.}~\bibnamefont {Myers}}, \bibinfo {author} {\bibfnamefont
  {B.}~\bibnamefont {Neyenhuis}}, \bibinfo {author} {\bibfnamefont
  {J.}~\bibnamefont {Santiago}}, \bibinfo {author} {\bibfnamefont
  {J.}~\bibnamefont {Sedlacek}}, \bibinfo {author} {\bibfnamefont
  {T.}~\bibnamefont {Skripka}}, \bibinfo {author} {\bibfnamefont
  {A.}~\bibnamefont {Slattery}}, \bibinfo {author} {\bibfnamefont {R.~P.}\
  \bibnamefont {Stutz}}, \bibinfo {author} {\bibfnamefont {J.}~\bibnamefont
  {Tait}}, \bibinfo {author} {\bibfnamefont {R.}~\bibnamefont {Tobey}},
  \bibinfo {author} {\bibfnamefont {G.}~\bibnamefont {Vittorini}}, \bibinfo
  {author} {\bibfnamefont {J.}~\bibnamefont {Walker}},\ and\ \bibinfo {author}
  {\bibfnamefont {D.}~\bibnamefont {Hayes}},\ }\bibfield  {title} {\bibinfo
  {title} {Implementing fault-tolerant entangling gates on the five-qubit code
  and the color code},\ }\href {http://arxiv.org/abs/2208.01863} {\bibfield
  {journal} {\bibinfo  {journal} {arXiv [quant-ph]}\ } (\bibinfo {year}
  {2022})},\ \Eprint {https://arxiv.org/abs/2208.01863} {arXiv:2208.01863
  [quant-ph]} \BibitemShut {NoStop}%
\bibitem [{\citenamefont {Viola}\ \emph {et~al.}(1999)\citenamefont {Viola},
  \citenamefont {Knill},\ and\ \citenamefont {Lloyd}}]{viola1999dynamical}%
  \BibitemOpen
  \bibfield  {author} {\bibinfo {author} {\bibfnamefont {L.}~\bibnamefont
  {Viola}}, \bibinfo {author} {\bibfnamefont {E.}~\bibnamefont {Knill}},\ and\
  \bibinfo {author} {\bibfnamefont {S.}~\bibnamefont {Lloyd}},\ }\bibfield
  {title} {\bibinfo {title} {Dynamical decoupling of open quantum systems},\
  }\href {https://doi.org/10.1103/PhysRevLett.82.2417} {\bibfield  {journal}
  {\bibinfo  {journal} {Phys. Rev. Lett.}\ }\textbf {\bibinfo {volume} {82}},\
  \bibinfo {pages} {2417} (\bibinfo {year} {1999})}\BibitemShut {NoStop}%
\bibitem [{\citenamefont {Bisio}\ \emph {et~al.}(2009)\citenamefont {Bisio},
  \citenamefont {Chiribella}, \citenamefont {D'Ariano}, \citenamefont
  {Facchini},\ and\ \citenamefont {Perinotti}}]{Bisio2009-qg}%
  \BibitemOpen
  \bibfield  {author} {\bibinfo {author} {\bibfnamefont {A.}~\bibnamefont
  {Bisio}}, \bibinfo {author} {\bibfnamefont {G.}~\bibnamefont {Chiribella}},
  \bibinfo {author} {\bibfnamefont {G.~M.}\ \bibnamefont {D'Ariano}}, \bibinfo
  {author} {\bibfnamefont {S.}~\bibnamefont {Facchini}},\ and\ \bibinfo
  {author} {\bibfnamefont {P.}~\bibnamefont {Perinotti}},\ }\bibfield  {title}
  {\bibinfo {title} {Optimal quantum tomography of states, measurements, and
  transformations},\ }\href {https://doi.org/10.1103/PhysRevLett.102.010404}
  {\bibfield  {journal} {\bibinfo  {journal} {Phys. Rev. Lett.}\ }\textbf
  {\bibinfo {volume} {102}},\ \bibinfo {pages} {010404} (\bibinfo {year}
  {2009})}\BibitemShut {NoStop}%
\bibitem [{\citenamefont {Lundeen}\ \emph {et~al.}(2009)\citenamefont
  {Lundeen}, \citenamefont {Feito}, \citenamefont {Coldenstrodt-Ronge},
  \citenamefont {Pregnell}, \citenamefont {Silberhorn}, \citenamefont {Ralph},
  \citenamefont {Eisert}, \citenamefont {Plenio},\ and\ \citenamefont
  {Walmsley}}]{Lundeen2009-jo}%
  \BibitemOpen
  \bibfield  {author} {\bibinfo {author} {\bibfnamefont {J.~S.}\ \bibnamefont
  {Lundeen}}, \bibinfo {author} {\bibfnamefont {A.}~\bibnamefont {Feito}},
  \bibinfo {author} {\bibfnamefont {H.}~\bibnamefont {Coldenstrodt-Ronge}},
  \bibinfo {author} {\bibfnamefont {K.~L.}\ \bibnamefont {Pregnell}}, \bibinfo
  {author} {\bibfnamefont {C.}~\bibnamefont {Silberhorn}}, \bibinfo {author}
  {\bibfnamefont {T.~C.}\ \bibnamefont {Ralph}}, \bibinfo {author}
  {\bibfnamefont {J.}~\bibnamefont {Eisert}}, \bibinfo {author} {\bibfnamefont
  {M.~B.}\ \bibnamefont {Plenio}},\ and\ \bibinfo {author} {\bibfnamefont
  {I.~A.}\ \bibnamefont {Walmsley}},\ }\bibfield  {title} {\bibinfo {title}
  {Tomography of quantum detectors},\ }\href
  {https://doi.org/10.1038/nphys1133} {\bibfield  {journal} {\bibinfo
  {journal} {Nat. Phys.}\ }\textbf {\bibinfo {volume} {5}},\ \bibinfo {pages}
  {27} (\bibinfo {year} {2009})}\BibitemShut {NoStop}%
\bibitem [{\citenamefont {{A Feito,, J S Lundeen, H Coldenstrodt-Ronge, J
  Eisert, M B Plenio, and I A Walmsley}}(2009)}]{Feito2009}%
  \BibitemOpen
  \bibfield  {author} {\bibinfo {author} {\bibnamefont {{A Feito,, J S Lundeen,
  H Coldenstrodt-Ronge, J Eisert, M B Plenio, and I A Walmsley}}},\ }\bibfield
  {title} {\bibinfo {title} {Measuring measurement: theory and practice},\
  }\href {https://iopscience.iop.org/article/10.1088/1367-2630/11/9/093038}
  {\bibfield  {journal} {\bibinfo  {journal} {New J. Phys.}\ }\textbf {\bibinfo
  {volume} {11}},\ \bibinfo {pages} {093038} (\bibinfo {year}
  {2009})}\BibitemShut {NoStop}%
\bibitem [{\citenamefont {Maciejewski}\ \emph {et~al.}(2020)\citenamefont
  {Maciejewski}, \citenamefont {Zimborás},\ and\ \citenamefont
  {Oszmaniec}}]{Maciejewski2020-vc}%
  \BibitemOpen
  \bibfield  {author} {\bibinfo {author} {\bibfnamefont {F.~B.}\ \bibnamefont
  {Maciejewski}}, \bibinfo {author} {\bibfnamefont {Z.}~\bibnamefont
  {Zimborás}},\ and\ \bibinfo {author} {\bibfnamefont {M.}~\bibnamefont
  {Oszmaniec}},\ }\bibfield  {title} {\bibinfo {title} {Mitigation of readout
  noise in near-term quantum devices by classical post-processing based on
  detector tomography},\ }\href {https://doi.org/10.22331/q-2020-04-24-257}
  {\bibfield  {journal} {\bibinfo  {journal} {Quantum}\ }\textbf {\bibinfo
  {volume} {4}},\ \bibinfo {pages} {257} (\bibinfo {year} {2020})}\BibitemShut
  {NoStop}%
\bibitem [{\citenamefont {Moussa}\ \emph {et~al.}(2012)\citenamefont {Moussa},
  \citenamefont {da~Silva}, \citenamefont {Ryan},\ and\ \citenamefont
  {Laflamme}}]{Moussa2012-rq}%
  \BibitemOpen
  \bibfield  {author} {\bibinfo {author} {\bibfnamefont {O.}~\bibnamefont
  {Moussa}}, \bibinfo {author} {\bibfnamefont {M.~P.}\ \bibnamefont
  {da~Silva}}, \bibinfo {author} {\bibfnamefont {C.~A.}\ \bibnamefont {Ryan}},\
  and\ \bibinfo {author} {\bibfnamefont {R.}~\bibnamefont {Laflamme}},\
  }\bibfield  {title} {\bibinfo {title} {Practical experimental certification
  of computational quantum gates using a twirling procedure},\ }\href
  {https://doi.org/10.1103/PhysRevLett.109.070504} {\bibfield  {journal}
  {\bibinfo  {journal} {Phys. Rev. Lett.}\ }\textbf {\bibinfo {volume} {109}},\
  \bibinfo {pages} {070504} (\bibinfo {year} {2012})}\BibitemShut {NoStop}%
\bibitem [{\citenamefont {Hothem}\ \emph {et~al.}(2023)\citenamefont {Hothem},
  \citenamefont {Hines}, \citenamefont {Nataraj}, \citenamefont
  {Blume-Kohout},\ and\ \citenamefont {Proctor}}]{Hot23}%
  \BibitemOpen
  \bibfield  {author} {\bibinfo {author} {\bibfnamefont {D.}~\bibnamefont
  {Hothem}}, \bibinfo {author} {\bibfnamefont {J.}~\bibnamefont {Hines}},
  \bibinfo {author} {\bibfnamefont {K.}~\bibnamefont {Nataraj}}, \bibinfo
  {author} {\bibfnamefont {R.}~\bibnamefont {Blume-Kohout}},\ and\ \bibinfo
  {author} {\bibfnamefont {T.}~\bibnamefont {Proctor}},\ }\bibfield  {title}
  {\bibinfo {title} {Predictive models from quantum computer benchmarks},\ }in\
  \href {https://doi.org/10.1109/QCE57702.2023.00086} {\emph {\bibinfo
  {booktitle} {2023 IEEE International Conference on Quantum Computing and
  Engineering (QCE)}}}\ (\bibinfo  {publisher} {IEEE Computer Society},\
  \bibinfo {address} {Los Alamitos, CA, USA},\ \bibinfo {year} {2023})\ pp.\
  \bibinfo {pages} {709--714}\BibitemShut {NoStop}%
\bibitem [{ibm()}]{ibm-quantum-documentation}%
  \BibitemOpen
  \href
  {https://docs.quantum.ibm.com/api/qiskit-ibm-runtime/qiskit_ibm_runtime.options.DynamicalDecouplingOptions}
  {}\BibitemShut {NoStop}%
\bibitem [{\citenamefont {Wood}\ and\ \citenamefont
  {Gambetta}(2018)}]{Wood2018-wf}%
  \BibitemOpen
  \bibfield  {author} {\bibinfo {author} {\bibfnamefont {C.~J.}\ \bibnamefont
  {Wood}}\ and\ \bibinfo {author} {\bibfnamefont {J.~M.}\ \bibnamefont
  {Gambetta}},\ }\bibfield  {title} {\bibinfo {title} {Quantification and
  characterization of leakage errors},\ }\href
  {https://doi.org/10.1103/PhysRevA.97.032306} {\bibfield  {journal} {\bibinfo
  {journal} {Phys. Rev. A}\ }\textbf {\bibinfo {volume} {97}},\ \bibinfo
  {pages} {032306} (\bibinfo {year} {2018})}\BibitemShut {NoStop}%
\bibitem [{\citenamefont {Chen}\ \emph {et~al.}(2016)\citenamefont {Chen},
  \citenamefont {Kelly}, \citenamefont {Quintana}, \citenamefont {Barends},
  \citenamefont {Campbell}, \citenamefont {Chen}, \citenamefont {Chiaro},
  \citenamefont {Dunsworth}, \citenamefont {Fowler}, \citenamefont {Lucero},
  \citenamefont {Jeffrey}, \citenamefont {Megrant}, \citenamefont {Mutus},
  \citenamefont {Neeley}, \citenamefont {Neill}, \citenamefont {O'Malley},
  \citenamefont {Roushan}, \citenamefont {Sank}, \citenamefont {Vainsencher},
  \citenamefont {Wenner}, \citenamefont {White}, \citenamefont {Korotkov},\
  and\ \citenamefont {Martinis}}]{Chen2016-sh}%
  \BibitemOpen
  \bibfield  {author} {\bibinfo {author} {\bibfnamefont {Z.}~\bibnamefont
  {Chen}}, \bibinfo {author} {\bibfnamefont {J.}~\bibnamefont {Kelly}},
  \bibinfo {author} {\bibfnamefont {C.}~\bibnamefont {Quintana}}, \bibinfo
  {author} {\bibfnamefont {R.}~\bibnamefont {Barends}}, \bibinfo {author}
  {\bibfnamefont {B.}~\bibnamefont {Campbell}}, \bibinfo {author}
  {\bibfnamefont {Y.}~\bibnamefont {Chen}}, \bibinfo {author} {\bibfnamefont
  {B.}~\bibnamefont {Chiaro}}, \bibinfo {author} {\bibfnamefont
  {A.}~\bibnamefont {Dunsworth}}, \bibinfo {author} {\bibfnamefont {A.~G.}\
  \bibnamefont {Fowler}}, \bibinfo {author} {\bibfnamefont {E.}~\bibnamefont
  {Lucero}}, \bibinfo {author} {\bibfnamefont {E.}~\bibnamefont {Jeffrey}},
  \bibinfo {author} {\bibfnamefont {A.}~\bibnamefont {Megrant}}, \bibinfo
  {author} {\bibfnamefont {J.}~\bibnamefont {Mutus}}, \bibinfo {author}
  {\bibfnamefont {M.}~\bibnamefont {Neeley}}, \bibinfo {author} {\bibfnamefont
  {C.}~\bibnamefont {Neill}}, \bibinfo {author} {\bibfnamefont {P.~J.~J.}\
  \bibnamefont {O'Malley}}, \bibinfo {author} {\bibfnamefont {P.}~\bibnamefont
  {Roushan}}, \bibinfo {author} {\bibfnamefont {D.}~\bibnamefont {Sank}},
  \bibinfo {author} {\bibfnamefont {A.}~\bibnamefont {Vainsencher}}, \bibinfo
  {author} {\bibfnamefont {J.}~\bibnamefont {Wenner}}, \bibinfo {author}
  {\bibfnamefont {T.~C.}\ \bibnamefont {White}}, \bibinfo {author}
  {\bibfnamefont {A.~N.}\ \bibnamefont {Korotkov}},\ and\ \bibinfo {author}
  {\bibfnamefont {J.~M.}\ \bibnamefont {Martinis}},\ }\bibfield  {title}
  {\bibinfo {title} {Measuring and suppressing quantum state leakage in a
  superconducting qubit},\ }\href
  {https://doi.org/10.1103/PhysRevLett.116.020501} {\bibfield  {journal}
  {\bibinfo  {journal} {Phys. Rev. Lett.}\ }\textbf {\bibinfo {volume} {116}},\
  \bibinfo {pages} {020501} (\bibinfo {year} {2016})}\BibitemShut {NoStop}%
\bibitem [{\citenamefont {Chasseur}\ and\ \citenamefont
  {Wilhelm}(2015)}]{Chasseur2015-mw}%
  \BibitemOpen
  \bibfield  {author} {\bibinfo {author} {\bibfnamefont {T.}~\bibnamefont
  {Chasseur}}\ and\ \bibinfo {author} {\bibfnamefont {F.~K.}\ \bibnamefont
  {Wilhelm}},\ }\bibfield  {title} {\bibinfo {title} {Complete randomized
  benchmarking protocol accounting for leakage errors},\ }\href
  {https://doi.org/10.1103/PhysRevA.92.042333} {\bibfield  {journal} {\bibinfo
  {journal} {Phys. Rev. A}\ }\textbf {\bibinfo {volume} {92}},\ \bibinfo
  {pages} {042333} (\bibinfo {year} {2015})}\BibitemShut {NoStop}%
\bibitem [{\citenamefont {Wallman}\ \emph {et~al.}(2016)\citenamefont
  {Wallman}, \citenamefont {Barnhill},\ and\ \citenamefont
  {Emerson}}]{Wallman2016-ne}%
  \BibitemOpen
  \bibfield  {author} {\bibinfo {author} {\bibfnamefont {J.~J.}\ \bibnamefont
  {Wallman}}, \bibinfo {author} {\bibfnamefont {M.}~\bibnamefont {Barnhill}},\
  and\ \bibinfo {author} {\bibfnamefont {J.}~\bibnamefont {Emerson}},\
  }\bibfield  {title} {\bibinfo {title} {Robust characterization of leakage
  errors},\ }\href {https://doi.org/10.1088/1367-2630/18/4/043021} {\bibfield
  {journal} {\bibinfo  {journal} {New J. Phys.}\ }\textbf {\bibinfo {volume}
  {18}},\ \bibinfo {pages} {043021} (\bibinfo {year} {2016})}\BibitemShut
  {NoStop}%
\bibitem [{\citenamefont {Beale}\ and\ \citenamefont {Wallman}(2023)}]{Bea23}%
  \BibitemOpen
  \bibfield  {author} {\bibinfo {author} {\bibfnamefont {S.~J.}\ \bibnamefont
  {Beale}}\ and\ \bibinfo {author} {\bibfnamefont {J.~J.}\ \bibnamefont
  {Wallman}},\ }\href@noop {} {\bibinfo {title} {Randomized compiling for
  subsystem measurements}} (\bibinfo {year} {2023}),\ \Eprint
  {https://arxiv.org/abs/2304.06599} {arXiv:2304.06599 [quant-ph]} \BibitemShut
  {NoStop}%
\bibitem [{\citenamefont {McLaren}\ \emph {et~al.}(2023)\citenamefont
  {McLaren}, \citenamefont {Graydon},\ and\ \citenamefont {Wallman}}]{Mcl23}%
  \BibitemOpen
  \bibfield  {author} {\bibinfo {author} {\bibfnamefont {D.}~\bibnamefont
  {McLaren}}, \bibinfo {author} {\bibfnamefont {M.~A.}\ \bibnamefont
  {Graydon}},\ and\ \bibinfo {author} {\bibfnamefont {J.~J.}\ \bibnamefont
  {Wallman}},\ }\href@noop {} {\bibinfo {title} {Stochastic errors in quantum
  instruments}} (\bibinfo {year} {2023}),\ \Eprint
  {https://arxiv.org/abs/2306.07418} {arXiv:2306.07418 [quant-ph]} \BibitemShut
  {NoStop}%
\bibitem [{\citenamefont {Boixo}\ \emph {et~al.}(2018)\citenamefont {Boixo},
  \citenamefont {Isakov}, \citenamefont {Smelyanskiy}, \citenamefont {Babbush},
  \citenamefont {Ding}, \citenamefont {Jiang}, \citenamefont {Bremner},
  \citenamefont {Martinis},\ and\ \citenamefont
  {Neven}}]{boxio2018characterizing}%
  \BibitemOpen
  \bibfield  {author} {\bibinfo {author} {\bibfnamefont {S.}~\bibnamefont
  {Boixo}}, \bibinfo {author} {\bibfnamefont {S.~V.}\ \bibnamefont {Isakov}},
  \bibinfo {author} {\bibfnamefont {V.~N.}\ \bibnamefont {Smelyanskiy}},
  \bibinfo {author} {\bibfnamefont {R.}~\bibnamefont {Babbush}}, \bibinfo
  {author} {\bibfnamefont {N.}~\bibnamefont {Ding}}, \bibinfo {author}
  {\bibfnamefont {Z.}~\bibnamefont {Jiang}}, \bibinfo {author} {\bibfnamefont
  {M.~J.}\ \bibnamefont {Bremner}}, \bibinfo {author} {\bibfnamefont {J.~M.}\
  \bibnamefont {Martinis}},\ and\ \bibinfo {author} {\bibfnamefont
  {H.}~\bibnamefont {Neven}},\ }\bibfield  {title} {\bibinfo {title}
  {Characterizing quantum supremacy in near-term devices},\ }\href@noop {}
  {\bibfield  {journal} {\bibinfo  {journal} {Nat. Phys.}\ }\textbf {\bibinfo
  {volume} {14}},\ \bibinfo {pages} {595} (\bibinfo {year} {2018})}\BibitemShut
  {NoStop}%
\bibitem [{\citenamefont {Nielsen}\ \emph {et~al.}(2022)\citenamefont
  {Nielsen}, \citenamefont {Seritan}, \citenamefont {Proctor}, \citenamefont
  {Rudinger}, \citenamefont {Young}, \citenamefont {Russo}, \citenamefont
  {Blume-Kohout}, \citenamefont {Kelly}, \citenamefont {Gamble},\ and\
  \citenamefont {Saldyt}}]{nielsen2022pygstio}%
  \BibitemOpen
  \bibfield  {author} {\bibinfo {author} {\bibfnamefont {E.}~\bibnamefont
  {Nielsen}}, \bibinfo {author} {\bibfnamefont {S.}~\bibnamefont {Seritan}},
  \bibinfo {author} {\bibfnamefont {T.}~\bibnamefont {Proctor}}, \bibinfo
  {author} {\bibfnamefont {K.}~\bibnamefont {Rudinger}}, \bibinfo {author}
  {\bibfnamefont {K.}~\bibnamefont {Young}}, \bibinfo {author} {\bibfnamefont
  {A.}~\bibnamefont {Russo}}, \bibinfo {author} {\bibfnamefont
  {R.}~\bibnamefont {Blume-Kohout}}, \bibinfo {author} {\bibfnamefont {R.~P.}\
  \bibnamefont {Kelly}}, \bibinfo {author} {\bibfnamefont {J.~K.}\ \bibnamefont
  {Gamble}},\ and\ \bibinfo {author} {\bibfnamefont {L.}~\bibnamefont
  {Saldyt}},\ }\href@noop {} {\bibinfo {title} {{pyGSTio/pyGSTi}: Version
  0.9.10.1}} (\bibinfo {year} {2022})\BibitemShut {NoStop}%
\bibitem [{\citenamefont {Nielsen}\ \emph {et~al.}(2020)\citenamefont
  {Nielsen}, \citenamefont {Rudinger}, \citenamefont {Proctor}, \citenamefont
  {Russo}, \citenamefont {Young},\ and\ \citenamefont {Blume-Kohout}}]{Nie20}%
  \BibitemOpen
  \bibfield  {author} {\bibinfo {author} {\bibfnamefont {E.}~\bibnamefont
  {Nielsen}}, \bibinfo {author} {\bibfnamefont {K.}~\bibnamefont {Rudinger}},
  \bibinfo {author} {\bibfnamefont {T.}~\bibnamefont {Proctor}}, \bibinfo
  {author} {\bibfnamefont {A.}~\bibnamefont {Russo}}, \bibinfo {author}
  {\bibfnamefont {K.}~\bibnamefont {Young}},\ and\ \bibinfo {author}
  {\bibfnamefont {R.}~\bibnamefont {Blume-Kohout}},\ }\bibfield  {title}
  {\bibinfo {title} {Probing quantum processor performance with {py{GST}i}},\
  }\href {https://doi.org/10.1088/2058-9565/ab8aa4} {\bibfield  {journal}
  {\bibinfo  {journal} {Quantum Sci. Technol.}\ }\textbf {\bibinfo {volume}
  {5}},\ \bibinfo {pages} {044002} (\bibinfo {year} {2020})}\BibitemShut
  {NoStop}%
\bibitem [{\citenamefont {Rudinger}(2024)}]{LoQS}%
  \BibitemOpen
  \bibfield  {author} {\bibinfo {author} {\bibfnamefont {K.}~\bibnamefont
  {Rudinger}},\ }\href {https://meetings.aps.org/Meeting/MAR23/Session/D72.11}
  {\bibinfo {title} {Probing logical error models with gate set tomography}},\
  \bibinfo {howpublished} {Presented at {APS} {M}arch {M}eeting} (\bibinfo
  {year} {2024}),\ \bibinfo {note} {{L}o{QS} will be publicly available
  soon!}\BibitemShut {Stop}%
\bibitem [{\citenamefont {Quantinuum}(2024)}]{productsheet}%
  \BibitemOpen
  \bibfield  {author} {\bibinfo {author} {\bibnamefont {Quantinuum}},\
  }\href@noop {} {\bibinfo {title} {Quantinuum {S}ystem {M}odel {H1} {P}roduct
  {D}ata {S}heet}},\ \bibinfo {howpublished} {Online} (\bibinfo {year}
  {2024}),\ \bibinfo {note} {available at:
  https://www.quantinuum.com/products-solutions/system-model-h1-series}\BibitemShut
  {NoStop}%
\end{thebibliography}%

\appendix
\onecolumngrid
\newpage
\section*{SUPPLEMENTARY INFORMATION}

\section{Pauli tracking and binary RB}\label{app:pauli-tracking}
Pauli tracking is a procedure for estimating the fidelity of a quantum circuit $C$, whose noisy superoperator we denote by $\phi(C) = \mathcal{E}\circ\mathcal{U}(C)$, where $\mathcal{U}(C)$ is the superoperator representation of the unitary $U(C)$ and $\mathcal{E}$ is an error channel. For Clifford circuits, Pauli tracking reduces to estimating
\begin{equation}
    \mathbb{E}_{s\in\mathbb{P}_n}\Tr(s'\phi(C)[s]),
\end{equation}
where $s' \coloneqq U(C)sU(C)^{\dagger}$. In this framework, $s$ is the initial Pauli operator that is tracked through the layers of $C$, and $s'$ is the target Pauli operator that $s$ is transformed into by the circuit. Because Pauli operators are not valid quantum states, in practice Pauli tracking is implemented by inputting random eigenstates of a random Pauli operator $s$ into $C$ and then measuring the target Pauli operator $s'$ at the end of the circuit. 

For layer sets containing only Clifford layers, the fidelity of a depth-$d$ $\Omega$-distributed circuit $C = L_d\cdots L_1$ may be estimated using an ensemble of depth-$d$ binary RB (BiRB) circuits. For a fixed Pauli $s$, $\Tr(s\phi(C)[s]))$ is estimated by generating many depth-$d$, $n$-qubit BiRB circuits $\tilde{C}$ as follows: (i) sample a uniformly random state $\ket{\psi(s)}$ from the set of tensor product eigenstates of $s$; (ii) construct a layer $L_0$ of single-qubit gates that prepares $\ket{\psi(s)}$; (iii) prefix the core circuit $C$ with $L_0$; (iv) postfix a layer $L_{d+1}$ of single-qubit gates that transforms $s'$ into a $Z$-type Pauli (i.e., the tensor product of single-qubit $Z$ Pauli and identity I Pauli operators). Each BiRB circuit has a target Pauli $s_{\tilde{C}} \coloneqq U(L_{d+1})s'U(L_{d+1})^{-1}$, and we have
\begin{equation}
    \Tr(s'\phi(C)[s]) = \mathbb{E}_{\tilde{C}}[\langle s_{\tilde{C}}\rangle].
\end{equation}
BiRB estimates the average fidelity of depth-$d$ $\Omega$-distributed circuits by sampling many depth-$d$ $\Omega$-distributed circuits $\mathcal{C} = \lbrace C\rbrace$, constructing ensembles of BiRB circuits for each $C \in \mathcal{C}$, and then computing the sample average of each circuit's success metric $\langle s_{\tilde{C}}\rangle$ across the BiRB circuits.   

When circuits contain MCMs, the natural generalization of Pauli tracking is to track larger Pauli operators. For a depth-$d$, $n$-qubit, $m$-measurement $\Omega$-distributed circuit $C$, we track $(n+m)$-qubit Pauli operators. However, tracking a $(n+m)$-qubit Pauli with only $n$-qubits available presents two challenges: how do we use an $n$-qubit circuit to mimic preparing the eigenstate of a $(n+m)$-qubit Pauli $s$ and how do we use it to propagate $s$ through the circuit to determine the overall $(n+m)$-qubit Pauli measurement? The core insight behind the development of QIRB circuits is that we can use the mid-circuit measurements to introduce virtual qubits to address these challenges.

\section{QIRB performs subsystem measurement randomized compilation}\label{app:subsystem-rc}

In this supplemental note, we show that our QIRB circuits twirl errors on the mixed layers into uniform stochastic error channels, assuming that the single-qubit gate error is small.

We will show that the random single-qubit gate layers $l_1$ and $l_3$ in the composite layers of QIRB circuits, combined with the QIRB data analysis for mid-circuit measurement results, effectively perform randomized compiling on the composite layers. It is sufficient to show that these operations satisfy the three requirements for randomized compiling of subsystem measurements ~\cite{Bea23}:
\begin{enumerate}
\item Random Pauli gates are applied on all unmeasured qubits.
\item On each measured qubit, $\Za{a}$ is performed before the measurement, and $\Za{b}$ is performed after the measurement, for uniform random $a,b \in \mathbb{Z}_2$. 
\item On each measured qubit, $\Xa{-x}$ is performed for uniform random $x \in \mathbb{Z}_2$, and $\Xa{x}$ is performed after the measurement. Add $x$ to the classical result of the measurement. 
\end{enumerate}

We start by showing that the required quantum operations are applied by considering a single composite layer,
$\mathbb{E}_{l_3}\mathbb{E}_{l_2}\mathbb{E}_{l_1} \Lambda(l_3) \Lambda(l_2) \Lambda(l_1)$. We represent each imperfect single-qubit gate layer with a layer-independent pre-layer error channel, i.e., $\Lambda(l) = \mathcal{U}(l) \mathcal{E}$, and we represent each mixed layer as a quantum instrument $\Lambda(l_2) = \sum_{x \in \mathbb{Z}_2^{n_{\textrm{mcm}}(l_2)}} {M_x(l_2)} \otimes \mathbf{e_x}$, where $n_{n_{\textrm{mcm}}(l_2)}$ denotes the number of qubits measured in $l_2$ and $\mathbf{e_x}$ denotes the classical outcome encoded by bit string $x$.

\begin{align}
    \mathbb{E}_{l_3}\mathbb{E}_{l_2}\mathbb{E}_{l_1} \Lambda(l_3) \Lambda(l_2) \Lambda(l_1) &= \mathbb{E}_{l_3}\mathbb{E}_{l_2}\mathbb{E}_{l_1} \mathcal{U}(l_3)\mathcal{E} \mathcal{E}_{l_2}\mathcal{U}(l_2) \mathcal{U}(l_1) \mathcal{E}\\
    & = \mathbb{E}_{l'_3}\mathbb{E}_{l_2}\mathbb{E}_{l'_1} \mathcal{U}(l'_3)\mathcal{P}^c\mathcal{E} \left(\sum_{x \in \mathbb{Z}_2^{n_{\textrm{mcm}}(l_2)}} {M_x(l_2)} \otimes \mathbf{e_{x+k(\mathcal{P})}} \right)\mathcal{P}\mathcal{U}(l'_1) \mathcal{E}, 
\end{align}
for any $n$-qubit Pauli operators $\mathcal{P},\mathcal{P}^c$. Here, we define $k(\mathcal{P})$ to be the bit string encoding the corrections to the MCM results accounting for $\mathcal{P}$, i.e., $k(\mathcal{P})_i=1$ if $\mathcal{P}_i=X$ or $Y$ and $k(\mathcal{P})=0$ otherwise. We can sample $\mathcal{P}$ uniformly from the $n$-qubit Pauli operators and pick $\mathcal{P}^c$ uniformly from the Pauli operators satisfying (a) $\mathcal{P}^c_i=P_i$ if qubit $i$ is not measured in $l_2$ (b) $\mathcal{P}^c_i = X$ or $Y$ if qubit $i$ is measured in $l_2$ and $\mathcal{P}_i=X$ or $Y$, and $\mathcal{P}'_i = Z$ or $I$ otherwise. These sampling distributions cover the conditions in (1) and (2) above. Furthermore, the QIRB data analysis satisfies condition (3), i.e., it implements the transformation from $\mathbf{e_x}$ to $\mathbf{e_{x+k(\mathcal{P})}}$ which informally says that the target measurement result is changed if randomized compiling induces a bit flip. If $\mathcal{P}_i=X$ or $Y$, and the stabilizer on the measurement is $s_i = Z$, then the stabilizer transforms to $-s_i$. If $s_i = I$, the measurement result is discarded (and thus can be classically manipulated without consequence). 
Therefore, by averaging over the Pauli operators, the error channel for dressed layer in an QIRB circuit is  a uniform stochastic instrument, 
\begin{align}
    \mathbb{E}_{l_3}\mathbb{E}_{l_2}\mathbb{E}_{l_1} \Lambda(l_3) \Lambda(l_2) \Lambda(l_1)  = \mathbb{E}_{l'_3}\mathbb{E}_{l_2}\mathbb{E}_{l'_1} \mathcal{U}(l'_3) \overline{\mathcal{M}}(l_2)\mathcal{U}(l'_1) \mathcal{E}, 
\end{align}
where
\begin{equation}
\overline{\mathcal{M}}(l_2) = \sum_{a,b,j \in \mathbb{Z}_2^m} \mathcal{U}(l_2)\mathcal{T}_{a,b} \otimes \sket{j+b}\sbra{j+a} \otimes \mathbf{e_j},
\end{equation}
where $\mathcal{U}(l_2)$ denotes the unitary superoperator of the error-free operation on the unmeasured qubits in $l_2$, $\mathbf{e_j}$ denotes the classical outcome, and $\sket{j+a} = \sket{\ket{j+a}\bra{j+a}}$ is the vector in Hilbert-Schmidt space representing the computational basis state $\ket{j+a}$ specified by the bitstring $j+a$. 

\section{Theory of QIRB}\label{app:theory}
In this supplemental note, we provide a (mostly) self-contained presentation of how to compute the QIRB error rate $\romega$ when using measurements with and without post-measurement resets (resp. QIRB-r and QIRB+r). We begin by reiterating our assumptions, which we then use to derive explicit formulations of noise models used in our QIRB-r and QIRB+r theory. After establishing explicit noise models, we show how to calculate $\romega$ for both QIRB+r (Supplemental Note~\ref{app:theory:ssec:mcmrb+r}) and QIRB-r (Supplemental Note~\ref{app:theory:ssec:mcmrb-r}) in terms of individual error probabilities. We conclude with a proof of a bound on $\romega$,
\begin{equation*}
    \frac{3\eomega}{4} \leq \romega \leq \frac{3\eomega}{2}.
\end{equation*}
\subsection{Assumptions and noise models}\label{app:theory:ssec:assumptions-and-noise}
We make the following three assumptions: 
\begin{enumerate}
    \item\label{ass:uniform} the pre-layer Pauli $s_i$ is uniformly random throughout a circuit $C$,
    \item\label{ass:markovian} the noise is Markovian,
    \item\label{ass:uniform-stochastic} and that each noisy layer $L$ is described by a uniform stochastic instrument,
    \begin{equation}\label{app:eqn:uniform-stochastic-instruments}
        \varphi(L) = \sum_{a,b,j\in\mathbb{Z}_{2}^{m}}{\mathcal{U}(L)\Tab{a}{b} \otimes \sket{j+b}\sbra{j+a}\otimes \mathbf{e_j}},
    \end{equation}
\end{enumerate}

Uniform stochastic instruments are nice because they only model three kinds of errors~\cite{Bea23}:
\begin{enumerate}
    \item readout error, i.e., the system was in the state $\ket{j+a}$, but is reported as $\mathbf{e_j}$,
    \item wrong basis state errors, i.e., the system is left in the state $\ket{j+b}$ when the outcome $\mathbf{e_j}$ is read,
    \item and, a stochastic error is applied to the unmeasured qubits which is independent of the reported outcome. 
\end{enumerate}
Despite the simplicity of uniform stochastic instruments, the form used in equation~\ref{app:eqn:uniform-stochastic-instruments} is challenging to work with. Instead, we want to re-interpret a uniform stochastic instrument as the probabilistic sum of maps with the following decomposition: (i) a pre-measurement X-type Pauli error $\Xa{a}$ on $\Hmeas$; (ii) a post-measurement X-type Pauli error $\Xa{b}$ on $\Hmeas$; and (iii) an arbitrary Pauli error $P$ on $\Hunmeas$. Alternatively, if $L$ is an $n$-qubit, $m$-measurement layer:
\begin{equation}\label{app:theory:eqn:alternate-formulation}
    \varphi(L) = \sum_{P\in\mathbb{P}_{n-m}}\sum_{a,b\in\mathbb{Z}_2^{m}}{p(a,P,b)\sum_{j\in\mathbb{Z}^m_2}[\mathcal{U}(L)\circ\mathcal{U}(P)]\otimes[\mathcal{U}(\Xa{b})\circ\sket{j}\sbra{j}\circ\mathcal{U}(\Xa{a})]\otimes \mathbf{e_j}}.
\end{equation}
We model each layer in QIRB-r as in equation~\ref{app:theory:eqn:alternate-formulation}. We slightly modify the noise model in QIRB+r to account for the post-measurement resets:
\begin{equation}\label{app:theory:eqn:reset-error-model}
    \varphi(L) = \sum_{P\in\mathbb{P}_{n-m}}\sum_{a,b\in\mathbb{Z}_2^{m}}{p_{L}(a,P,b)\sum_{j\in\mathbb{Z}^m_2}[\mathcal{U}(L)\circ\mathcal{U}(P)]\otimes[\mathcal{U}(\Xa{b})\circ\sket{0}\sbra{j}\circ\mathcal{U}(\Xa{a})]\otimes \mathbf{e_j}}.
\end{equation}

\subsection{Calculating $\romega$ for QIRB+r}\label{app:theory:ssec:mcmrb+r}
We now show how to compute an individual error's contribution to the $\Omega$-average layer error rate measured by QIRB+r.
Recall that $\romega$ has the following form (equation~\ref{sec:theory:eqn:r-omega-sum}):
\begin{equation}
    \romega = \E_{L\in\Omega}\Big[\sum_{a,P,b}\lambda_{L,a,P,b}\Big].
\end{equation}
Thus, an error $(a,P,b)$ in an $n$-qubit, $m$-measurement layer $L$'s individual contribution to $\romega$ is quantified by $\lambda_{L,a,P,b}$, or how sensitive $\romega$ is to the error. We will show that 
\begin{equation}\label{app:theory:eqn:mcmrb+r-lambda}
    \lambda_{L,a,P,b} =     
    \begin{cases}
        2[\panti{a}-2\panti{a}\panti{b}+\panti{b}]p_L(a,P,b), & P = \mathbb{I},\\
        p_L(a,P,b), & P \neq \mathbb{I},
    \end{cases}
\end{equation}
where $\panti{a}$ and $\panti{b}$ are the probabilities of $\mathbb{I}\otimes\Xa{a}$ and $\mathbb{I}\otimes\Xa{b}$ anti-commuting with a pre-measurement and post-measurement Pauli, respectively.

As described in Section~\ref{sec:theory}, $\romega$ measures twice the rate at which errors flip stabilizer states to anti-stabilizer states (and vice-a-versa) in an QIRB circuit. Therefore, $\lambda_{L,a,P,b}$ is twice the rate at which the error $(a,P,b)$ causes such a transition multiplied by the probability of the error having occurred (conditioned on the probability of selecting the circuit layer). These transitions occur precisely when an odd number of the following criteria are met:
\begin{enumerate}
    \item\label{app:theory:cond:mcmrb+r-one} The pre-measurement error $\mathbb{I}\otimes\Xa{a}$ causes the readout to report a measurement bitstring corresponding to a computational basis state that is anti-stabilized by $\spre{\ }$ instead of stabilized (or vice-a-versa).
    \item\label{app:theory:cond:mcmrb+r-two} The post-measurement error $\mathbb{I}\otimes\Xa{b}$ leads to the preparation of a post-measurement state that is anti-stabilized by $\spost{\ }$.
    \item\label{app:theory:cond:mcmrb+r-three} The error on the unmeasured qubits $P\otimes\mathbb{I}$ anti-commutes with $\spost{\ }$.
\end{enumerate}
Conditions~\ref{app:theory:cond:mcmrb+r-one} and~\ref{app:theory:cond:mcmrb+r-two} are equivalent to $\mathbb{I}\otimes\Xa{a}$ anti-commuting with $\spre{\ }$ and $\mathbb{I}\otimes\Xa{b}$ anti-commuting with $\spost{\ }$, respectively. Hence, we can compute $\lambda_{L,a,P,b}$ by determining how often an odd number of the above conditions are met. As each criteria is independent of the other, we will calculate probability each event occurring independently.

Both condition~\ref{app:theory:cond:mcmrb+r-one} and condition~\ref{app:theory:cond:mcmrb+r-two} occur at the same rate. Without loss of generality, we consider condition~\ref{app:theory:cond:mcmrb+r-one}. Treating $\spre{\ }$ as a random variable, condition~\ref{app:theory:cond:mcmrb+r-one} is met whenever $\Xa{a}$ anti-commutes with an $m$-qubit Pauli $s$ drawn from the distribution
\begin{equation}\label{app:theory:eqn:pauli-distribution}
    \mu(s) =     
    \begin{cases}
        0, & \text{if } s \text{ is not Z-type},\\
        \frac{3^{wt(s)}}{4^m}, & \text{else,}
    \end{cases}.
\end{equation}
where $wt(s)$ is the number of non-identity tensor elements in $s$. We denote this probability by $\panti{a}$. It is equal to 
\begin{equation}\label{app:theory:eqn:panti}
   \panti{a} = \sum\limits_{\substack{1\leq i\leq wt(a)\textrm{,} \\ i \textrm{ is odd}}}{{wt(a) \choose i} \Big(\frac{3}{4}\Big)^i\Big(\frac{1}{4}\Big)^{wt(a)-i}}
\end{equation}
Analogously, condition~\ref{app:theory:cond:mcmrb+r-two} is met with probability $\panti{b}$.

Condition~\ref{app:theory:cond:mcmrb+r-three} is met at different rates depending upon if $P = \mathbb{I}$ or not. When $P = \mathbb{I}$, it anti-commutes with $\spost{\ }$ with probability 0. When $P \neq \mathbb{I}$, condition~\ref{app:theory:cond:mcmrb+r-three} is met with probability $1/2$.

Having determined the probabilities of conditions~\ref{app:theory:cond:mcmrb+r-one},~\ref{app:theory:cond:mcmrb+r-two}, and~\ref{app:theory:cond:mcmrb+r-three}, we may easily calculate the probability of an odd number of conditions being met. We specialize to the case where $P \neq \mathbb{I}$. A single condition is met with probability
\begin{equation}
    \frac{\panti{a}[1-\panti{b}]}{2} + \frac{[1-\panti{a}][1-\panti{b}]}{2} + \frac{[1-\panti{a}]\panti{b}}{2} = \frac{1-\panti{a}\panti{b}}{2}.
\end{equation}
Whereas all three conditions are met with probability
\begin{equation}
    \frac{\panti{a}\panti{b}}{2}.
\end{equation}
In total, an odd number of conditions are met with probability
\begin{equation}
    \frac{1-\panti{a}\panti{b}}{2} + \frac{\panti{a}\panti{b}}{2} = \frac{1}{2}.
\end{equation}
Thus $\lambda_{L,a,P,b} = 2*1/2*p_L(a,P,b)$ when $P \neq \mathbb{I}$. The analogous calculation establishes the $P = \mathbb{I}$ case. 

This concludes our proof of an individual error's contribution to $\romega$ in QIRB+r. The error's contribution depends upon the layer $L$ that it occurs in, how often the error occurs in the layer, and $\lambda_{L,a,P,b}$, or QIRB+r's sensitivity to the error occuring in layer $L$. 

\subsection{Calculating $\romega$ for QIRB-r}\label{app:theory:ssec:mcmrb-r}
QIRB-r enjoys the same sensitivity to errors as QIRB+r, although the reasoning is slightly different. As before, $\romega$ measures the average rate at which errors flip stabilizer states to anti-stabilizer states (and vice-a-versa) in an QIRB circuit. The only difference in our analysis is that we must consider the classically conditioned X gates included in an QIRB-r circuit. Therefore condition~\ref{app:theory:cond:mcmrb+r-two} becomes
\begin{enumerate}
    \item[($2^\prime$)]\label{app:theory:cond:mcmrb-r-two-prime} The classically controlled post-measurement $X$ gates and the post-measurement error $\mathbb{I}\otimes\Xa{b}$ lead to the preparation of a state that is anti-stabilized by $\spost{\ }$.
\end{enumerate}
If $\mathsf{e_j}$ is the measurement result and $\spost{\mathrm{meas}} = \Za{d}$ on the measured qubits, then condition~\ref{app:theory:cond:mcmrb-r-two-prime} is equivalent to \begin{equation}\label{eqn:bad-corr-factor}
    d\cdot (j+b) \not\equiv d\cdot j \mod 2 \Leftrightarrow d\cdot b \not\equiv 0 \mod 2.
\end{equation}
Of course, this is again equivalent to $\mathbb{I}\otimes\Xa{b}$ anti-commuting with $\spost{\ }$. Hence, QIRB-r enjoys the same sensitivity to errors as QIRB+r. 

\subsection{Bounding $\romega$}\label{app:theory:ssec:bounding-e-omega}
We now give a proof of the bounds on $\romega$ for QIRB (equation~\ref{eqn:mcmrb-romega-bound}),
\begin{equation*}
    \frac{3\eomega}{4} \leq \romega \leq \frac{3\eomega}{2},
\end{equation*}
in terms of the $\Omega$-distributed average layer infidelity $\eomega$. Establishing these bounds reduces to analyzing $\romega$'s sensitivity to errors $(a, P, b)$ where $P = \mathbb{I}$ (i.e., the non-constant terms in Tables~\ref{tab:mcmrb-contributions}). In fact, equation~\ref{eqn:mcmrb-romega-bound} follows immediately from the following inequality
\begin{equation}\label{app:theory:eqn:mcmrb-factor}
    \frac{3}{8} \leq \panti{a}-2\panti{a}\panti{b}+\panti{b} \leq \frac{3}{4},
\end{equation}
which we prove in the following lemma. 

\begin{lemma}[Equation~\ref{app:theory:eqn:mcmrb-factor}]\label{app:theory:lemma:monomial-bound}
On the domain $D = \lbrace (a, b) \mid a, b\in\mathbb{Z}_{2}^m\rbrace\setminus\lbrace (0,0)\rbrace$,
    \begin{equation}\label{app:theory:eqn:mcmrb+r-factor}
        \frac{3}{8} \leq \panti{a}-2\panti{a}\panti{b}+\panti{b} \leq \frac{3}{4}.
    \end{equation}
\end{lemma}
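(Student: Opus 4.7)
The plan is to reduce both bounds to a one-line extremization of a function of a single integer parameter, by first obtaining a closed form for $\panti{\cdot}$ and then collapsing the bilinear quantity $x+y-2xy$ into a single power of $-1/2$.

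First, I would rewrite $\panti{a}$ in closed form. Applying the identity $\sum_{i\text{ odd}}\binom{n}{i}p^i(1-p)^{n-i} = \tfrac{1}{2}\bigl[1-(1-2p)^n\bigr]$ (a one-line consequence of the binomial theorem applied to $(p+(1-p))^n$ and $((1-p)-p)^n$), with $p = 3/4$, to equation~(\ref{app:theory:eqn:panti}) gives
\begin{equation*}
    \panti{a} = \frac{1 - (-1/2)^{w_a}}{2}, \qquad w_a \coloneqq wt(a),
\end{equation*}
and analogously for $b$. In particular $1 - 2\panti{a} = (-1/2)^{w_a}$.

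Next, I would apply the elementary factorization $x + y - 2xy = \tfrac{1}{2}\bigl[1 - (1-2x)(1-2y)\bigr]$. Substituting $x = \panti{a}$ and $y = \panti{b}$ yields
\begin{equation*}
    \panti{a} + \panti{b} - 2\panti{a}\panti{b} \;=\; \frac{1 - (-1/2)^{w_a + w_b}}{2}.
\end{equation*}
Thus the middle term of the claimed inequality depends only on $w \coloneqq w_a + w_b$, and the bound reduces to bounding the one-parameter sequence $\tfrac{1}{2}\bigl[1 - (-1/2)^w\bigr]$ for $w \geq 1$; the constraint $(a,b)\neq(0,0)$ guarantees $w \geq 1$.

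Finally, over positive integers $w$ the sequence $(-1/2)^w$ is bounded in $[-1/2,1/4]$, with minimum $-1/2$ attained at $w = 1$ and maximum $1/4$ attained at $w = 2$. Substituting these extreme values of $(-1/2)^w$ gives, respectively, the upper bound $3/4$ and the lower bound $3/8$, completing the proof. I expect the only real subtlety to be spotting the algebraic identity $x+y-2xy = \tfrac{1}{2}\bigl[1-(1-2x)(1-2y)\bigr]$: without it, one is forced to analyze a two-dimensional discrete surface (as the earlier remark in the main text hints at), whereas with it the argument collapses to elementary reasoning about a single geometric sequence.
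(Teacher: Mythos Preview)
Your argument is correct and is genuinely different from the paper's. The paper treats $f(x,y)=x+y-2xy$ as a real two-variable function on a suitable region, invokes the second partial derivative test to push the extrema to the boundary, and then inspects the corner points. You instead exploit the closed form $\panti{a}=\tfrac{1}{2}\bigl[1-(-\tfrac12)^{w_a}\bigr]$ together with the multiplicative identity $1-2(x+y-2xy)=(1-2x)(1-2y)$ to collapse the whole expression to $\tfrac{1}{2}\bigl[1-(-\tfrac12)^{w_a+w_b}\bigr]$, a function of a single nonnegative integer. Your route is shorter, needs no calculus, and makes the extremal cases (total Hamming weight $1$ and $2$) explicit and obviously sharp; it also delivers the exact value at every weight, not just bounds. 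The paper's continuous-optimization approach, by contrast, does not depend on the specific value $3/4$ that produces the closed form, so it would survive if the single-qubit anticommutation probability were something other than $3/4$; your reduction leans on the resulting geometric sequence in $-\tfrac12$ and would need reworking in that more general setting.
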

\begin{proof}
    It is easier to establish the inequalities in equation~\ref{app:theory:eqn:mcmrb-factor} by establishing the same bounds for the function $f(x,y) = x-2xy-y$ on the real-value domain $\tilde{D} \coloneqq (0,\infty)\times (0,\infty) \setminus (0,1)\times (0,1)$. In this case, the second partial derivative test establishes that $f(x,y)$ obtains its extreme values on the boundary of its domain:
    \begin{equation*}
        \partial\tilde{D} = \mathrm{Ray}[(1,0)\rightarrow (\infty,0)]\cup \mathrm{Ray}[(0,1)\rightarrow (0,\infty)]\cup \mathrm{LineSeg}[(1,0)\rightarrow (1,1)] \cup \mathrm{LineSeg}[(0,1)\rightarrow (1,1)],
    \end{equation*}
    where $\mathrm{Ray}$ denotes the ray between two points and $\mathrm{LineSeg}$ denotes the line segment between two points. Again, the second derivative test shows that $f(x,y)$ obtains its extreme values on the boundaries of its boundary segments. Put differently, $f(x,y)$ obtains its extrema at the points $(1,0), (0,1),$ and/or $(1,1)$. Inspection shows that $f(x,y)$ obtains a maximum of $f(1,1) = 3/4$ and a minimum of $f(1,0) = f(0,1) = 3/8$. 
\end{proof}

\section{QIRB simulations}\label{app:simulations}
\subsection{Simulations}\label{app:simulations:ssec:simulation-details}

\begin{table*}[h!]
\centering
\begin{tabular}{|c|l|l|}
\cline{1-3}
Circuit width & Two-qubit gate and MCM density & $r_\Omega$ (\%) \\\cline{1-3}
\multirow{18}{*}{2Q}& $p_\cnot = 0.2$ and $p_\mcm = 0.01$ & $0.672\pm0.056$ \\\cline{2-3}
& $p_\cnot = 0.2$ and $p_\mcm = 0.02$ & $0.711\pm0.047$ \\\cline{2-3}
& $p_\cnot = 0.2$ and $p_\mcm = 0.05$ & $0.800\pm0.045$ \\\cline{2-3}
& $p_\cnot = 0.2$ and $p_\mcm = 0.10$ & $0.924\pm0.046$ \\\cline{2-3}
& $p_\cnot = 0.2$ and $p_\mcm = 0.25$ & $1.381\pm0.060$ \\\cline{2-3}
& $p_\cnot = 0.2$ and $p_\mcm = 0.50$ & $2.139\pm0.083$ \\\cline{2-3}
& $p_\cnot = 0.35$ and $p_\mcm = 0.01$ & $0.722\pm0.030$ \\\cline{2-3}
& $p_\cnot = 0.35$ and $p_\mcm = 0.02$ & $0.744\pm0.034$ \\\cline{2-3}
& $p_\cnot = 0.35$ and $p_\mcm = 0.05$ & $0.820\pm0.034$ \\\cline{2-3}
& $p_\cnot = 0.35$ and $p_\mcm = 0.10$ & $0.979\pm0.033$ \\\cline{2-3}
& $p_\cnot = 0.35$ and $p_\mcm = 0.25$ & $1.429\pm0.067$ \\\cline{2-3}
& $p_\cnot = 0.35$ and $p_\mcm = 0.50$ & $2.201\pm0.048$ \\\cline{2-3}
& $p_\cnot = 0.5$ and $p_\mcm = 0.01$ & $0.773\pm0.013$  \\\cline{2-3}
& $p_\cnot = 0.5$ and $p_\mcm = 0.02$ & $0.769\pm0.045$  \\\cline{2-3}
& $p_\cnot = 0.5$ and $p_\mcm = 0.05$ & $0.903\pm0.033$  \\\cline{2-3}
& $p_\cnot = 0.5$ and $p_\mcm = 0.10$ & $1.031\pm0.034$ \\\cline{2-3}
& $p_\cnot = 0.5$ and $p_\mcm = 0.25$ & $1.481\pm0.055$ \\\cline{2-3}
& $p_\cnot = 0.5$ and $p_\mcm = 0.50$ & $2.187\pm0.048$ \\\cline{1-3}
\multirow{18}{*}{4Q}& $p_\cnot = 0.2$ and $p_\mcm = 0.01$ & $1.284\pm0.033$ \\\cline{2-3}
& $p_\cnot = 0.2$ and $p_\mcm = 0.02$ & $1.275\pm0.040$ \\\cline{2-3}
& $p_\cnot = 0.2$ and $p_\mcm = 0.05$ & $1.356\pm0.068$ \\\cline{2-3}
& $p_\cnot = 0.2$ and $p_\mcm = 0.10$ & $1.521\pm0.050$ \\\cline{2-3}
& $p_\cnot = 0.2$ and $p_\mcm = 0.25$ & $1.935\pm0.080$ \\\cline{2-3}
& $p_\cnot = 0.2$ and $p_\mcm = 0.50$ & $2.648\pm0.065$ \\\cline{2-3}
& $p_\cnot = 0.35$ and $p_\mcm = 0.01$ & $1.307\pm0.039$ \\\cline{2-3}
& $p_\cnot = 0.35$ and $p_\mcm = 0.02$ & $1.319\pm0.073$ \\\cline{2-3}
& $p_\cnot = 0.35$ and $p_\mcm = 0.05$ & $1.405\pm0.072$ \\\cline{2-3}
& $p_\cnot = 0.35$ and $p_\mcm = 0.10$ & $1.576\pm0.060$ \\\cline{2-3}
& $p_\cnot = 0.35$ and $p_\mcm = 0.25$ & $2.030\pm0.094$ \\\cline{2-3}
& $p_\cnot = 0.35$ and $p_\mcm = 0.50$ & $2.726\pm0.110$ \\\cline{2-3}
& $p_\cnot = 0.5$ and $p_\mcm = 0.01$ & $1.342\pm0.034$ \\\cline{2-3}
& $p_\cnot = 0.5$ and $p_\mcm = 0.02$ & $1.416\pm0.055$ \\\cline{2-3}
& $p_\cnot = 0.5$ and $p_\mcm = 0.05$ & $1.479\pm0.066$ \\\cline{2-3}
& $p_\cnot = 0.5$ and $p_\mcm = 0.10$ & $1.622\pm0.051$ \\\cline{2-3}
& $p_\cnot = 0.5$ and $p_\mcm = 0.25$ & $2.039\pm0.056$ \\\cline{2-3}
& $p_\cnot = 0.5$ and $p_\mcm = 0.50$ & $2.759\pm0.084$ \\\cline{1-3}
\multirow{18}{*}{6Q}& $p_\cnot = 0.2$ and $p_\mcm = 0.01$ & $1.872\pm0.081$ \\\cline{2-3}
& $p_\cnot = 0.2$ and $p_\mcm = 0.02$ & $1.873\pm0.102$ \\\cline{2-3}
& $p_\cnot = 0.2$ and $p_\mcm = 0.05$ & $1.964\pm0.060$ \\\cline{2-3}
& $p_\cnot = 0.2$ and $p_\mcm = 0.10$ & $2.174\pm0.055$ \\\cline{2-3}
& $p_\cnot = 0.2$ and $p_\mcm = 0.25$ & $2.475\pm0.078$ \\\cline{2-3}
& $p_\cnot = 0.2$ and $p_\mcm = 0.50$ & $3.256\pm0.215$ \\\cline{2-3}
& $p_\cnot = 0.35$ and $p_\mcm = 0.01$ & $1.872\pm0.090$ \\\cline{2-3}
& $p_\cnot = 0.35$ and $p_\mcm = 0.02$ & $1.912\pm0.063$ \\\cline{2-3}
& $p_\cnot = 0.35$ and $p_\mcm = 0.05$ & $2.046\pm0.074$ \\\cline{2-3}
& $p_\cnot = 0.35$ and $p_\mcm = 0.10$ & $2.183\pm0.102$ \\\cline{2-3}
& $p_\cnot = 0.35$ and $p_\mcm = 0.25$ & $2.559\pm0.097$ \\\cline{2-3}
& $p_\cnot = 0.35$ and $p_\mcm = 0.50$ & $3.260\pm0.119$ \\\cline{2-3}
& $p_\cnot = 0.5$ and $p_\mcm = 0.01$ & $1.969\pm0.073$ \\\cline{2-3}
& $p_\cnot = 0.5$ and $p_\mcm = 0.02$ & $1.932\pm0.078$ \\\cline{2-3}
& $p_\cnot = 0.5$ and $p_\mcm = 0.05$ & $2.059\pm0.074$ \\\cline{2-3}
& $p_\cnot = 0.5$ and $p_\mcm = 0.10$ & $2.202\pm0.072$ \\\cline{2-3}
& $p_\cnot = 0.5$ and $p_\mcm = 0.25$ & $2.615\pm0.107$ \\\cline{2-3}
& $p_\cnot = 0.5$ and $p_\mcm = 0.50$ & $3.321\pm0.109$ \\\cline{1-3}
\end{tabular}
\caption{\textbf{QIRB simulation results.} QIRB error rates $r_{\Omega}$ and $1\sigma$ error bars from our LoQS simulations. The numbers were generated by running 8 different simulations for a fixed set of hyperparameters (e.g., sampling parameters, circuit depths).}\label{app:simulations:tab:results}
\end{table*}

We performed simulations of QIRB on noisy $n = 2, 4, \textrm{and } 6$ qubit QPUs using pyGSTi~\cite{Nie20} to generate the QIRB circuits and the Logical Qubit Simulator (LoQS)~\cite{LoQS} to run the simulations. We used a native gate set of all 24 single-qubit Clifford gates and two-qubit $\cnot$ gates. Measurements included post-measurement resets.

Gate errors were modelled as gate-independent, local depolarizing channels preceding the gates and supported on the same qubit acted upon by the gate. MCM and final measurement errors were modelled as bitflip error channels preceding the measurement. We used a bitflip rate of $2\%$, and one and two-qubit gate fidelities of $99.9\%$ and $99.5\%$, respectively. 

Table~\ref{app:simulations:tab:results} lists the QIRB error rate $r_\Omega$ generated by each simulation. We also report the average $r_\Omega$ across all eight simulations for a fixed distribution $\Omega$, and the sample standard deviation of $r_\Omega$.

\subsection{Predicting $r_\Omega$ for a gate-independent, local depolarizing, and bitflip readout error model}\label{app:simulations:ssec:eomega-depolarizing}

In this section, we derive an explicit expression for the the QIRB error rate $r_\Omega$ under the gate-independent, local depolarizing, measurement bitflip error model used in our simulations. In this error model,
\begin{itemize}
\item each single-qubit gate has single-qubit depolarizing noise, with the probability of $X, Y,$ and $Z$ errors being $p_X=p_Y=p_Z=\varepsilon_1$,
\item each two-qubit gate has two-qubit depolarizing noise, with the probability of each Pauli error being $\varepsilon_2$,
\item and each (single-qubit) MCM has single-qubit bit flip error with error probability $p_X = \varepsilon_m$.
\end{itemize}
Hence single-qubit gates have a fidelity of $F_{1Q} = 1 - 3\varepsilon_1$ and two-qubit gates have a fidelity of $F_{2Q} = 1 - 15\varepsilon_2$.

Let $\ptrans{L}$ denote the rate at which errors in layer $L$ cause a state transition. Because $r_\Omega$ is the average of $2\ptrans{L}$ over $L\in\Omega(L)$, we reduce the calculation to determining $\ptrans{L}$ for an $n$-qubit layer $L$ with $k_m$ mid-circuit measurements, $k_1$ single-qubit gates, and $k_2$ two-qubit gates. Moreover, due to the tensor product structure of the noise, $\ptrans{L}$ is simply the sum of two independent events: (i) the probability of witnessing an error that anti-commutes with $\spre{\mathrm{meas}}$ and commutes with $\spre{\mathrm{unmeas}}$ and (ii) the probability of witnessing an error that commutes with $\spre{\mathrm{meas}}$ and anti-commutes with $\spre{\mathrm{unmeas}}$.

The probability of an error on the unmeasured qubits anti-commuting with a random Pauli is
\begin{equation}
    \panti{\mathrm{unmeas}} \coloneqq \frac{1 - F_{1Q}^{k_1}F_{2Q}^{k_2}}{2}.
\end{equation}

To calculate the rate at which an error on the measured qubits anti-commutes with $\spre{\mathrm{meas}}$, we average over $\mu(s)$ (equation~\ref{app:theory:eqn:pauli-distribution}). We consider a weight $w$ bitflip error. As we can have up to weight $k_m$ errors, the rate at which errors on the measured qubits anti-commute with $\spre{\mathrm{meas}}$ is
\begin{equation}\label{app:simulations:eqn:panti-mcm}
    \panti{\mathrm{meas}, k_m} \coloneqq \sum_{w=0}^{k_m}{{k_m \choose w}\epsilon_m^w(1-\epsilon_m)^{k_m-w}\panti{w}}.
\end{equation}

Altogether, we have
\begin{equation}\label{app:simulations:eqn:ptrans-ugly}
    \panti{L} = \panti{\mathrm{unmeas}}[1-\panti{\mathrm{meas}, k}] + [1-\panti{\mathrm{unmeas}}]\panti{\mathrm{meas}, k}
\end{equation}
For clarity, we define the \emph{effective fidelity} of the measured subsystem to be
\begin{equation}\label{app:simulations:eqn:mcm-fidelity}
    F^{k_m}_{\mcm} \coloneqq 1 - 2\panti{\mathrm{meas}, k_m}.
\end{equation}
Then, we can re-write equation~\ref{app:simulations:eqn:ptrans-ugly} as
\begin{equation}\label{app:simulations:eqn:ptrans-nice}
    \ptrans{L} = \frac{1 - F_{1Q}^{k_1}F_{2Q}^{k_2}F_{\mcm}^{k_m}}{2}.
\end{equation}

\subsection{Error rate extraction}\label{app:simulations:ssec:error-rate-extraction}
We calculated the estimated error rates reported in the paper by fitting error rates models (ERMs)~\cite{Hot23} to the data. ERMs are a class of parameterized models used to model a circuit's overall error rate. In an ERM, each circuit component is assigned an error rate, the set of which is denoted $\mathcal{E}$. An ERM predicts a circuit's overall error rate by counting how often each circuit component appears in the circuit, and then using those counts as inputs to a $\mathcal{E}$-parametrized function.

We used ERMs that predict $\langle s_C\rangle$ by effectively assuming the gate-independent, local depolarizing, bitflip readout error model described in~\ref{app:simulations:ssec:eomega-depolarizing}. Each ERM  $E$ is a four-parameter model, consisting of an single-qubit gate error rate $\varepsilon_{1Q}$, a two-qubit gate error rate $\varepsilon_{2Q}$, an MCM error rate $\varepsilon_{\mcm}$, and a generic SPAM error rate $\varepsilon_{\mathrm{SPAM}}$. The ERM's prediction of $\langle s_C\rangle$ for a depth-$d$ circuit $C$ is given by:
\begin{equation}
    E(C) = \varepsilon_{\mathrm{SPAM}}\prod_{i=0}^{d+1}{(1-\varepsilon_{1Q})^{k_{1,i}}(1-\varepsilon_{2Q})^{k_{2,i}}F_{\mcm}^{k_{m,i}}(\varepsilon_{\mcm})},
\end{equation}
where $k_{1,i}$ and $k_{2,i}$ are the number of one- and two-qubit gates in $L_i$, respectively, and $F_{\mcm}^{k_{m,i}}(\varepsilon_{\mcm})$ is the effective fidelity of layer $L_i$'s measured subsystem assuming a bitflip readout error of $\varepsilon_{\mcm}$.

Each ERM was fit using the data from several QIRB simulations or experiments, which differed only in the $p_{\cnot}$ and $p_{\mcm}$ used. For instance, when fitting an ERM to the two-qubit simulations in Section~\ref{sec:simulations}, we fit the ERM to the data in all of the two-qubit QIRB simulations listed in Supplemental Note~\ref{app:simulations:ssec:simulation-details}. We chose to fit each ERM to multiple simulations or experiments in order to improve the stability of the fit. 

We fit each ERM by minimizing the mean squared error of an ERM's prediction for $\langle s_C\rangle$ with its actual value using scipy's built-in Nelder-Mead optimizer

\section{\algiers\ demonstrations}\label{app:ibmq-algiers-experiments}

\begin{table*}[h!]
\begin{tabular}{|l|l|l|l|}
\cline{1-4}
Circuit width & Two-qubit gate and MCM density & $r_\Omega$ (with XXDD) & $r_\Omega$ (w/o XXDD) \\\cline{1-4}
5Q & $p_\cnot = .2$ and $p_{\mcm} = .2$ & $3.31\pm.19$ & $4.43\pm.32$ \\\cline{1-4}
5Q & $p_\cnot = .2$ and $p_{\mcm} = .3$ & $4.59\pm.23$ & $5.97\pm.34$ \\\cline{1-4}
5Q & $p_\cnot = .2$ and $p_{\mcm} = .4$ & $5.64\pm.24$ & $7.51\pm.34$ \\\cline{1-4}
5Q & $p_\cnot = .3$ and $p_{\mcm} = .2$ & $3.33\pm.22$ & $4.40\pm.36$ \\\cline{1-4}
5Q & $p_\cnot = .3$ and $p_{\mcm} = .3$ & $4.24\pm.25$ & $5.38\pm.33$ \\\cline{1-4}
5Q & $p_\cnot = .3$ and $p_{\mcm} = .4$ & $6.30\pm.25$ & $7.89\pm.34$ \\\cline{1-4}
5Q & $p_\cnot = .4$ and $p_{\mcm} = .2$ & $3.58\pm.28$ & $5.09\pm.32$ \\\cline{1-4}
5Q & $p_\cnot = .4$ and $p_{\mcm} = .3$ & $5.65\pm.25$ & $7.45\pm.43$ \\\cline{1-4}
5Q & $p_\cnot = .4$ and $p_{\mcm} = .4$ & $7.62\pm.39$ & $9.92\pm.53$ \\\cline{1-4}
10Q & $p_\cnot = .2$ and $p_{\mcm} = .2$ & $5.85\pm.28$ & $7.84\pm.55$ \\\cline{1-4}
10Q & $p_\cnot = .2$ and $p_{\mcm} = .3$ & $7.81\pm.50$ & $10.21\pm.55$ \\\cline{1-4}
10Q & $p_\cnot = .2$ and $p_{\mcm} = .4$ & $8.78\pm.38$ & $12.3\pm.81$ \\\cline{1-4}
10Q & $p_\cnot = .3$ and $p_{\mcm} = .2$ & $6.23\pm.34$ & $8.40\pm.59$ \\\cline{1-4}
10Q & $p_\cnot = .3$ and $p_{\mcm} = .3$ & $8.52\pm.39$ & $10.81\pm.55$ \\\cline{1-4}
10Q & $p_\cnot = .3$ and $p_{\mcm} = .4$ & $9.73\pm.43$ & $13.0\pm.65$ \\\cline{1-4}
10Q & $p_\cnot = .4$ and $p_{\mcm} = .2$ & $6.41\pm.26$ & $8.46\pm.38$ \\\cline{1-4}
10Q & $p_\cnot = .4$ and $p_{\mcm} = .3$ & $8.02\pm.46$ & $11.10\pm.71$ \\\cline{1-4}
10Q & $p_\cnot = .4$ and $p_{\mcm} = .4$ & $9.96\pm.41$ & $13.80\pm.88$ \\\cline{1-4}
15Q & $p_\cnot = .2$ and $p_{\mcm} = .2$ & $9.42\pm.47$ & $12.65\pm.85$\\\cline{1-4}
15Q & $p_\cnot = .2$ and $p_{\mcm} = .3$ & $11.92\pm.44$ & $16.13\pm.91$ \\\cline{1-4}
15Q & $p_\cnot = .2$ and $p_{\mcm} = .4$ & $13.86\pm.71$ & $20.38\pm1.66$ \\\cline{1-4}
15Q & $p_\cnot = .3$ and $p_{\mcm} = .2$ & $10.14\pm.57$ & $13.12\pm.71$ \\\cline{1-4}
15Q & $p_\cnot = .3$ and $p_{\mcm} = .3$ & $11.04\pm.65$ & $14.31\pm.97$ \\\cline{1-4}
15Q & $p_\cnot = .3$ and $p_{\mcm} = .4$ & $13.26\pm.69$ & $18.44\pm1.26$ \\\cline{1-4}
15Q & $p_\cnot = .4$ and $p_{\mcm} = .2$ & $10.20\pm.69$ & $13.86\pm1.00$ \\\cline{1-4}
15Q & $p_\cnot = .4$ and $p_{\mcm} = .3$ & $13.54\pm.63$ & $19.35\pm1.24$ \\\cline{1-4}
15Q & $p_\cnot = .4$ and $p_{\mcm} = .4$ & $14.42\pm.73$ & $21.81\pm1.41$ \\\cline{1-4}
\end{tabular}
\caption{\textbf{\algiers\ QIRB error rates.} QIRB error rates $r_{\Omega}$ from our \algiers\ demonstrations. The error rates are reported as percents.}\label{tab:ibmq-algiers-rb-numbers}
\end{table*}

\begin{table*}[h!]
\begin{tabular}{|l|l|l|l|l|l|l|}
\cline{1-7}
\text{Circuit width} & $\varepsilon_{\mathrm{1Q}}$ (w/ XXDD) & $\varepsilon_{\mathrm{1Q}}$ (w/o XXDD) & $\varepsilon_{\mathrm{2Q}}$ (w/ XXDD) & $\varepsilon_{\mathrm{2Q}}$ (w/o XXDD) & $\varepsilon_{\mcm}$ (w/ XXDD) & $\varepsilon_{\mcm}$ (w/o XXDD) \\\cline{1-7}
5Q & $0.11\pm.03$ & $0.13\pm.04$ & $2.21\pm.34$ & $2.77\pm.37$ & $7.98\pm.18$ & $10.74\pm.25$ \\\cline{1-7}
10Q & $0.26\pm.03$ & $0.29\pm.03$ & $2.40\pm.55$ & $3.01\pm.61$ & $10.84\pm.33$ & $16.25\pm.36$ \\\cline{1-7}
15Q & $0.24\pm.04$ & $0.35\pm.07$ & $5.27\pm1.63$ & $1.52\pm3.10$ & $16.19\pm.53$ & $26.14\pm.46$ \\\cline{1-7}
\end{tabular}
\caption{\textbf{\algiers\ estimated error rates.} Estimated one-qubit gate ($\varepsilon_{\mathrm{1Q}}$), two-qubit gate ($\varepsilon_{\mathrm{2Q}}$), and MCM error rates ($\varepsilon_{\mcm}$) from our \algiers\ demonstrations with and without XXDD. All errors are reported as percents.}\label{tab:ibmq-algiers-errors}
\end{table*} 

We performed demonstrations of QIRB on the first 5, 10, and 15 qubits in \algiers. We used the same layer rules for our \algiers\ demonstrations as in our simulations and trapped-ion experiments, except we restricted the placement of $\cnot$ gates to edges in \algiers' connectivity graph and used $p_{\cnot}$ equal to $.2$, $.35$, and $.5$ along with $p_{\mcm}$ equal to $.2$, $.3$, and $.4$. Our \algiers\ demonstrations consisted of 30 circuits at each of the following depths: 0, 1, 4, 8, and 16; performed 1000 times each. The QIRB error rate for each demonstration is located in Table~\ref{tab:ibmq-algiers-rb-numbers}. Bootstrapped standard deviations (obtained from generating 30 bootstrapped samples) are reported as well.

Table~\ref{tab:ibmq-algiers-errors} contains the average one-qubit gate, two-qubit gate, and MCM error rates extracted from our IBM demonstrations. These error rates were extracted using the the process outlined in Supplemental Note~\ref{app:simulations:ssec:error-rate-extraction}, and the standard deviation were obtained from 30 bootstrapped samples. 

We also used \algiers\ calibration data (see Supplemental Note~\ref{app:ibmq-algiers-calibration-data}) to predict $r_{\Omega}$ and $\varepsilon_{\mcm}$ for our demonstrations. The predicted $\varepsilon_{\mcm}$ is the average of the reported readout error of the involved qubits. Predictions of $r_{\Omega}$ we made by averaging the gate errors on the involved qubits and edges, and then making a prediction using the local depolarizing model described in Supplemental Note~\ref{app:simulations:ssec:eomega-depolarizing}.

\section{H1-1 experiments}\label{app:h1-1-experiments}
\subsection{QIRB experiments}\label{app:h1-1-experiments:ssec:mcmrb-experiments}
We ran QIRB experiments on Quantinuum's H1-1 using the first 2 and 6 qubits in the device with and without micromotion hiding (MMH) in the device's auxiliary zones. We used the same layer rules for our H1-1 experiments as in our simulations, except we only used $p_{\cnot} = .2$ and $.35$ and $p_{\mcm}$ of $.2$ and $.3$. For each experiment, we generated a total of 15 circuits per depth at depths $d = 0$, $1$, $4$, $32$, and $128$ for a total of 75 circuits per experiment. We ran each circuit 100 times. We used the fitting procedure described in Supplemental Note~\ref{app:simulations:ssec:error-rate-extraction} to extract estimated two-qubit and MCM error rates from both the experiments and simulations. We also generated 100 bootstrapped samples to estimate the uncertainty of $\romega$ and 30 bootstrapped samples to estimate the uncertainty in the extracted error rates. 

We also ran the same circuit sets on Quantinuum's in-house H1-1 emulator to generate predicted $r_{\Omega}, \varepsilon_{\mathrm{2Q}},$ and $\varepsilon_{\mcm}$ as a baseline. Statistically significant discrepancies in the actual and baseline values suggest the presence of unmodeled noise. In order to estimate the uncertainty of our baseline values, we ran an additional 5 simulations per combination of $d$, $p_\cnot$, and $p_\mcm$. Overall, our estimated one-qubit and two-qubit gate errors are in line with Quantinuum's reported values~\cite{productsheet}, with the exception of the two-qubit gate error rate extracted from the experiments with micromotion hiding. These low values are likely due to the optimizer finding a local minimum near zero for several bootstrapped samples. 

\begin{table}[h]
  \centering
  \begin{tabular}{|c|c|c|c|c|c|c|c|c|c|}
    \hline
    \multirow{2}{*}{Circuit width} & \multicolumn{3}{c|}{One-qubit error ($\epsilon_{\mathrm{1Q}}$)} & \multicolumn{3}{c|}{Two-qubit error ($\epsilon_{\mathrm{2Q}}$)} & \multicolumn{3}{c|}{MCM error ($\epsilon_{\mcm}$)} \\
    \cline{2-10}
     & MMH & No MMH & Emulator & MMH & No MMH & Emulator & MMH & No MMH & Emulator \\
    \hline
    2 & $.04\pm.02$ & $.04\pm.02$ & $.02\pm.01.$ & $.01\pm.01$ & $.07\pm.05$ & $.16\pm.07$ & $.26\pm.06$ & $.25\pm.06$ & $.28\pm.06$\\
    \hline
    6 & $.01\pm.005$ & $.004\pm.006$ & $.01\pm.005$ & $.06\pm.06$ & $.38\pm.12$ & $.22\pm.08$ & $.29\pm.06$ & $.63\pm.09$ & $.23\pm.09$ \\
    \hline
  \end{tabular}
  \caption{\textbf{H1-1 estimated error rates.} Estimated one-qubit gate ($\varepsilon_{\mathrm{1Q}}$), two-qubit gate error rates ($\varepsilon_{\mathrm{2Q}}$), and MCM error rates ($\varepsilon_{\mcm}$) from our H1-1 experiments and H1-1 emulator simulations. All errors are reported as percents.}\label{tab:h1-1-errors}
\end{table}

\begin{table*}[h]
\begin{tabular}{|l|l|l|l|l|}
\cline{1-5}
Circuit width & Two-qubit gate and MCM density & $r_\Omega$ (w/ MMH) & $r_\Omega$ (w/o MMH) & $r_\Omega$ (emulator) \\\cline{1-5}
2Q & $p_\cnot = .20$ and $p_{\mcm} = .2$ & $.15\pm.01$ & $.13\pm.02$ & $.13\pm.01$ \\\cline{1-5}
2Q & $p_\cnot = .20$ and $p_{\mcm} = .3$ & $.17\pm.01$ & $.18\pm.02$ & $.19\pm.02$ \\\cline{1-5}
2Q & $p_\cnot = .35$ and $p_{\mcm} = .2$ & $.12\pm.01$ & $.16\pm.02$ & $.18\pm.02$ \\\cline{1-5}
2Q & $p_\cnot = .35$ and $p_{\mcm} = .3$ & $.16\pm.01$ & $.18\pm.02$ & $.21\pm.02$ \\\cline{1-5}
6Q & $p_\cnot = .20$ and $p_{\mcm} = .2$ & $.18\pm.02$ & $.27\pm.04$ & $.17\pm.01$ \\\cline{1-5}
6Q & $p_\cnot = .20$ and $p_{\mcm} = .3$ & $.22\pm.02$ & $.38\pm.02$ & $.18\pm.02$ \\\cline{1-5}
6Q & $p_\cnot = .35$ and $p_{\mcm} = .2$ & $.18\pm.03$ & $.35\pm.03$ & $.17\pm.02$ \\\cline{1-5}
6Q & $p_\cnot = .35$ and $p_{\mcm} = .3$ & $.22\pm.02$ & $.45\pm.04$ & $.25\pm.03$ \\\cline{1-5}
\end{tabular}
\caption{\textbf{QIRB error rates.} The $r_{\Omega}$ from our H1-1 experiments and H1-1 emulator simulations. All RB numbers are reported as percents. Only the 6-qubit, $p_\cnot = .35$ and $p_\mcm = .2$ results were reported in the main text.}\label{tab:h1-1-rb-numbers}
\end{table*}

\subsection{Bright-state depumping experiment}\label{app:h1-1-experiments:ssec:bright-state}

\begin{table}[h!]
    \centering
    \begin{tabular}{|c|*{5}{c|}}
        \hline
        & Qubit 0 & Qubit 1 & Qubit 2 & Qubit 3 & Qubit 4 \\
        \hline
        Without MMH & $0.0017 \pm 0.0004$ & $0.0016 \pm 0.0004$ & $0.0021 \pm 0.0005$ & $0.0016 \pm 0.0003$ & $0.0011 \pm 0.0003$ \\
        \hline
        With MMH & $0.0017\pm.0004$ & $0.0018 \pm 0.0004$ & $0.0020 \pm 0.0004$ & $0.0031 \pm 0.0006$ & $0.0014 \pm 0.0003$ \\
        \hline
    \end{tabular}
    \caption{\textbf{Bright-state depumping experimental results.} Observed process infidelity of the bright-state depumping process on the qubits in H1-1's gate zones. Results are reported as percents with $1\sigma$ error bars estimated from a parametric bootstrap.}
    \label{app:h1-1-experiments:tab:bright-state-results}
\end{table}

\begin{table}[h!]
    \centering
    \begin{tabular}{|c|*{5}{c|}}
        \hline
        & Qubit 5 & Qubit 6 & Qubit 7 & Qubit 8 & Qubit 9 \\
        \hline
        Without MMH & $0.0015 \pm 0.0004$ & $0.0022 \pm 0.0005$ & $0.0014 \pm 0.0005$& $0.0060 \pm 0.0006$ & $0.0016 \pm 0.0003$ \\
        \hline
        With MMH & $0.0019 \pm 0.0005$ & $0.0028 \pm 0.0005$ & $0.0009 \pm 0.0003$ & $0.0051 \pm 0.0007$ & $0.0013 \pm 0.0004$\\
        \hline
    \end{tabular}
    \caption{\textbf{Bright-state depumping experimental results cont.} Observed process infidelity of the bright-state depumping process on the qubits in H1-1's gate zones. Results are reported as percents with $1\sigma$ error bars estimated from a parametric bootstrap.}
    \label{app:h1-1-experiments:tab:bright-state-results-cont}
\end{table}

\begin{table}[h!]
    \centering
    \begin{tabular}{|c|*{5}{c|}}
        \hline
        & Qubit 10 & Qubit 11 & Qubit 12 & Qubit 13 & Qubit 14 \\
        \hline
        Without MMH & $0.0142 \pm 0.0008$ & $0.0148 \pm 0.0008$
 & $0.0515 \pm 0.0016$ & $0.1450 \pm 0.0044$ & $0.0544 \pm 0.0018$\\
        \hline
        With MMH & $0.0008 \pm 0.0002$ & $0.0008 \pm 0.0002$ & $0.0022 \pm 0.0003$ & $0.0034 \pm 0.0004$ & $0.0037 \pm 0.0004$ \\
        \hline
    \end{tabular}
    \caption{\textbf{Bright-state depumping experimental results cont.} Observed process infidelity of the bright-state depumping process on the qubits in H1-1's auxiliary zones. Results are reported as percents with $1\sigma$ error bars estimated from a parametric bootstrap.}
    \label{app:h1-1-experiments:tab:bright-state-results-cont-2}
\end{table}

\begin{table}[h!]
    \centering
    \begin{tabular}{|c|*{5}{c|}}
        \hline
        & Qubit 15 & Qubit 16 & Qubit 17 & Qubit 18 & Qubit 19 \\
        \hline
        Without MMH & $0.0498 \pm 0.0016$ & $0.0406 \pm 0.0012$ & $0.0400 \pm 0.0012$ & $0.0402 \pm 0.0015$ & $0.1144 \pm 0.0033$ \\
        \hline
        With MMH & $0.0023 \pm 0.0003$ & $0.0018 \pm 0.0003$ & $0.0020 \pm 0.0003$ &  $0.0034 \pm 0.0004$ & $0.0040 \pm 0.0004$\\
        \hline
    \end{tabular}
    \caption{\textbf{Bright-state depumping experimental results cont.} Observed process infidelity of the bright-state depumping process on the qubits in H1-1's auxiliary zones. Results are reported as percents with $1\sigma$ error bars estimated from a parametric bootstrap.}
    \label{app:h1-1-experiments:tab:bright-state-results-cont-3}
\end{table}

We ran a version of the bright-state depumping experiment introduced in Ref.~\cite{Gae21} in order to independently quantify the strength of MCM-induced crosstalk errors in the H1-1 system. The H1-1 system operates on qubits encoded in the atomic hyperfine states of $\textsuperscript{171}\mathrm{Yb}^{+}$, with the $\ket{1}$ state encoded in the ``bright'' state, named because an ion in the bright state will fluoresce photons when measured. A bright-state depumping experiment quantifies MCM-induced crosstalk errors by measuring the depumping rates (i.e., the rate of leakage from the $\ket{1}$ state to outside of the computational space) of unmeasured ions out of the bright state due to photon scattering off the measured ion. 

The bright-state depumping experiment introduced in Ref.~\cite{Gae21} was designed to measure crosstalk errors within each gate zone, so we ran a modified experiment to simultaneously quantify crosstalk errors in each gate and auxiliary zone. Our modified bright-state depumping experiment consisted of multiple rounds of the following: 
\begin{enumerate}
    \item H1-1's 20 ions were arranged as in Fig.~\ref{fig:qirb-combined}(g) and individually prepared in the bright state $\ket{1}$. 
    \item A sequence of multiple measurement pulses were applied to all of the even or odd numbered ions in the gate zones.
    \item Spin-flip transitions [$X(\pi)$ pulses] were applied to all of the unmeasured ions, ideally leaving each of the unmeasured ions in the dark state $\ket{0}$.
    \item The unmeasured ions were measured.
\end{enumerate}
Ideally, if no depumping occurred, we would measure each of the unmeasured ions in the dark state $\ket{0}$. However, if an unmeasured ion interacted with a scattered photon, then it would only be returned to the dark state $\ket{0}$ with probability approximately $1/3$. Under the assumption of unpolarized light, the decay of the bright-state population is 
\begin{equation}
    p_{\mathrm{depump}}(t) = \frac{2}{3}(1-e^{-3\gamma t}),
\end{equation}
where $\gamma$ is the scattering rate and $t$ is the duration of the first measurement pulse~\cite{Gae21}. We extracted $\gamma$ by fitting the decay in the bright-state population as a function of $t$. The depumping rate of the ions in the auxiliary zones is the average depumping rate from the alternating rounds of measuring even and odd-numbered ions, while the depumping rate of the gate zone ions is just the depumping rate obtained from measuring the other parity ions. For added interpretability, we converted the depumping rate of each ion into a process infidelity. Moreover, we constructed $1\sigma$ error bars from a parametric bootstrap. See Tables~\ref{app:h1-1-experiments:tab:bright-state-results}-\ref{app:h1-1-experiments:tab:bright-state-results-cont-3} for complete results. 

\section{\algiers\ calibration data}\label{app:ibmq-algiers-calibration-data}
\begin{table*}[h!]
\begin{tabular}{|l|l|l|l|l|l|l|l|l|}
\cline{1-9}
qubit & $T_1$ (us) & $T_2$ (us) & frequency (GHz)& anharmonicity  (GHz) & readout error & Pr(prep 1, measure 0)& Pr(prep 0, measure 1)& readout length  (ns) \\\cline{1-9}
\Q0 & 162.10 & 70.21 & 4.95 & -0.34 & 0.007 & 0.009 & 0.006 & 910.22 \\\cline{1-9}
\Q1 & 128.00 & 107.33 & 4.84 & -0.35 & 0.007 & 0.011 & 0.004 & 910.22 \\\cline{1-9}
\Q2 & 163.25 & 318.37 & 5.05 & -0.34 & 0.006 & 0.006 & 0.005 & 910.22 \\\cline{1-9}
\Q3 & 57.59 & 34.71 & 5.20 & -0.34 & 0.010 & 0.009 & 0.012 & 910.22 \\\cline{1-9}
\Q4 & 74.06 & 195.70 & 4.96 & -0.34 & 0.005 & 0.008 & 0.002 & 910.22 \\\cline{1-9}
\Q5 & 154.49 & 151.87 & 5.12 & -0.34 & 0.007 & 0.007 & 0.006 & 910.22 \\\cline{1-9}
\Q6 & 110.83 & 255.31 & 4.99 & -0.34 & 0.009 & 0.012 & 0.006 & 910.22 \\\cline{1-9}
\Q7 & 109.13 & 169.52 & 4.88 & -0.35 & 0.009 & 0.010 & 0.007 & 910.22 \\\cline{1-9}
\Q8 & 152.72 & 358.81 & 5.02 & -0.33 & 0.008 & 0.011 & 0.006 & 910.22 \\\cline{1-9}
\Q9 & 120.42 & 157.36 & 4.90 & -0.35 & 0.008 & 0.013 & 0.004 & 910.22 \\\cline{1-9}
\Q10 & 146.66 & 210.42 & 4.95 & -0.34 & 0.020 & 0.026 & 0.014 & 910.22 \\\cline{1-9}
\Q11 & 232.92 & 31.18 & 5.10 & -0.34 & 0.010 & 0.013 & 0.008 & 910.22 \\\cline{1-9}
\Q12 & 136.63 & 63.91 & 5.06 & -0.34 & 0.010 & 0.012 & 0.008 & 910.22 \\\cline{1-9}
\Q13 & 110.36 & 161.33 & 5.11 & -0.34 & 0.025 & 0.025 & 0.025 & 910.22 \\\cline{1-9}
\Q14 & 121.83 & 87.96 & 4.99 & -0.34 & 0.011 & 0.013 & 0.009 & 910.22 \\\cline{1-9}
\Q15 & 134.57 & 160.62 & 4.81 & -0.34 & 0.012 & 0.015 & 0.009 & 910.22 \\\cline{1-9}
\Q16 & 77.11 & 9.83 & 4.84 & -0.34 & 0.060 & 0.036 & 0.085 & 910.22 \\\cline{1-9}
\Q17 & 100.54 & 54.18 & 4.89 & -0.35 & 0.008 & 0.012 & 0.005 & 910.22 \\\cline{1-9}
\Q18 & 109.56 & 37.46 & 5.01 & -0.34 & 0.017 & 0.016 & 0.018 & 910.22 \\\cline{1-9}
\Q19 & 120.88 & 48.68 & 5.22 & -0.33 & 0.026 & 0.028 & 0.024 & 910.22 \\\cline{1-9}
\Q20 & 79.05 & 66.00 & 4.96 & -0.34 & 0.125 & 0.035 & 0.216 & 910.22 \\\cline{1-9}
\Q21 & 172.59 & 33.31 & 5.06 & -0.34 & 0.020 & 0.025 & 0.015 & 910.22 \\\cline{1-9}
\Q22 & 143.22 & 145.92 & 4.98 & -0.34 & 0.007 & 0.009 & 0.006 & 910.22 \\\cline{1-9}
\Q23 & 111.20 & 45.22 & 5.21 & -0.34 & 0.018 & 0.012 & 0.023 & 910.22 \\\cline{1-9}
\Q24 & 171.32 & 66.29 & 5.02 & -0.34 & 0.006 & 0.006 & 0.005 & 910.22 \\\cline{1-9}
\Q25 & 111.33 & 57.33 & 4.90 & -0.34 & 0.015 & 0.018 & 0.011 & 910.22 \\\cline{1-9}
\Q26 & 76.27 & 38.58 & 4.84 & -0.31 & 0.063 & 0.056 & 0.071 & 910.22 \\\cline{1-9}
\end{tabular}
\caption{\textbf{IBMQ Algiers calibration data.} Calibration data from \algiers\ from the time of our QIRB demonstrations. }
    \label{tab:ibmq_algiers_calibration}
\end{table*}
\end{document}